\documentclass[11pt]{amsart}
\usepackage[margin=1.2in]{geometry}
\geometry{letterpaper}                   
\usepackage{graphicx}
\usepackage{amssymb}
\usepackage{epstopdf}
\usepackage{subfigure}

\DeclareGraphicsRule{.tif}{png}{.png}{`convert #1 `dirname #1`/`basename #1 .tif`.png}

\usepackage[]{color-edits}
\addauthor{vs}{red}
\addauthor{gs}{blue}
\addauthor{ni}{green}

\newtheorem{theorem}{Theorem}
\newtheorem{lemma}[theorem]{Lemma}

\newtheorem{proposition}[theorem]{Proposition}

\newtheorem{definition}[theorem]{Definition}

\def\squareforqed{\hbox{\rule{2.5mm}{2.5mm}}}

\def\QED{\ifmmode\squareforqed 
  \else{\nobreak\hfil   
    \penalty50                 
    \hskip1em                  
    \null                      
    \nobreak                   
    \hfil                      
    \squareforqed              
    \parfillskip=0pt           
    \finalhyphendemerits=0     
    \endgraf}                  
  \fi}

\def\blksquare{\rule{2mm}{2mm}}
\def\qedsymbol{\blksquare}
\newcommand{\bg}[1]{\medskip\noindent{\bf #1}}
\newcommand{\ed}{{\hfill\qedsymbol}\medskip}

\newenvironment{proofof}[1]{{\it{Proof of #1 : }}}{\ed}

\newenvironment{example}{\bg{Example. }}{\ed}

\newcommand{\E}{\ensuremath{\textsc{E}}}
\newcommand{\R}{\ensuremath{\mathbb{R}}}

\newcommand{\stat}[1]{\ensuremath{S\left(#1\right)}}
\newcommand{\statn}[2]{\ensuremath{S_{#1}\left(#2\right)}}
\newcommand{\statnpr}[2]{\ensuremath{S_{#1}'\left(#2\right)}}
\newcommand{\cdf}[1][]{\ensuremath{\ifthenelse{\equal{#1}{}}{F}{F\left(#1\right)}}}
\newcommand{\pdf}[1][]{\ensuremath{\ifthenelse{\equal{#1}{}}{f}{f\left(#1\right)}}}
\newcommand{\cdfi}[1][]{\ensuremath{\ifthenelse{\equal{#1}{}}{F^{-1}}{F^{-1}\left(#1\right)}}}
\newcommand{\xq}[1]{\ensuremath{\hat{x}\left(#1\right)}}

\newcommand{\xqpr}[1]{\ensuremath{\hat{x}'\left(#1\right)}}
\newcommand{\val}[1][]{\ensuremath{v_{#1}}}
\newcommand{\quant}[1][]{\ensuremath{q_{#1}}}
\newcommand{\tb}[1]{\ensuremath{\theta_{#1}}}
\newcommand{\tbvec}{\ensuremath{{\boldsymbol\theta}}}
\newcommand{\tv}[1]{\ensuremath{a_{#1}}}

\newcommand{\tq}[1]{\ensuremath{\kappa_{#1}}}
\newcommand{\tqvec}{\ensuremath{{\boldsymbol \kappa}}}
\newcommand{\tqstar}{\ensuremath{\kappa^*}}

\newcommand{\acdf}[2]{\ensuremath{\ifthenelse{\equal{#2}{}}{F_{#1}}{F_{#1}\left(#2\right)}}}

\newcommand{\bvec}{\ensuremath{{\bf b}}}
\newcommand{\rank}{\ensuremath{r}}
\newcommand{\rvec}{\ensuremath{{\bf r}}}
\newcommand{\qvec}{\ensuremath{{\bf q}}}

\newcommand{\opt}{\ensuremath{\textsc{Opt}}}
\newcommand{\apx}{\ensuremath{\textsc{Apx}}}

\title{Social Status and Badge Design}

\author{Nicole Immorlica \and Greg Stoddard \and Vasilis Syrgkanis}

\begin{document}

\begin{abstract}
Many websites rely on user-generated content to provide value to consumers.  These websites typically incentivize participation by awarding users badges based on their contributions.  While these badges typically have no explicit value, they act as  symbols of social status within a community. In this paper, we consider the design of badge mechanisms for the objective of maximizing the total contributions made to a website. Users exert costly effort to make contributions and, in return, are awarded with badges. A badge is only valued to the extent that it signals social status and thus badge valuations are determined endogenously by the number of users who earn each badge. The goal of this paper is to study the design of optimal and approximately badge mechanisms under these status valuations. We characterize badge mechanisms by whether they use a coarse partitioning scheme, i.e. awarding the same badge to many users, or use a fine partitioning scheme, i.e. awarding a unique badge to most users. We find that the optimal mechanism uses both fine partitioning and coarse partitioning. When status valuations exhibit a decreasing marginal value property, we prove that coarse partitioning is a necessary feature of any approximately optimal mechanism. Conversely, when status valuations exhibit an increasing marginal value property, we prove that fine partitioning is necessary for approximate optimality. 
\end{abstract}

\maketitle

\section{Introduction}\label{sec:intro}
A number of popular websites are driven by user-generated content. Review sites such as Yelp and TripAdvisor need users to rate and review restaurants and hotels, social news aggregators like Reddit rely on users to submit and vote upon articles from around the web, and question and answer sites like Stack Overflow and Quora depend on their users to ask good questions and provide good answers. One threat to the success of such sites is the free-rider problem, that not enough users will the expend the effort to meaningfully contribute. Stack Overflow, a very successful user-driven Q\&A site for programming questions, addresses this free-riding problem with a system of \emph{reputation points} and \emph{badges}. A badge is a small symbol displayed on a user's profile and is awarded for accomplishing tasks like asking ten questions or providing the best answer to a given question. 


While badge systems have existed for a long time in the form of military medals, boy scout badges, customer status programs, etc, they have become particularly popular in web communities over the past few years. The Huffington Post recently implemented a badge system to reward actions from their commenters; these badges are now prominently displayed next to usernames in the comment sections. Amazon awards a badge for being in top 1000 reviewers, another for being in the top 100, etc. The Mozilla Foundation is leading an initiative called the Open Badge Project which hopes to set an open standard for awarding, collecting, and displaying badges across any platform throughout the web.
 Their ultimate goal is to provide a persistent collection of badges that can be displayed anywhere on the web as a proof of acquired skills and achievements.

With all this excitement and energy surrounding badge systems, one fundamental question emerges: why do people care about badges? On the surface level, a badge is just a small sticker or group of pixels on a profile, so why should it be that badges incentivize people to exert effort to earn them? This is the very reason that makes such systems attractive to platforms; they are getting ``something for nothing'' by awarding badges which cost them nothing, and in return they receive increased contributions from their users. Despite the lack of an explicit value, it is quite clear that people are motivated by badges \cite{anderson2013steering}. In this work, we take the view that badges (or any virtual reward) do not have fixed values but that their value is endogenously determined by how many others have earned the same badge (or a better one). The most valuable badges not only signal that a user has accomplished some impressive task but also signal that relatively few users have earned them. Top level badges, such as the ``gold'' badges on Stack Overflow or the ``Superuser'' badge on Huffington Post, are described as difficult to earn and only awarded to the most committed users. As more users earn a particular badge, that badge loses its ability to distinguish a member within the community and thus becomes less valuable. This work studies optimal badge design when badges are a means for establishing \emph{social status}. 

While it has long been accepted that people are motivated and affected by concerns of social status, the exact role and nature of status remains an issue of contention. In particular, there's a debate over whether status is an end in itself, or whether it serves as a useful proxy for some other objective. On the web, a variety of status concerns arise from the idiosyncrasies of different communities. On Wikipedia, there's evidence that high status editors enjoy a higher chance of having their suggestions adopted by administrators \cite{keegan2010egalitarians}. 
The Huffington Post gives increased attention to comments made by users who have earned the ``pundit'' badge by placing those comments at the top of each discussion. Top reviewers on Amazon receive large amount of free products from manufacturers. It could also be that users simply value having high status because it makes them feel good, or signals something about their abilities or contributions to others. Rather than study the provenance of these concerns across different platforms, we seek to understand how different properties of status concerns should influence the design of badge mechanisms.

The goal of this paper is to examine the incentives created by social status concerns and to study the design of optimal social-psychological incentive schemes. In practice, there is a wide variety of such mechanisms such as awarding digital badges for accomplishments, assigning titles to users, maintaining a leaderboard where users are ranked by accomplishments, etc. The key feature of these schemes is that they induce a partitioning of users into an ordered set of status classes; users that earn the no reward are in the lowest status class while users that earn the highest reward are in the top status class. We abstract away the particular details of rewards and focus on the mechanism's role in assigning users into these status classes. Throughout this paper, we generically refer to a \emph{badge} as the assigned status class of a user. 

The high level question we study is how should a designer award badges to maximize the amount of content contributed by the users of a website. We answer this question by using a game-theoretic model where users strategically choose the amount they contribute to maximize their utility, which is defined as the status value they receive minus their cost of producing that content.Our goal is to give a broad characterization of the design of badge mechanisms with social status concerns. Our work addresses these the three main questions:


\begin{enumerate}

\item What is the badge mechanism that maximizes the expected amount of content contributed by users? 

\item What can be done with simple badge mechanisms? In some settings, the optimal mechanism may not be possible to implement. Perhaps the optimal mechanism is more complex than desired or imposes too much of an information requirement on the part of the designer. Motivated by these constraints, we study the performance of two variants of commonly-used badge mechanisms. 

\item How does the nature of status valuations affect the design of good mechanisms? Rather than assume a specific form of status valuations, we allow for users to have some general status value function and examine how the properties of this function influence the (approximate) optimality of the badge mechanisms we study. 

\end{enumerate}

\subsection{Our Contributions}

We model and analyze a game of incomplete information where users simultaneously make costly contributions to a site. Each user has some privately-known ability that determines their cost of contributing to the site. Based on their contributions, each users is assigned a single \emph{badge} out of an ordered set of badges. Users only derive value from these badges because of the \emph{status} they confer, where a user's status is defined by the fraction of users that have earned an equal or better badge. A user's value for this status is given by some function $\stat{\cdot}$. 

We study the design of optimal and approximately mechanisms in this setting. We define a badge mechanism as a function that awards badges to users based on their contribution and the contribution of all other users. The objective of the designer is to maximize the expected sum of contributions received from all users. We note that we can extend this framework to incorporate objectives such as as maximizing contributions that are at least of a certain quality (to mitigate agents submitting large amounts of low quality content) by redefining what constitutes an acceptable contribution. We analyze this optimization problem by using the well-established connection between crowdsourcing contests and all-pay auctions and tools from optimal mechanism design theory. We show that the status value functions ends up playing the role of an allocation function in a standard mechanism design setting. However, relative to a standard mechanism design setting, there are significant constraints on the design problem. The endogenous nature of status valuations introduces negative externalities between users; as more users earn a particular badge, the status value of that badge decreases. These negative externalities constrain the set of feasible status allocations implementable in equilibrium. 


We first prove that the optimal mechanism is a leaderboard with a cutoff. This mechanism assigns the lowest badge to any user whose contribution falls below a certain threshold, and assigns unique badges to the remaining users in decreasing order of their contribution. The optimality of this mechanism is not sensitive to the status value function $\stat{\cdot}$. On the one hand, this is a nice property for the optimal mechanism to have. On the other hand, studying the optimal mechanism does not elucidate the role of the status function. Indeed, we show that this insensitivity to the status value function does not extend to approximately optimal badge mechanisms. We examine the effects of the convexity of the status value function through two commonly-used types of badge mechanisms. We find that the ``shape'' of status valuations, i.e. whether $\stat{\cdot}$ is concave or convex with respect to status, plays a large role in the design of approximately optimal mechanisms.  

The first of these type of mechanisms is an absolute threshold mechanism, which is a mechanism defined by a vector of contribution thresholds such that the badge each user earns is determined by the maximum threshold that his contribution exceeds. We prove that an absolute threshold mechanism with a single threshold achieves a 4-approximation to the optimal mechanism for concave and linear status functions. 
For strictly convex status valuations, no absolute threshold mechanism can achieve a finite approximation with any constant number of badges. This indicates that any approximately optimal absolute threshold mechanism must increase the number of badges it uses as the user population increases but we show that the necessary number of badges to achieve a constant approximation is only logarithmic in a natural parameter of the status value function. 

We finally study the leaderboard mechanism with no cut-off, i.e. the mechanism that assigns unique badges to all users in decreasing order of their contribution. We prove the leaderboard mechanism is a 2-approximation to the optimal mechanism for any convex value function. However, this mechanism cannot achieve any finite approximation for the class of concave valuations. Contrasting this result with the optimal mechanism demonstrates the necessity of having some threshold below which any user earns the lowest badge. 

When viewed all together, these results give an important intuition regarding the partitioning of users. The function of a mechanism is to assign badges to users which partitions users into different sets of ordered status classes. To use the language of \cite{moldovanu2007contests}, a fine partitioning of users is one where users (or most of them) are assigned a unique badge, while a coarse partitioning of users is one where many users are assigned to the same badge. The optimal mechanism, the leaderboard with a cutoff, uses both types of partitioning; all users above the contribution threshold are finely partitioned while all users below the threshold are coarsely partitioned.

For the class of convex valuations, fine partitioning is necessary to achieve any reasonable approximation. With convex valuations, the marginal value of increasing a user's status increases as his status improves, so users are naturally motivated to climb as high as possible. Any approximately optimal mechanism must have enough badges to allow users to distinguish themselves. Both the optimal mechanism and the the leaderboard mechanism have this property but the absolute threshold mechanism with a single threshold does not. The partitioning in this latter mechanism is too coarse to achieve any good approximation   

For the class of concave valuations, coarse partitioning among the lower-ability users is necessary for approximate optimality. With concave valuations, the marginal status value diminishes as users improve their status and hence the natural motivations for users is to avoid being at the bottom of the population but not to care too much for rising further. In extreme instances, most of a user's marginal status value is gained from moving up from the lowest status. In these settings, a mechanism must coarsely partition all low-ability users into a single badge to provide sufficient incentive for high-ability users to earn the next highest badge. The leaderboard with a cut-off and the single absolute threshold mechanism both implement this coarse partitioning of low ability users but the leaderboard mechanism partitions too finely to achieve any good approximation.

In our last section, we extend our analysis to a more general definition of status. In prior sections, we assume that users strictly prefer having a higher badge than having the same badge as another user. We loosen this assumption about how users value ``ties'' and show that the optimal mechanism can be quite complex in this general setting. However, we show that both of approximately optimal mechanisms that we study maintain their simple structure and performance guarantees in this more complex setting.

\section{Related Work}\label{sec:related}


There's a growing literature on the role and design of incentives systems \cite{dubey2010grading,ghosh2013,GM12,dipalantino2009crowdsourcing,easley2013incentives,hopkins2004running,bachrach2013,moldovanu2007contests}.  These papers consider how to award badges, virtual points, a monetary prize, or viewer attention on a website in order to maximize either the total quantity of contributions or the largest contribution. The general sense of the literature suggests that when agents have exogenous values for their rewards, coarse reward systems do well at maximizing these objectives. The optimal choice depends heavily on the modeling assumptions. Most of these papers consider varying informational assumptions about the environment and derive that coarse mechanisms (often extremely coarse mechanisms with just two reward types, winners and losers) are either optimal or approximately optimal.\footnote{Interestingly, \cite{chawla2012optimal} also show that their reward system doesn't ``overproduce'' in the sense that the largest contribution is at least half the size of the total quantity of contributions.}  These informational assumptions include: abilities of users are private~\cite{archak2009optimal,chawla2012optimal} or publicly known~\cite{dubey2010grading}, noisy observations as to the size or quality of the contribution~\cite{ghosh2013, easley2013incentives}.  However, most of these papers have an exogenous well-defined user utility for the associated rewards.\footnote{Two notable exceptions are \cite{moldovanu2007contests,dubey2010grading} who consider a general convex value for the reward.} Our work focuses on relaxing assumptions about the exact functional form of the user utility and instead investigates the impact of the shape of utility on reward systems.  Consistent with prior work, we find that coarse reward systems can be optimal for a wide class of valuations but in contrast to prior work, they can do arbitrarily poorly when valuations are endogenously determined by the scarcity of a reward. 

The paper most closely related to ours is that of \cite{moldovanu2007contests}. We consider the same model of a game where users are motivated by social status. The key difference is that they assume a specific status function. They prove that a fine partitioning of agents is optimal, and show that a coarse partitioning of agents is a 2-approximation (under certain distributional assumptions). We consider general status value functions (theirs is a special case of the class we consider in section \ref{sec:tie-breaking}) and in doing so, develop a more general theory of optimal mechanisms for status contests. Notably, we prove that the optimal mechanism in their setting can be an arbitrarily bad for a large class of status valuations. On a technical level, \cite{hopkins2004running} use the mathematical connection between allocation in a first-price auction and a consumer's expected social status to analyze a game of consumer choice. Although the goal in our paper is different, we extend this connection to a wider class of social status functions and allocations resulting from a wider class of status mechanisms.

\section{Badge Mechanisms and Social Status}\label{sec:model}

We now introduce our game theoretic model of contributions to a user-driven site. Users contribute content to the site but these contributions cost them effort. There is a population of $n$ {\em users} and an ordered set of $m+1$ {\em badges}, where badges are ordered such that $m \succ m-1 ... \succ 1 \succ 0$.  Each user simultaneously makes a contribution $b_i\in\R^+$ to a {\em badge mechanism}.  The badge mechanism maps this profile of contributions to an assignment of badges for each user. Formally, a badge mechanism is a function $\rank:\R^+\times(\R^+)^{n-1} \rightarrow \{1,...,m\}$ where we say user $i$ earns badge $r(b_i,b_{-i})$.  

A user's utility is a function of his {\em status}, which determines his value, and his {\em ability}, which determines his cost.  The {\em status} of a user $i$ is defined as the fraction of users who have earned an equal or better badge. We denote this fraction by $t_i=\frac{|\{j\neq i:~\rank(b_j,b_{-j})\succeq \rank(b_i,b_{-i})\}|}{n-1}$. A user's \emph{status value} is given by a function $\stat{\cdot}:[0,1]\rightarrow \R^+$ of $t_i$. We assume that $S(1) = 0$, i.e.\ users in the lowest status class derive a status value of 0.  The {\em ability} $\val[i]$ of user $i$ is private information and drawn independently and identically from a common distribution F with support over $[0, \bar{\val}]$ and density f.  We assume $F$ is atomless and regular.\footnote{This is a weaker assumption than the {\em monotone hazard rate} condition, assumed in Moldovanu et al. \cite{moldovanu2007contests}; see Section~\ref{sec:preliminaries} for a formal definition.}
 If a user with ability $\val[i]$ contributes $b_i$, then he incurs a cost of $\frac{b_i}{\val[i]}$. A user's utility for contributing $b_i$ is quasi-linear in his status value and his cost of contribution: $$ S(t_i) - \frac{b_i}{\val[i]}.$$

The goal of the designer is to maximize the sum of the contributions of all users $\sum_i b_i$.  We note that we can extend this framework to incorporate objectives such as as maximizing contributions that are at least of a certain quality (to mitigate agents submitting large amounts of low quality content) by redefining what a contribute constitutes. The important assumption about the objective is that the designer wants to maximize the total sum of contributions. \footnote{Alternative objectives could be maximizing the quality of the single best solution such as in \cite{chawla2012optimal} and \cite{archak2009optimal}.}

\textbf{Badge Mechanisms.} 
While our definition of a badge mechanism allows for a number of ways to award social status as a function of contributions, we study a few canonical mechanisms in this paper. 
\begin{definition}[Absolute Threshold Mechanism]\label{def:absolute}
An absolute threshold mechanism is defined by a set of $m$ thresholds $\tbvec=(\tb{1},\ldots,\tb{m})$, with $\tb{1}\leq \ldots \leq \tb{m}$, such that user $i$ is awarded badge $j\in\{0,\ldots,m\}$ if $b_i \in [\tb{j}, \tb{j+1})$. By convention, $\theta_0=0$ and $\theta_{m+1}=\infty$.
\end{definition}
\begin{definition}[Leaderboard Mechanism]\label{def:ranking}
The leaderboard mechanism assigns each user a distinct badge among a set of $m=n$ badges in decreasing order of their contributions. If user $i$ contributes the $j^{th}$ highest amount (where $m$ is highest amount and 1 is the lowest), he is assigned badge $j$. In the event that two users submit equal levels of contributions, the tie is broken randomly. 
\end{definition}
The key difference between the above two mechanisms is the badge that user $i$ receives in an absolute threshold mechanism depends only his own contribution $b_i$ and not on the contributions of the remaining players. By contrast, badge that a player earns in a leaderboard mechanism depends only the position of his contribution within the ordered list of all contributions, but not on the amount of his contribution. The next mechanism we define is an example of a hybrid of these two mechanisms. 
 \begin{definition}[Leaderboard with a Cut-Off]\label{def:leaderBoard-Cutoff}
The leaderboard with a cut-off mechanism is defined by a single threshold $\theta$ such that any user who submits $b_i < \theta$ is assigned badge 0. The remaining users are assigned badges in decreasing order of contributions, as in the leaderboard mechanism. 
\end{definition}
We note that this is not an exhaustive list of mechanisms, nor are we the first to construct such definitions. Another common mechanism is a relative threshold badge, studied in \cite{moldovanu2007contests}, where users are assigned based on their coarse position within the ordered list of contributions. For example, a user will receive the top badge if he is in the top half of contributors and the lowest badge otherwise. We focus on the three mechanisms defined above because they give a good sketch of the properties of (approximately) optimal mechanisms across a range of environments.  


\textbf{Status Value.}
Our goal in this paper is to characterize the optimality of various badge mechanisms for different regimes of status valuations. We divide status valuation functions into the classes of linear functions, concave functions, and convex functions. 
Each regime has a natural interpretation; for concave status, a user's marginal gain in status value decreases as he increases his standing in society. For convex, the marginal gain in status value increases as a user increases his standing, and in linear, the marginal gain in status value is constant.

\section{Preliminaries and Connection to Optimal Auction Design}\label{sec:preliminaries}

In this section, we briefly review a few classic results from optimal mechanism design which we make extensive use of in our paper.

{\bf Optimal Mechanism Design.}  In the standard auction design problem, there are $n$ agents competing for a single unit of a divisible good.  The goal is to design an auction that maximizes {\em revenue} or total payments.  Each agent $i$ has a {\em value} $\val[i]$ per unit of the item drawn IID from an atomless distribution $\cdf$ with support $[0,\bar{\val}]$ and density $\pdf[\cdot]$.  Under this assumption, there is a one-to-one correspondence between an agent's value $\val[i]$ and his {\em quantile} $\quant[i]$:
$$
\quant(\val[i])=1-\cdf(\val[i]) \in [0,1]\mathrm{\ and\ }\val(\quant[i])=F^{-1}(1-\quant[i]).
$$ 
Intuitively, the quantile denotes the probability that a random sample from $\cdf$ has higher ability than agent $i$: thus lower quantile corresponds to higher value.  Note the distribution of quantiles is uniformly distributed in $[0,1]$.  As it is often more convenient to work in quantile space, we will henceforth state results using this terminology.

A direct revelation auction solicits {\em bids} $\bvec=(b_1,\ldots,b_n)$ from the agents and computes an allocation $\{x_i(\bvec)\}$ and set of payments $\{p_i(\bvec)\}$.  The resulting utility for agent $i$ with quantile $\quant[i]$ is then 
\begin{equation}
\label{eqn:standardutil}
u_i(\bvec;q_i)=\val(\quant[i])\cdot x_i(\bvec)-p_i(\bvec).
\end{equation}
Optimal auction design asks what allocation and payment rules maximize the total expected payments of the agents.  Answering this question requires a prediction of agent behavior: i.e., given a quantile and knowledge of the setting (distributions of other agents, auction rules, etc.), how will agents map quantiles to bids?  It is standard to predict that these maps, or {\em bidding strategies}, will comprise a {\em Bayes-Nash equilibrium} (BNE) of the underlying incomplete information game.  A profile of bidding strategies $b_i(\cdot)$ mapping quantiles to bids is a BNE if each agent maximizes their utility in expectation over the quantiles of the rest of the agents: for all $\quant[i]$, $b_i'$:
$$
\E_{\qvec_{-i}}\left[ u_i\left(\bvec(\qvec);\quant[i]\right)\right]\geq \E_{\qvec_{-i}}\left[u_i(b_i',\bvec_{-i}(\qvec_{-i});\quant[i])\right].
$$
This constraint gives rise to the standard characterization of BNE due to Myerson \cite{Myerson81} and Bulow and Roberts \cite{bulow1989simple}.  Letting $\tilde{x}_i(\qvec)=x_i(\bvec(\qvec))$, $\hat{x}_i(\quant[i])=\E_{\qvec}[\tilde{x}_i(\qvec);\quant[i]]$ and $\hat{p}_i(\quant[i])=\E_\qvec[p_i(\bvec(\qvec));\quant[i]]$ denote the {\em ex-post} allocation, {\em interim} allocation and payment rules respectively, the lemma  states:

\begin{lemma}[\cite{Myerson81}, \cite{bulow1989simple}]\label{lem:myerson-bne} A profile of bidding functions $\bvec(\cdot)$ and an implied profile of interim allocation and payment rules ${\bf \hat{x}}(\cdot)$ and ${\bf \hat{p}}(\cdot)$  are a BNE only if:
\begin{itemize}
\item $\hat{x}_i(\quant[i])$ is monotone non-increasing in $\quant[i]$ and
\item $\hat{p}_i(\quant[i]) = \val(\quant[i])\cdot \hat{x}_i(\quant[i]) + \int_{\quant[i]}^{1}\hat{x}_i(z)\cdot \val'(z)\cdot dz$
\end{itemize}
These two conditions are sufficient for $\bvec(\cdot)$ to be a BNE if each bid function $b_i(\cdot)$ spans the whole region of feasible bids. Otherwise,  they only imply that each player doesn't wants to deviate to the set of bids spanned by $b_i(\cdot)$.
\end{lemma}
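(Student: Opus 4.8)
The plan is to run the classical Myerson--Bulow-Roberts incentive-compatibility argument, transported into quantile space. Since a BNE predicts that each agent with quantile $\quant[i]$ plays $b_i(\quant[i])$, a unilateral deviation in which agent $i$ bids $b_i(z)$ instead is equivalent to ``reporting'' quantile $z$, and yields interim utility $U_i(z;\quant[i]) = \val(\quant[i])\,\hat{x}_i(z) - \hat{p}_i(z)$. The equilibrium condition is precisely that $z=\quant[i]$ maximizes $U_i(z;\quant[i])$ over all $z$ whose bid $b_i(z)$ is feasible; writing $\pi_i(\quant[i]) = U_i(\quant[i];\quant[i])$ for the equilibrium interim utility, I would derive both stated conditions from this single optimality requirement. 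The structural fact I exploit throughout is that $\val(\cdot)$ is strictly decreasing on $[0,1]$ (higher quantile means lower ability), so $\val'(z)\le 0$.

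First I would establish necessity of monotonicity. Writing the equilibrium inequality for a pair of quantiles $\quant[i]$ and $z$ in both directions, $\pi_i(\quant[i]) \ge \val(\quant[i])\hat{x}_i(z) - \hat{p}_i(z)$ and $\pi_i(z) \ge \val(z)\hat{x}_i(\quant[i]) - \hat{p}_i(\quant[i])$, and adding them, the payment terms cancel and I obtain $(\val(\quant[i]) - \val(z))(\hat{x}_i(\quant[i]) - \hat{x}_i(z)) \ge 0$. Since $\val$ is decreasing, this forces $\hat{x}_i$ to be non-increasing in the quantile. For the payment formula, the same pair of inequalities sandwiches the utility difference as $(\val(\quant[i])-\val(z))\hat{x}_i(z) \le \pi_i(\quant[i]) - \pi_i(z) \le (\val(\quant[i])-\val(z))\hat{x}_i(\quant[i])$; letting $z\to \quant[i]$ and using that $\hat{x}_i$ is bounded and monotone (hence $\pi_i$ is absolutely continuous) gives $\pi_i'(\quant[i]) = \val'(\quant[i])\hat{x}_i(\quant[i])$ almost everywhere. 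Integrating from $\quant[i]$ to $1$ and using the boundary condition $\pi_i(1)=0$ --- which holds because the lowest type has $\val(1)=F^{-1}(0)=0$, derives no value, and hence makes no payment in equilibrium --- yields $\pi_i(\quant[i]) = -\int_{\quant[i]}^1 \hat{x}_i(z)\val'(z)\,dz$, and substituting back into $\hat{p}_i = \val(\quant[i])\,\hat{x}_i(\quant[i]) - \pi_i(\quant[i])$ produces exactly the claimed payment identity.

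For sufficiency I would reverse the computation. Assuming $\hat{x}_i$ is non-increasing and $\hat{p}_i$ is given by the stated formula, substituting the formula into $U_i(\quant[i];\quant[i]) - U_i(z;\quant[i])$ collapses the difference to $\int_z^{\quant[i]} \val'(s)\bigl(\hat{x}_i(s) - \hat{x}_i(z)\bigr)\,ds$. Monotonicity of $\hat{x}_i$ together with $\val'\le 0$ makes this oriented integral nonnegative in both cases $z<\quant[i]$ and $z>\quant[i]$, so the difference is nonnegative and no deviation to a mimicked quantile is profitable. This is where the spanning hypothesis enters: the argument only rules out deviations to bids of the form $b_i(z)$, so if $b_i(\cdot)$ ranges over all feasible bids then every feasible deviation is covered and the conditions are genuinely sufficient, whereas otherwise we can conclude only that no agent wishes to deviate to a bid in the image of $b_i(\cdot)$ --- precisely the weaker guarantee stated in the lemma.

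The step I expect to be most delicate is the derivation of the payment formula rather than the monotonicity: one must justify passing from the finite-difference sandwich to the almost-everywhere derivative and then integrating it back, which requires the absolute continuity of $\pi_i$ (obtained from boundedness and monotonicity of $\hat{x}_i$), and one must pin down the boundary normalization $\pi_i(1)=0$ carefully. By contrast, the monotonicity direction and the sufficiency computation are essentially algebraic once the sign of $\val'$ is in hand.
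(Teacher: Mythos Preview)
Your argument is the standard Myerson--Bulow--Roberts derivation and is correct. Note, however, that the paper does not supply its own proof of this lemma: it is stated in the preliminaries as a classical result and simply cited to \cite{Myerson81} and \cite{bulow1989simple}, so there is no paper proof to compare against.
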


Simple manipulations of these identities allow one to characterize the revenue of a mechanism concisely in terms of the quantiles of the agents. Let $R(\quant)=\quant\cdot \val(\quant)$ denote the \emph{revenue function}\footnote{Intuitively, the revenue function computes the revenue a seller can generate by selling the (entire) item with probability $q$.} of the value distribution $\cdf[]$, and $R'(\quant[i])$ the \emph{virtual value} of a player.  Then,

\begin{lemma}[\cite{Myerson81}, \cite{bulow1989simple}]\label{lem:myerson-opt}
The expected total payment of a mechanism is equal to the expected {\em virtual surplus}:
$$
\E_{\qvec}\left[\sum_i p_i(\quant[i])\right] = \E_{\qvec}\left[\sum_i R'(\quant[i])\cdot \tilde{x}_i(\qvec)\right].
$$
while the expected payment of each player is his expected virtual surplus allocation: $\E_{\quant[i]}\left[p_i(\quant[i])\right]=\E_{\quant[i]}\left[R'(\quant[i])\cdot \hat{x}_i(\quant[i])\right]$.
\end{lemma}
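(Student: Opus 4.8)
The plan is to derive both identities directly from the payment characterization in Lemma~\ref{lem:myerson-bne}, integrating over quantiles and reorganizing a double integral via Fubini's theorem. Since the per-agent identity implies the aggregate one after summing over agents, I would establish the per-agent statement first and then lift it.

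First I would recall that each quantile $\quant[i]$ is uniformly distributed on $[0,1]$, so $\E_{\quant[i]}[\hat{p}_i(\quant[i])] = \int_0^1 \hat{p}_i(\quant[i])\,d\quant[i]$. Substituting the payment identity from Lemma~\ref{lem:myerson-bne} splits this into two pieces:
\begin{equation}
\int_0^1 \hat{p}_i(\quant[i])\,d\quant[i] = \int_0^1 \val(\quant[i])\,\hat{x}_i(\quant[i])\,d\quant[i] + \int_0^1\!\!\int_{\quant[i]}^1 \hat{x}_i(z)\,\val'(z)\,dz\,d\quant[i].
\end{equation}
The first term is already in the desired form. For the second, the key step is to swap the order of integration: the region $\{(\quant[i],z): 0\le \quant[i]\le 1,\ \quant[i]\le z\le 1\}$ is equally described by $\{0\le z\le 1,\ 0\le \quant[i]\le z\}$, so integrating out the now-inner variable $\quant[i]$ contributes a factor of $z$, yielding $\int_0^1 z\,\hat{x}_i(z)\,\val'(z)\,dz$.

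Combining the two pieces and recalling $R(\quant)=\quant\cdot\val(\quant)$, so that $R'(\quant)=\val(\quant)+\quant\cdot\val'(\quant)$ by the product rule, the integrand collapses to $R'(z)\,\hat{x}_i(z)$, giving the per-agent identity $\E_{\quant[i]}[\hat{p}_i(\quant[i])]=\E_{\quant[i]}[R'(\quant[i])\,\hat{x}_i(\quant[i])]$. To obtain the aggregate version I would sum over $i$ and unfold the interim allocation $\hat{x}_i(\quant[i])=\E_{\qvec}[\tilde{x}_i(\qvec);\quant[i]]$; because $R'(\quant[i])$ depends only on $\quant[i]$, the tower property lets me replace $R'(\quant[i])\,\hat{x}_i(\quant[i])$ by $\E_{\qvec}[R'(\quant[i])\,\tilde{x}_i(\qvec)]$, and summing recovers $\E_{\qvec}[\sum_i R'(\quant[i])\,\tilde{x}_i(\qvec)]$.

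The only nonroutine step is the Fubini/Tonelli exchange, so the main thing to verify is that it is justified. Here I would exploit sign-definiteness rather than an integrability estimate: since lower quantiles correspond to higher values, $\val$ is nonincreasing and hence $\val'(z)\le 0$, while $\hat{x}_i(z)\ge 0$, so the integrand is of one sign and Tonelli's theorem applies unconditionally. Everything else is routine bookkeeping, with the monotonicity of $\hat{x}_i$ from Lemma~\ref{lem:myerson-bne} and regularity of the value distribution ensuring all the integrals are well defined.
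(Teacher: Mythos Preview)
Your derivation is correct and is precisely the standard argument. Note, however, that the paper does not actually supply its own proof of this lemma: it is stated as a cited result from \cite{Myerson81} and \cite{bulow1989simple}, with only the remark that ``simple manipulations of these identities'' (meaning the payment identity of Lemma~\ref{lem:myerson-bne}) yield the characterization. What you have written is exactly those manipulations---integrate the payment formula over the uniform quantile, exchange the order of integration, and recognize $\val(\quant)+\quant\,\val'(\quant)=R'(\quant)$---so your proof is the one the paper gestures at but omits.
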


A consequence of Lemma~\ref{lem:myerson-bne}, known as {\em revenue equivalence}, is that if two mechanisms have the same interim allocation rule in BNE, they will also generate the same revenue.  A consequence of this observation and Lemma~\ref{lem:myerson-opt} is that the optimal mechanism is simply the mechanism that maximizes the expected virtual surplus. This reduces mechanism design to a constrained optimization problem.

A distribution $\cdf[]$ is \emph{regular} if the revenue function $R(\quant)=\quant \cdot \val(\quant)$, is a concave function, or equivalently, the virtual value of a player $R'(\quant)$ is non-increasing in his quantile.  Our assumption that $\cdf[]$ has support $[0,\bar{v}]$ also implies that $R(0)=R(1)=0$. Moreover, since virtual value $R'(\quant)$ is non-increasing in quantile, this implies that the virtual value is positive up until some quantile $\tqstar$, and negative afterwards. Quantile $\tqstar$, where $R'(\tqstar)=0$, is defined as the \emph{monopoly quantile} of the value distribution. Lemma~\ref{lem:myerson-opt} implies that the optimal mechanism only allocates to agents with quantiles below $\tqstar$.

\section{Optimal Badge Mechanism}\label{sec:optimal}

We will use the framework of optimal auction design to design the optimal badge mechanism.  To this end, it is useful to work in quantile space and (wlog) scale the utility functions of the users by their abilities, which are just constants from the perspective of each user, when deciding his contribution level.  After doing so, the utility function of a user becomes:
\begin{equation}
u_i(\bvec;\quant[i]) = \val(\quant[i]) \cdot \stat{t_i} - b_i.
\end{equation}
Comparing with Equation \eqref{eqn:standardutil}, we see that the (virtual) ability of a user equates with the (virtual) value of an agent in the auction literature, the status allocation of a user equates with the item allocation and the contribution equates with the payment. The badge mechanism that maximizes contributions is thus functionally equivalent to the optimal (revenue-maximizing) auction.  Note our setting exhibits two key restrictions not present in standard optimal auction design:
\begin{enumerate}
\item The set of feasible allocations (status values) is highly constrained and ill-behaved due to the externalities that a user's status imposes on others.  In particular, the total amount of allocation, i.e., $\sum_i\stat{t_i}$ is not constant as in the standard auction setting discussed in Section~\ref{sec:preliminaries}.
\item The payments (contributions) are not determined by the mechanism, but rather by the users themselves.  In this sense, a badge mechanism is a special type of auction, known in the literature as an all-pay auction.
\end{enumerate}

The optimal auction framework states that the optimal mechanism chooses an ex-post allocation rule which maximizes expected virtual surplus and then computes payments which support the implied interim allocations in equilibrium (see Section~\ref{sec:preliminaries}).  Following this reasoning, we first compute the ex-post virtual surplus-maximizing badge allocation.  We then argue that this allocation is implemented in a BNE by a {\em leaderboard with a contribution cutoff}.  Finally, we show that the derived BNE is unique.

\subsection{Virtual Surplus-Maximizing Badge Allocation}

Maximizing virtual surplus for some instantiation of a quantile profile $\qvec$ is simply an optimization problem, subject to the
constraints that are implicit in the way that users derive status. The optimization problem asks: given a vector of virtual abilities $R'(\quant[1]),\ldots,R'(\quant[n])$, assign badges $\rvec=(\rank_1,\ldots,\rank_n)$ to the users so as to maximize: $\sum_i R'(\quant[i])\cdot \stat{t_i(\rvec)}$. The following lemma states that the solution assigns to users distinct badges in decreasing order of their quantile so long as their quantile is below the {\em monopoly quantile} $\tqstar{}$ (equivalently, so long as their virtual ability is non-negative), and assigns all other users a badge of $0$. The formal proof is deferred to the appendix.

\begin{lemma}[Virtual Surplus Maximizing Badge Allocation]\label{lem:optimal-surplus} 
Let $\quant[1]\leq \ldots \leq \quant[k]\leq \tqstar< \quant[k+1]\leq \ldots \leq \quant[n]$ be a profile of quantiles. Then the optimal virtual surplus is achieved by assigning a distinct decreasing badge to all users $\{1,\ldots,k\}$ with non-negative virtual ability and badge $0$, to all negative virtual ability users $\{k+1,\ldots,n\}$, i.e. $r_1=n, r_2=n-1,\ldots, r_k=n-k+1$, and $r_{k+1}=\ldots=r_n=0$. 
\end{lemma}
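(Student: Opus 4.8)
The plan is to treat the badge assignment as an optimization over ordered partitions of the users into badge classes, and then to reach the claimed configuration through two elementary, boundary-preserving local moves whose effect on the objective can be computed exactly.

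First I would record the one structural fact that makes the objective tractable: the status $t_i$, and hence the value $\stat{t_i}$, of a user depends only on the size of the up-set of his badge class and not on the badge labels themselves. Concretely, if an assignment $\rvec$ induces ordered classes $G_1 \succ \dots \succ G_L$ (sets of users sharing a badge) with cumulative sizes $N_\ell = |G_1| + \dots + |G_\ell|$, then every user in $G_\ell$ has $t_i = \frac{N_\ell - 1}{n-1}$ and thus the common value $\stat{\frac{N_\ell-1}{n-1}}$. Writing $w_i := R'(\quant[i])$, regularity together with $\quant[1]\le\dots\le\quant[n]$ gives $w_1 \ge \dots \ge w_n$, with $w_i \ge 0$ precisely for $i \le k$ (virtual value is non-negative up to $\tqstar$), and the objective becomes $\sum_\ell \stat{\frac{N_\ell-1}{n-1}}\sum_{i\in G_\ell} w_i$. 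Two consequences I would flag immediately: since $S$ is non-increasing in $t$ (smaller $t$ means higher status, worth more) and the $N_\ell$ increase, the class values are ordered $s_1 \ge \dots \ge s_L \ge 0$; and the bottom class always has $N_L = n$, so its value is $\stat{1}=0$ and it contributes nothing regardless of who is placed there.

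Next I would reduce to an \emph{interval} assignment, in which the classes are consecutive blocks of the sorted list $w_1,\dots,w_n$, by a rearrangement argument. Holding the class sizes fixed, if a higher-status class held a smaller $w$ than a lower-status class, swapping the two users changes the objective by $(s_\ell - s_{\ell'})(w_j - w_i) \ge 0$; iterating, we may assume the top class holds the largest $w$'s, the next class the following ones, and so on. This is the step I expect to require the most care to state cleanly, because the values $s_\ell$ are themselves endogenous — the resolution is that within this sub-step the class \emph{sizes}, and hence all the $s_\ell$, are frozen, so a genuine rearrangement inequality applies.

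Finally, from any interval assignment I would apply two moves that each touch the status of only the users they move — precisely because splitting or merging at a single class boundary leaves the cumulative counts $N_{\ell'}$ of every other boundary intact, decoupling the otherwise-coupled status externalities. (i) \emph{Peeling positives:} if a class $\{a+1,\dots,b\}$ has $b\ge a+2$, splitting off its top user into a new higher class changes the objective by $w_{a+1}\big(\stat{\frac{a}{n-1}} - \stat{\frac{b-1}{n-1}}\big)$, which is $\ge 0$ whenever $w_{a+1}\ge 0$; repeating makes users $1,\dots,k$ singletons. (ii) \emph{Merging negatives:} merging the class just above the bottom class, say $\{a+1,\dots,b\}$, into the bottom class changes the objective by $-\stat{\frac{b-1}{n-1}}\sum_{i=a+1}^{b} w_i$, which is $\ge 0$ whenever that block's virtual sum is $\le 0$ — true for any block drawn from the negative users $k+1,\dots,n$; repeating from the bottom up collects all of them into a single badge-$0$ class. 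Both moves weakly increase the objective and terminate exactly at $r_1 = n,\dots,r_k = n-k+1$, $r_{k+1}=\dots=r_n=0$, which proves optimality. I expect the main obstacle to be the justification that these boundary-local moves suffice to reach the global optimum, together with the harmless boundary case $w=0$ at $\tqstar$, where a singleton and the bottom class both contribute $0$ and so ties are immaterial.
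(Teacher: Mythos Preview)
Your proposal is correct and follows essentially the same three-step argument as the paper: a swap/rearrangement to obtain monotonicity in virtual ability, then splitting off the top user of any multi-user class containing a non-negative virtual ability (your ``peeling,'' their ``discriminating''), then collapsing all negative-virtual-ability users into the bottom class (your ``merging,'' their ``grouping''). Your write-up is considerably more explicit about why the local moves leave all other users' status unchanged---a point the paper asserts without computation---but the underlying decomposition and the moves themselves are the same.
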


\subsection{Implementation}

We now show that the ex-post virtual surplus maximizing allocation of badges is implemented at the unique equilibrium of the \emph{leaderboard with a cutoff} mechanism as defined in section \ref{sec:model}. To do so, we need to show two things: first, we must describe the interim status allocation rule implied by the ex-post allocation rule in Lemma~\ref{lem:optimal-surplus}.  Then we must compute the corresponding equilibrium contributions using the payment identity of the optimal auction framework and check that these contributions do indeed give rise to the interim allocation, under the rules of the aforementioned badge mechanism.

The interim status allocation of a user is the expected status value he receives from the mechanism given his quantile $\quant[i]$. To compute it, let $T_i$ be the random variable denoting the number of opponents with quantile smaller than $\quant[i]$. 
Observe that if $\quant[i]\leq \tqstar$, then under the optimal
ex-post allocation of badges in Lemma \ref{lem:optimal-surplus}, user $i$ will be ranked at the $T_i+1$ position. 
Thus the implied interim status of a user with $\quant[i]\leq \tqstar$ is $\E_{\qvec_{-i}}\left[S\left(\frac{T_i}{n-1}\right)\right]$ and 0 if $\quant[i]>\tqstar$. $T_i$ is distributed as a binomial distribution of $n-1$ independent random trials, each with success probability of $\quant[i]$.\footnote{Recall that quantiles are distributed uniformly in $[0,1]$.} For convenience, we introduce the following notation
\begin{equation}\label{eqn:expectedstatus}
\statn{n}{\quant[i]} = \sum_{\nu=0}^{n-1}\stat{\frac{\nu}{n-1}}\cdot \beta_{\nu,n-1}(\quant[i])
\end{equation}
where $\beta_{\nu,n}(q)= \binom{n-1}{\nu}\cdot q^{\nu}\cdot (1-q)^{n-1-\nu}$, denotes the Bernstein basis polynomial and $\statn{n}{\quant}$ is the Bernstein polynomial approximation of the status function $\stat{\cdot}$. By properties of Bernstein polynomials (see \cite{phillips-bernstein}), if $\stat{\cdot}$ is a strictly decreasing function then $\statn{n}{\cdot}$ is also strictly decreasing and if $\stat{\cdot}$ is convex or concave then so is $\statn{n}{\cdot}$. Additionally, $\statn{n}{\cdot}$ is continuous and differentiable and $\statn{n}{0}=\stat{0}$ and $\statn{n}{1}=\stat{1}=0$. 

Using this notation, we can express the interim status allocation for each user. This expression will be very useful throughout the course of this paper, so we formalize it in the following proposition. 
\begin{proposition}\label{prop:badge-interim}
In the optimal badge mechanism, the interim status allocation of a user with quantile $\quant[i]$ is
\begin{equation}\label{eqn:optimal-interim}
\xq{\quant[i]}=  
\begin{cases}
\statn{n}{\quant[i]} & \quant[i] \leq \tqstar \\ 
0 & \quant[i] > \tqstar
\end{cases}
\end{equation}
\end{proposition}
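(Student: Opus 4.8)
The plan is to derive the interim allocation directly from the ex-post allocation of Lemma~\ref{lem:optimal-surplus} by averaging over the opponents' quantiles, treating the cases $\quant[i]\leq\tqstar$ and $\quant[i]>\tqstar$ separately.

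For $\quant[i]\leq\tqstar$, I would first argue that user $i$'s status equals $T_i/(n-1)$ almost surely, where $T_i$ counts the opponents with quantile smaller than $\quant[i]$. By Lemma~\ref{lem:optimal-surplus}, user $i$ receives a distinct nonzero badge, and an opponent $j$ earns a strictly better badge exactly when $\quant[j]<\quant[i]$: badges are assigned in strictly decreasing order of quantile among all users below $\tqstar$, while every user above $\tqstar$ receives the bottom badge $0$. Since $\cdf[]$ is atomless, the event $\quant[j]=\quant[i]$ has probability zero, so almost surely no opponent shares user $i$'s badge (which also makes the leaderboard's random tie-breaking irrelevant in expectation). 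Hence ``equal or better'' collapses to ``strictly better,'' and the set of opponents with an equal-or-better badge is precisely $\{j\neq i:\quant[j]<\quant[i]\}$, of size $T_i$, giving $t_i=T_i/(n-1)$.

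I would then take the expectation over $\qvec_{-i}$. As the opponents' quantiles are i.i.d.\ uniform on $[0,1]$, each lies below $\quant[i]$ independently with probability $\quant[i]$, so $T_i\sim\mathrm{Binomial}(n-1,\quant[i])$, and
\begin{equation*}
\xq{\quant[i]}=\E_{\qvec_{-i}}\!\left[S\!\left(\tfrac{T_i}{n-1}\right)\right]=\sum_{\nu=0}^{n-1}S\!\left(\tfrac{\nu}{n-1}\right)\binom{n-1}{\nu}\quant[i]^{\nu}(1-\quant[i])^{n-1-\nu}=\statn{n}{\quant[i]},
\end{equation*}
which is exactly the quantity defined in~\eqref{eqn:expectedstatus}. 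For $\quant[i]>\tqstar$, Lemma~\ref{lem:optimal-surplus} assigns user $i$ the bottom badge $0$; then all $n-1$ opponents hold a weakly better badge, so $t_i=1$ deterministically and the normalization $S(1)=0$ forces $\xq{\quant[i]}=0$.

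I do not expect a substantial obstacle here, as the statement is essentially a bookkeeping consequence of Lemma~\ref{lem:optimal-surplus} together with the definition of status. The only point requiring genuine care is the identification of the equal-or-better count with $T_i$, which relies on two facts working in tandem: distinctness of the awarded badges (so that ``equal or better'' means ``strictly better''), and atomlessness of $\cdf[]$ (so that quantile ties, and hence the effect of tie-breaking, vanish with probability one).
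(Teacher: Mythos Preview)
Your proposal is correct and follows essentially the same approach as the paper: the paper derives the proposition in the text preceding it by introducing $T_i$ as the number of opponents with smaller quantile, noting that $T_i\sim\mathrm{Binomial}(n-1,\quant[i])$, and taking the expectation $\E_{\qvec_{-i}}[S(T_i/(n-1))]$ to obtain $\statn{n}{\quant[i]}$. Your treatment is in fact more careful than the paper's, since you explicitly justify why ``equal or better'' collapses to ``strictly better'' via distinctness of badges and atomlessness of $\cdf[]$, and why the $\quant[i]>\tqstar$ case yields zero via $S(1)=0$.
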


If the interim allocation of status of a user has the form presented in Equation \ref{eqn:optimal-interim}, then for his contribution to constitute an equilibrium of the badge mechanism, it must satisfy the 
payment characterization of Lemma \ref{lem:myerson-bne}:
\begin{equation}\label{eqn:optimal-bid}
b(\quant) = \begin{cases}
\val(\quant)\cdot \statn{n}{\quant}+\int_{\quant}^{\tqstar}\statn{n}{z}\cdot \val'(z)\cdot dz & \text{if $\quant\leq \tqstar$}\\
0 & \text{if $\quant\geq \tqstar$}
\end{cases}
\end{equation}
Moreover, by Lemma \ref{lem:myerson-opt} the expected user contribution under the optimal mechanism will be:
\begin{equation}\label{eqn:optimal-user-contr}
\opt = \E_{q}[b(q)] = \int_{0}^{\tqstar} R'(\quant)\cdot \statn{n}{\quant}d\quant
\end{equation}
We now show that the above pair of interim allocation and equilibrium contribution, given in Equations \eqref{eqn:optimal-interim} and \eqref{eqn:optimal-bid}, constitute the unique equilibrium of a badge
mechanism that takes the form of a leaderboard with a cutoff.

\begin{theorem}[Optimal Badge Mechanism: Leaderboard with a cutoff]\label{thm:optimal}
The optimal badge mechanism assigns a distinct badge to each user in decreasing order of contribution (breaking ties at random) as long as they pass a contribution threshold of $\theta=\val(\tqstar)\cdot\statn{n}{\tqstar}$. User's that don't pass the contribution threshold are assigned a badge of $0$. The mechanism has a unique equilibrium.
\end{theorem}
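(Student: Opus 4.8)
The plan is to split the claim into two parts: (i) the bid profile \eqref{eqn:optimal-bid} is a Bayes--Nash equilibrium of the leaderboard-with-cutoff mechanism whose threshold is $\theta=\val(\tqstar)\cdot\statn{n}{\tqstar}$ and it induces exactly the interim allocation \eqref{eqn:optimal-interim} of Proposition~\ref{prop:badge-interim}; and (ii) this is the only equilibrium. For (i), the first thing I would check is that $b(\quant)$ is strictly decreasing on $[0,\tqstar]$. Differentiating \eqref{eqn:optimal-bid} and using Leibniz' rule, the two terms carrying $\val'$ cancel and leave $b'(\quant)=\val(\quant)\cdot\statnpr{n}{\quant}$, which is strictly negative since $\val(\quant)>0$ for $\quant<1$ and $\statn{n}{\cdot}$ inherits strict monotonicity from $\stat{\cdot}$ through the Bernstein construction. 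Strict monotonicity is exactly what makes ranking by contribution coincide with ranking by quantile, so an \emph{active} user of quantile $\quant\le\tqstar$ sees precisely $T_i$ opponents ranked above him and hence earns interim status $\statn{n}{\quant}$, reproducing \eqref{eqn:optimal-interim}. I would also record that the integral term in \eqref{eqn:optimal-bid} vanishes at $\quant=\tqstar$, so $b(\tqstar)=\theta$; this is what makes the cutoff separate the active types $[0,\tqstar]$, who bid at least $\theta$, from the inactive types $(\tqstar,1]$, who optimally bid $0$ because any bid in $(0,\theta)$ yields badge $0$ and thus status $0$.

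Given the correct interim allocation, equilibrium follows from Lemma~\ref{lem:myerson-bne}: by construction \eqref{eqn:optimal-bid} is the payment identity for the monotone rule \eqref{eqn:optimal-interim}, so no deviation into the spanned range $\{0\}\cup[\theta,b(0)]$ is profitable. I would then dispose of the out-of-range deviations directly: a bid above $b(0)$ cannot push the status above $\stat{0}$ yet strictly raises the contribution, so it is dominated by $b(0)$; and for an inactive type $\quant>\tqstar$ I would show that joining is never profitable by checking that $h(\quant')=\val(\quant)\statn{n}{\quant'}-b(\quant')$ satisfies $h'(\quant')=(\val(\quant)-\val(\quant'))\statnpr{n}{\quant'}>0$ and $h(\tqstar)=(\val(\quant)-\val(\tqstar))\statn{n}{\tqstar}<0$, so the best ``pretend'' type is $\tqstar$ and even it yields negative utility.

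For uniqueness I would argue in three moves. First, because a higher contribution can only improve a user's rank and hence weakly his status, while the marginal value of status $\val(\quant)$ is decreasing in $\quant$, a single-crossing argument forces equilibrium bids to be monotone non-increasing in $\quant$ for every player; consequently the set of active types (those bidding at least $\theta$) is an initial segment $[0,\hat\quant]$, and all remaining types bid $0$. Second, exactly as in part (i), the active segment is ranked by quantile, so the interim status of an active type is $\statn{n}{\quant}$ \emph{regardless} of the value of $\hat\quant$, and Lemma~\ref{lem:myerson-bne} forces the bid function to equal \eqref{eqn:optimal-bid} with $\tqstar$ replaced by $\hat\quant$; in particular the marginal type bids $b(\hat\quant)=\val(\hat\quant)\statn{n}{\hat\quant}$. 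Third, I would pin down $\hat\quant$: participation of the marginal type needs $\val(\hat\quant)\statn{n}{\hat\quant}\ge\theta$, while if this inequality were strict a type just above $\hat\quant$ could bid exactly $\theta$, land at the bottom of the leaderboard with interim status arbitrarily close to $\statn{n}{\hat\quant}$, and secure strictly positive utility, contradicting his inactivity. Hence $\val(\hat\quant)\statn{n}{\hat\quant}=\theta=\val(\tqstar)\statn{n}{\tqstar}$, and since $\val(\cdot)$ and $\statn{n}{\cdot}$ are positive and strictly decreasing their product is strictly decreasing, so the equation has the unique root $\hat\quant=\tqstar$, which determines the active set and the bid function uniquely.

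The step I expect to be the main obstacle is this last pinning-down of the cutoff quantile. The payment identity alone renders the marginal type indifferent for \emph{every} candidate threshold $\hat\quant$, so $\hat\quant=\tqstar$ is not forced by the envelope condition but only by the interaction between the mechanism's fixed monetary threshold $\theta$ and the option of a marginally-inactive type to ``buy into'' the bottom of the leaderboard. Making the near-boundary deviation precise, in particular controlling the interim status of a type that bids exactly $\theta$ and sits just below the active segment and then invoking strict monotonicity of $\val(\cdot)\statn{n}{\cdot}$ to extract a unique root, is where the real content of the uniqueness claim sits. I would also be careful to note that the monotonicity-of-bids argument applies player by player, which is what rules out asymmetric profiles and yields a genuinely unique, symmetric equilibrium.
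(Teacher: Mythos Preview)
Your existence argument (part (i)) is essentially the paper's: strict monotonicity of $b(\cdot)$ on $[0,\tqstar]$ via $b'(\quant)=\val(\quant)\statnpr{n}{\quant}<0$, the identity $b(\tqstar)=\theta$, reproduction of the interim allocation \eqref{eqn:optimal-interim}, and the two out-of-range checks (bids above $b(0)$ and bids in $(0,\theta)$). Your treatment of the inactive types is in fact more explicit than the paper's.

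For uniqueness the paper takes a different route: it does not argue directly but verifies that the leaderboard with a cutoff is an \emph{anonymous order-based auction} in the sense of Chawla and Hartline~\cite{Hartline2012}---after uniform tie-breaking, a user's ex-post allocation depends only on his own bid, on the number of opponents bidding strictly more, and on the number bidding the same---and then invokes their theorem that such auctions have a unique equilibrium. This black box delivers uniqueness over all equilibria, asymmetric profiles included, in one stroke.

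Your hands-on argument correctly pins down the unique \emph{symmetric} equilibrium: given a common monotone strategy with cutoff $\hat\quant$, the interim status of an active type is indeed $\statn{n}{\quant}$ irrespective of $\hat\quant$, the payment identity then forces the bid formula with $\hat\quant$ in place of $\tqstar$, and your boundary analysis (the marginal type bids $\val(\hat\quant)\statn{n}{\hat\quant}$, while a type just above $\hat\quant$ can buy in at $\theta$ for interim status $\statn{n}{\hat\quant}$) together with strict monotonicity of $\val(\cdot)\statn{n}{\cdot}$ yields $\hat\quant=\tqstar$. The gap is your last sentence: monotonicity of each player's strategy does \emph{not} by itself rule out asymmetric profiles. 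If players $i$ and $j$ use different monotone bid maps $b_i\neq b_j$ with different cutoffs, ``ranked above $i$'' is no longer the event $\{\quant[j]<\quant[i]\}$, and your second move---that the interim status of any active type equals $\statn{n}{\quant}$---breaks down. So your approach, as written, establishes uniqueness only within the symmetric class; to match the theorem as stated you would need either the Chawla--Hartline reduction or a separate argument that every equilibrium of this i.i.d.\ anonymous mechanism must be symmetric.
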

\begin{proof}
The proof consists of two parts: First we show that the interim allocation of status $\xq{\cdot}$, given in Equation \eqref{eqn:optimal-interim} and the contribution function given in Equation \eqref{eqn:optimal-bid}, constitute an equilibrium of the proposed badge mechanism. To achieve this we simply need to verify that the implied contribution function, indeed gives rise to the optimal interim allocation of status. Then by Lemma \ref{lem:myerson-bne} and after also checking that users don't want to deviate to contributions outside of the region of contributions spanned by $b(\cdot)$, this pair of $\xq{\cdot}$ and $b(\cdot)$ are an equilibrium. Second, we need to argue that the mechanism has no other equilibria. To achieve this, we show that this specific form of a badge mechanism, falls into the class of anonymous order-based auctions of Chawla and Hartline \cite{Hartline2012}, where it is shown that such auctions have unique equilibria.

Observe that the contribution function in Equation \eqref{eqn:optimal-bid} is strictly decreasing in the region $[0,\tqstar]$, since $b'(\quant)=\val(\quant)\cdot \statnpr{n}{\quant}$, and both $\val(\cdot)$ and $\statn{n}{\cdot}$ are strictly decreasing functions. Moreover, 
observe that $b(\tqstar)=\val(\tqstar)\cdot \statn{n}{\tqstar}=\theta$. Since all quantiles are distributed uniformly in $[0,1]$, under the above bid function, a user with quantile $\quant\leq \tqstar$ is assigned a badge lower than every user with higher quantile and higher than every user with lower quantile, while the event of a tie has zero measure. Thereby, the latter
bid function $b(\cdot)$, gives rise to the optimal interim allocation of status given in Equation \eqref{eqn:optimal-interim}. Last, it is easy to see that since the user with quantile $0$, doesn't want to bid above $b(0)$ (since he gets the same status at a higher cost), no user wants to bid above the bid $b(0)$. The latter follows from the fact that the characterization in Lemma \ref{lem:myerson-bne} guarantees that no user wants to deviate to any contribution in the region of contributions spanned by $b(\cdot)$ and hence $b(0)$. From this it immediately follows that they also don't want to bid above $b(0)$. Additionally, it is also trivial to see that no user wants to bid in the region $(0,\theta)$. Thus we can conclude that the above pair of interim allocation and contribution function are an equilibrium of the mechanism.

The fact that the badge mechanism has a unique equilibrium, follows from the recent results of Chawla and Hartline \cite{Hartline2012}. Despite the fact that general badge mechanisms do not fall into the class of auctions studied in \cite{Hartline2012} (such as for instance absolute threshold badge mechanisms studied later), the specific leaderboard with a cutoff badge mechanism falls into their framework. To show this, we need to argue that for any contribution profile $\bvec$ (even in the case of tied contributions), the ex-post expected allocation of status of a user depends only on his own contribution, on the number of users that have a higher contribution and on the number of users that have the same contribution (but for instance, not on the exact contributions of other users). Since, ties are broken uniformly at random, a user is ranked above another user with an equal contribution with probability of $1/2$. Thus if we denote with $n_g(\bvec)$ the number of users that contribute more and with $n_e(\bvec)$ the number of users with an equal contribution, then the expected ex-post status allocation of a user is:
\begin{equation*}
\sum_{\nu=0}^{n_e(\bvec)}\stat{\frac{n_g(\bvec)+n_e(\bvec)}{n-1}}\cdot \beta_{\nu,n_e(\bvec)}(1/2)
\end{equation*}
when his contribution $b_i\geq \theta$ and $0$ otherwise. It is clear that this ex-post allocation only depends on the quantities required by the framework of \cite{Hartline2012}.
\end{proof}

\paragraph{Graphical interpretation of optimal average and ex-post per-user contribution} To provide more intuition of what is the average contribution of each user under the optimal badge mechanism, we analyze the case of a linear status function $\stat{t}=1-t$. In this case, by linearity of expectation, we also get that $\statn{n}{\quant}=1-\quant$. Therefore, the 
optimal interim status allocation of a user is:
\begin{equation}\label{eqn:linear-optimal-interim}
\xq{\quant}=  
\begin{cases}
1-\quant & \quant \leq \tqstar \\ 
0 & \quant > \tqstar
\end{cases}
\end{equation}
while his equilibrium contribution after applying integration by parts in Equation \eqref{eqn:optimal-bid} is:
\begin{equation}\label{eqn:linear-optimal-bid}
b(\quant) = \begin{cases}
\val(\tqstar)(1-\tqstar)+\int_{\quant}^{\tqstar}\val(z)\cdot dz & \text{if $\quant\leq \tqstar$}\\
0 & \text{if $\quant\geq \tqstar$}
\end{cases}
\end{equation}
Last, by Equation \ref{eqn:optimal-user-contr} and applying integration by parts, the
expected contribution of each user is:
\begin{equation}\label{eqn:linear-contribution}
\opt=\E_{\quant}\left[b(\quant)\right] = \int_0^{\tqstar}R'(\quant)\cdot(1-\quant)\cdot d\quant = R(\tqstar)\cdot(1-\tqstar)+\int_{0}^{\tqstar}R(\quant)d\quant
\end{equation}
In this case, the expected user contribution has a nice pictorial representation, if we plot the revenue function of the ability distribution, as is depicted in Figure \ref{fig:optimal}.
\begin{figure}[htpb]
\centering
\subfigure[Optimal Mechanism]{\includegraphics[scale=.6]{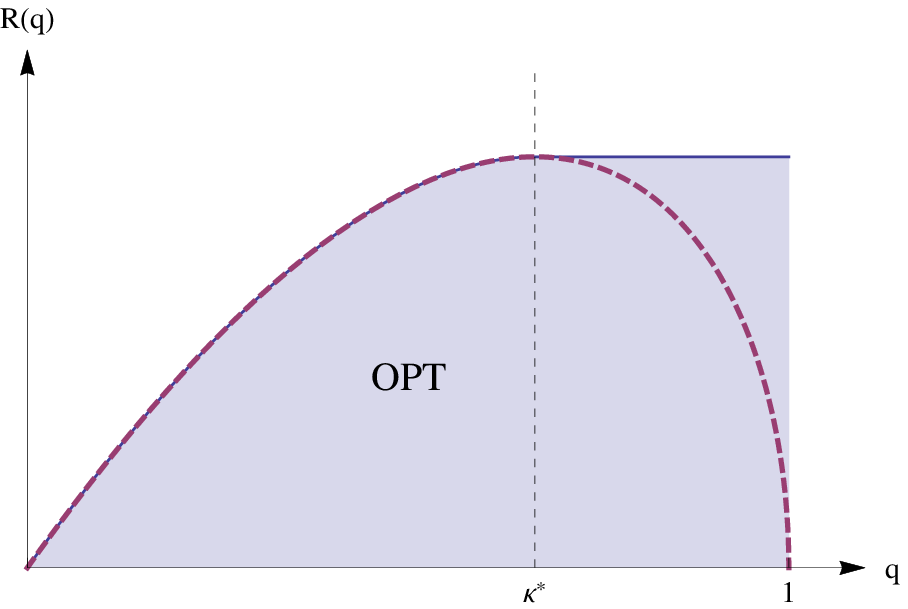} } 
\subfigure[Absolute Threshold Mechanism]{\includegraphics[scale=.6]{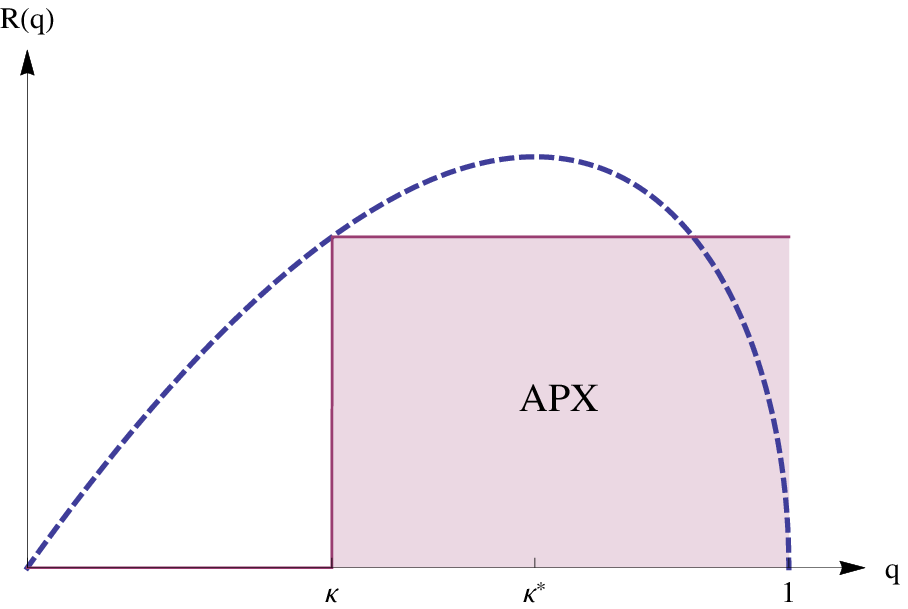}}
\caption{  
These figures plot the revenue curves as a function of quantile when the distribution of abilities is given by the CDF $F(v) = v^2$ (equivalently $\val(\quant)=\sqrt{1-\quant}$). When status is linear, the expected user contribution of the optimal mechanism is given by the shaded area in the left figure. The expected user contribution of an absolute threshold mechanism with a single threshold, as discussed in section \ref{sec:single} is given by the shaded area in the figure on the right. \label{fig:optimal}}
\end{figure}

\paragraph{Drawbacks of optimal badge mechanism} The optimal badge mechanism described above has two main drawbacks. First the equilibrium contribution is a very complex function of the ability distribution, and therefore, small errors in a user's beliefs will dramatically change the total contribution that would arise in practice. Second, the optimal mechanism is not detail-free, as it depends on the monopoly quantile of the distribution. In the sections that follow we will address each of these issues separately and we will give approximately optimal badge mechanisms that avoid some of these drawbacks. 
%

\section{Absolute Threshold Mechanisms}\label{sec:single}

In this section, we explore the approximate optimality of absolute threshold mechanisms as formally described in Definition \ref{def:absolute}. We show that for concave status valuations (including linear), the expected user contribution of a mechanism with a single threshold is a 4-approximation to the expected user contribution generated by the optimal mechanism. This threshold is set such that any user with ability greater than the median ability will earn the top badge while the other half of the population will not contribute anything. Furthermore, an absolute threshold mechanism with a slightly more complex contribution threshold, which intuitively takes into account the dispersion of the ability distribution, is a 2-approximation for linear status and a 3-approximation for strictly concave status. By way of example, we show that no mechanism with a single threshold can do better than a 2-approximation. When status valuations are convex, no absolute threshold mechanism with a constant number of thresholds can achieve any finite approximation and thus the designer must add more thresholds as the number of users grows in order to achieve a good approximation. However, on the positive side, we show that the number of thresholds grows quite slowly with the number of users, indicating that absolute mechanisms with a ``small'' number of thresholds can be approximately optimal. 

We begin our analysis by characterizing the unique symmetric Bayes-Nash equilibrium\footnote{We note that asymmetric equilibria do exist in absolute threshold badge mechanisms.} of an absolute threshold mechanism with thresholds $\tbvec$. This equilibrium is monotone, non-increasing in each user's quantile. A lower quantile implies a higher ability, so higher ability users will contribute more than lower ability users in equilibrium. 

\begin{theorem}[Equilibrium Structure]\label{thm:uniqueness}
Any absolute threshold mechanism with contribution thresholds $\tbvec=(\tb{1},\ldots,\tb{m})$ has a unique symmetric BNE $b(\cdot)$ characterized by a vector of quantile thresholds $\tqvec=(\tq{1},\ldots,\tq{m})$ with $\tq{1} \geq \tq{2} ... \geq \tq{m}$ such that a user with ability quantile $q_i$ will make a contribution of: 
\begin{eqnarray*}
b(\quant[i]) = 
\begin{cases}
0 & \quant[i] > \tq{1} \\
\theta_t & \quant[i] \in (\tq{t+1},\tq{t}] \\
\theta_m & \quant[i] \leq \tq{m} 
\end{cases}
\end{eqnarray*}
These quantile thresholds can be computed by a system of $m$ equations.


Furthermore, for any vector of quantile thresholds $\tqvec=(\tq{1},\ldots, \tq{m})$ there exists a vector of contribution thresholds $\tbvec$, characterized by:
\begin{align}\label{eqn:effort-to-value}
\forall t\in[1,\ldots, m]: \tb{t} =\sum_{j=1}^t \val(\tq{j})\cdot \left(\statn{n}{\tq{j}}-\statn{n}{\tq{j-1}}\right),
\end{align}
under which the unique symmetric BNE implements $\tqvec$. 
\end{theorem}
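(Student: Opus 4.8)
The plan is to reduce each user's strategy to a choice of badge, use a single-crossing structure to force every symmetric equilibrium into the claimed threshold form, and then read off the two-way correspondence between contribution thresholds $\tbvec$ and quantile thresholds $\tqvec$ from adjacent-badge indifference conditions. First I would observe that in an absolute threshold mechanism a user's badge depends only on which interval $[\tb{j},\tb{j+1})$ his contribution lands in, so any contribution strictly between two consecutive thresholds is wasteful; hence in every equilibrium each user contributes exactly one of $0=\tb{0},\tb{1},\ldots,\tb{m}$, and his strategy is equivalent to a map from his quantile to a badge $t\in\{0,\ldots,m\}$. Fixing a symmetric opponent strategy, I would compute the interim status of choosing badge $t$: the opponents counted toward a badge-$t$ user's status are precisely those earning badge $\ge t$, a count that is $\mathrm{Binomial}(n-1,p_t)$ where $p_t$ is the probability an opponent earns badge $\ge t$. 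When opponents play a threshold strategy with cutoff $\tq{t}$ (badge $\ge t \iff$ quantile $\le \tq{t}$), we have $p_t=\tq{t}$, so the interim status equals $\sum_{\nu}\stat{\nu/(n-1)}\beta_{\nu,n-1}(\tq{t})=\statn{n}{\tq{t}}$ by \eqref{eqn:expectedstatus}. Thus a quantile-$\quant$ user picking badge $t$ obtains expected (ability-scaled) utility $\val(\quant)\cdot\statn{n}{\tq{t}}-\tb{t}$.

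The core step is the monotonicity argument. For an arbitrary symmetric opponent strategy, raising one's own badge can only shrink the set of opponents counted as ``equal or better,'' so the interim status of badge $t$ is non-decreasing in $t$; since $\val(\quant)$ is strictly decreasing in $\quant$, the marginal utility of a higher badge is non-decreasing as $\quant$ decreases, i.e. the utility has increasing differences in $(t,-\quant)$. Standard monotone comparative statics then forces the best-response badge to be non-increasing in $\quant$, so every symmetric BNE is a threshold strategy with cutoffs $\tq{1}\ge\cdots\ge\tq{m}$ of the prescribed form, with ties confined to the measure-zero cutoff set (using that $F$ is atomless). Along such a strategy the marginal user at $\quant=\tq{t}$ is indifferent between badges $t-1$ and $t$, that is $\val(\tq{t})\statn{n}{\tq{t}}-\tb{t}=\val(\tq{t})\statn{n}{\tq{t-1}}-\tb{t-1}$, which rearranges to $\tb{t}-\tb{t-1}=\val(\tq{t})(\statn{n}{\tq{t}}-\statn{n}{\tq{t-1}})$; telescoping from $\tb{0}=0$ with the convention $\tq{0}=1$ (so that $\statn{n}{\tq{0}}=\stat{1}=0$) yields exactly \eqref{eqn:effort-to-value}.

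To confirm the candidate strategy is a genuine BNE and to establish uniqueness, I would first compute, for each $s$, the increment $(\val(\quant)\statn{n}{\tq{s}}-\tb{s})-(\val(\quant)\statn{n}{\tq{s-1}}-\tb{s-1})=(\val(\quant)-\val(\tq{s}))(\statn{n}{\tq{s}}-\statn{n}{\tq{s-1}})$, whose sign is that of $\tq{s}-\quant$; this shows the utility rises through badge $t$ and falls beyond it for every $\quant\in(\tq{t+1},\tq{t}]$, so badge $t$ is globally optimal and no non-adjacent deviation helps. For the converse direction and the ``system of $m$ equations'' claim I would treat the indifference relations as a recursion: given $\tq{t-1}$ (starting at $\tq{0}=1$), the equation $\val(\kappa)(\statn{n}{\kappa}-\statn{n}{\tq{t-1}})=\tb{t}-\tb{t-1}$ determines $\tq{t}$. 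Its left-hand side is continuous and strictly decreasing in $\kappa$ on $[0,\tq{t-1}]$ — both $\val(\cdot)$ and $\statn{n}{\cdot}-\statn{n}{\tq{t-1}}$ increase as $\kappa$ decreases, using that $\statn{n}{\cdot}$ is strictly decreasing because $\stat{\cdot}$ is — running from $0$ at $\kappa=\tq{t-1}$ up to $\val(0)(\statn{n}{0}-\statn{n}{\tq{t-1}})$ at $\kappa=0$, so $\tq{t}$ exists and is unique, with the degenerate case of an unattainably large $\tb{t}-\tb{t-1}$ handled by setting $\tq{t}=\cdots=\tq{m}=0$. This simultaneously delivers uniqueness of the symmetric BNE and the computation of $\tqvec$ from $\tbvec$.

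I expect the main obstacle to be the rigorous part of the second paragraph: arguing cleanly that every symmetric equilibrium — a priori possibly mixed and coupled to itself through the endogenous, externality-laden status — must collapse to the monotone threshold form, and correctly treating the boundary/degenerate cases in which several top badges are left unused. Once the monotonicity of $\val(\cdot)$ and $\statn{n}{\cdot}$ is in hand, the indifference computation, the telescoping to \eqref{eqn:effort-to-value}, the global-optimality check, and the recursion-based uniqueness are comparatively routine.
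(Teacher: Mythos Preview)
Your proposal is correct and follows essentially the same route as the paper: reduce to bidding exactly a threshold, argue monotonicity of the best response (you use increasing differences, the paper uses a direct swap argument), compute the interim status $\statn{n}{\tq{t}}$, extract the adjacent-badge indifference conditions (the paper phrases this via Myerson's payment identity, which for a step allocation is exactly your telescoping), and then run the same recursion on $\val(\kappa)(\statn{n}{\kappa}-\statn{n}{\tq{t-1}})=\tb{t}-\tb{t-1}$ to get existence and uniqueness, handling unreachable badges as a degenerate case. Your explicit global-optimality check via the sign of $(\val(\quant)-\val(\tq{s}))(\statn{n}{\tq{s}}-\statn{n}{\tq{s-1}})$ is a nice addition that the paper leaves implicit.
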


In the unique symmetric BNE, a vector of contribution thresholds gives rise to a vector of quantile thresholds. The main implication of this theorem is that we can design mechanisms over quantile thresholds and know that there exists a vector of contribution thresholds that implement the desired quantile thresholds. This greatly simplifies the problem of optimal design because we no longer need to worry about the induced equilibrium behavior. For the remainder of this section, we focus on designing badge mechanisms in quantile space rather than contribution space.

\subsection{Expected Contributions of a Single Threshold Mechanism}

We now analyze the approximate optimality of absolute threshold mechanisms that use a single threshold. We start by deriving the expected user contribution generated by any such badge mechanism and then prove that, if the threshold is set appropriately, then it is a good approximation to the expected contribution under the optimal mechanism, for any concave status function.

Let $\theta$ be the contribution threshold and let $\tq{}$ be the corresponding equilibrium quantile threshold. There are only two rational contributions in equilibrium, $b_i = \theta$ or $b_i = 0$. At equilibrium all users with quantile smaller than $\tq{}$ will contribute $\theta$ and get the top badge. Hence, the interim status allocation of a player who gets the top badge is $\statn{n}{\tq{}}$. Any user with quantile $\quant[i] = \tq{}$ must be indifferent between contributing $\theta$ and earning the top badge, or contributing 0 and earning badge 0. Thus $\tq{}$ and $\theta$ must satisfy the following relationship: 
\[ \theta = v(\tq{})\cdot \statn{n}{\tq{}} \]
Then the expected contribution of a user is equal to the probability that she has a quantile $\quant[i] \leq \tq{}$, times $\theta$. 
\begin{equation}\label{singleRev}
\apx = E[b_i] = \tq{}\cdot \theta =  \tq{} \cdot v(\tq{}) \cdot \statn{n}{\tq{}} = R(\tq{})\cdot \statn{n}{\tq{}}
\end{equation}

This gives a simple expression for the expected user contribution  of a single threshold mechanism as a function of the quantile threshold implemented in equilibrium. In the case of linear status, the expected user contribution is graphically represented in right plot in Figure \ref{fig:optimal}. This representation serves as an intuition for the upper and lower bounds on approximations achievable by single absolute threshold mechanisms.


\begin{theorem}[Median Absolute Badge Mechanism]\label{cor:median-linear}
When the status function $\stat{\cdot}$ is concave, the expected total contribution of an absolute threshold mechanism with a single quantile threshold $\tq{}=\frac{1}{2}$ is a $4$-approximation to the expected total contribution of the optimal badge mechanism.
\end{theorem}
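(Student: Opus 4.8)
The plan is to work directly with the two closed-form expressions already derived: the single-threshold contribution $\apx = R(\tfrac{1}{2})\cdot\statn{n}{\tfrac{1}{2}}$ obtained by setting the quantile threshold $\tq{}=\tfrac{1}{2}$ in Equation~\eqref{singleRev}, and the optimal contribution $\opt = \int_0^{\tqstar} R'(q)\cdot\statn{n}{q}\,dq$ from Equation~\eqref{eqn:optimal-user-contr}. The goal is to establish $\opt \le 4\,\apx$, which is exactly the claimed $4$-approximation. The whole argument rests on bounding the optimal integral by a single product $\statn{n}{0}\cdot R(\tqstar)$ and then paying a factor of $2$ for each of the two factors via a concavity/midpoint inequality.

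First I would upper bound $\opt$ by a product. On the interval $[0,\tqstar]$ the virtual value $R'(q)$ is non-negative (this is precisely the defining property of the monopoly quantile under regularity), while $\statn{n}{\cdot}$ is non-increasing because $\stat{\cdot}$ is decreasing and the Bernstein approximation preserves monotonicity. Hence $\statn{n}{q}\le \statn{n}{0}$ throughout the region of integration, and pulling this constant out gives
\[
\opt \;=\; \int_0^{\tqstar} R'(q)\cdot\statn{n}{q}\,dq \;\le\; \statn{n}{0}\int_0^{\tqstar} R'(q)\,dq \;=\; \statn{n}{0}\cdot R(\tqstar),
\]
using $R(0)=0$.

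The second step is to replace each factor by its value at $\tfrac{1}{2}$, losing only a factor of $2$ each. Because $\cdf[]$ is regular, $R$ is concave with $R(0)=R(1)=0$, and writing $\tfrac{1}{2}$ as a convex combination of $0$ and $\tqstar$ (when $\tqstar\ge\tfrac{1}{2}$) or of $\tqstar$ and $1$ (when $\tqstar<\tfrac{1}{2}$) yields $R(\tfrac{1}{2})\ge\tfrac{1}{2} R(\tqstar)$, i.e.\ $R(\tqstar)\le 2R(\tfrac{1}{2})$. Because $\statn{n}{\cdot}$ inherits concavity from $\stat{\cdot}$ and satisfies $\statn{n}{1}=0$, midpoint concavity gives $\statn{n}{\tfrac{1}{2}}\ge\tfrac{1}{2}\bigl(\statn{n}{0}+\statn{n}{1}\bigr)=\tfrac{1}{2}\statn{n}{0}$, i.e.\ $\statn{n}{0}\le 2\statn{n}{\tfrac{1}{2}}$. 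Substituting both bounds into the display yields $\opt \le 4\,R(\tfrac{1}{2})\cdot\statn{n}{\tfrac{1}{2}} = 4\,\apx$, as desired.

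The only genuinely delicate point is the inequality $R(\tqstar)\le 2R(\tfrac{1}{2})$, since the monopoly quantile $\tqstar$ may lie on either side of $\tfrac{1}{2}$ and the convex-combination argument must be split accordingly; in both sub-cases the bound follows because the relevant coefficient, $1/(2\tqstar)$ or $1/(2(1-\tqstar))$, is at least $\tfrac{1}{2}$ on account of $\tqstar\in[0,1]$. Everything else is bookkeeping: the monotonicity and concavity of $\statn{n}{\cdot}$ are imported from the stated Bernstein-polynomial properties of $\stat{\cdot}$, and the sign of $R'$ on $[0,\tqstar]$ is immediate from regularity.
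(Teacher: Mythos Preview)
Your proof is correct and follows essentially the same approach as the paper: upper bound $\opt$ by $\statn{n}{0}\cdot R(\tqstar)$ using monotonicity of $\statn{n}{\cdot}$ and nonnegativity of $R'$ on $[0,\tqstar]$, then lose a factor of $2$ on each term via concavity. The only cosmetic difference is in the justification of $R(\tfrac{1}{2})\ge\tfrac{1}{2}R(\tqstar)$: the paper routes this through Jensen's inequality ($R(\tfrac{1}{2})\ge\int_0^1 R(q)\,dq$) and a triangle-area bound, whereas your two-case convex-combination argument is a slightly more direct way to reach the same inequality.
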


This suggests that designers can implement a good incentive mechanism by setting the threshold of a single badge such that half of the user base earns the badge.The next theorem shows that this approximation can be improved by incorporating the monopoly quantile of the ability distribution $\tqstar$. In the Appendix we show that the bound of $4$ for the median badge mechanism is tight. Hence, incorporating the monopoly quantile is essential in getting better approximation guarantees.

\begin{theorem}\label{thm:better-apx}
When the status value function $\stat{\cdot}$ is concave, the expected total contribution of an absolute threshold mechanism with a single quantile threshold $\tq{}=\min\{\tqstar,1/2\}$ is a 3-approximation to the expected total contribution of the optimal badge mechanism. Furthermore, when the status value function is linear, it is a 2-approximation.
\end{theorem}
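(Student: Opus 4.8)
The plan is to bound $\opt$ from above by a constant multiple of $\apx$, working entirely in quantile space from the closed forms $\apx = R(\tq{})\cdot\statn{n}{\tq{}}$ of \eqref{singleRev} and $\opt = \int_0^{\tqstar}R'(\quant)\cdot\statn{n}{\quant}\,d\quant$ of \eqref{eqn:optimal-user-contr}. The first move is to integrate $\opt$ by parts and repackage it into a form directly comparable to $\apx$. Writing $\phi(\quant) = -\statnpr{n}{\quant}$, which is non-negative because $\statn{n}{\cdot}$ is decreasing and non-decreasing because $\statn{n}{\cdot}$ is concave (both inherited from $\stat{\cdot}$ via the Bernstein properties cited after \eqref{eqn:expectedstatus}), and using $R(0)=0$ together with $\statn{n}{\quant}=\int_{\quant}^{1}\phi(z)\,dz$ (since $\statn{n}{1}=0$), integration by parts yields
\begin{equation*}
\opt = R(\tqstar)\statn{n}{\tqstar} + \int_0^{\tqstar}R(\quant)\phi(\quant)\,d\quant = \int_0^1 \tilde R(\quant)\,\phi(\quant)\,d\quant,
\end{equation*}
where $\tilde R(\quant)=R(\min\{\quant,\tqstar\})$ is the rising-then-flat revenue curve: non-decreasing, concave, equal to $R$ on $[0,\tqstar]$ and equal to its maximum $R(\tqstar)$ on $[\tqstar,1]$. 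In the same notation $\apx = \tilde R(\tq{})\int_{\tq{}}^1\phi(z)\,dz$, using $\tq{}\le\tqstar$.

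With both quantities in this common form, I would split the integral defining $\opt$ at the threshold $\tq{}=\min\{\tqstar,1/2\}$. For the lower piece, $\tilde R\le \tilde R(\tq{})$ on $[0,\tq{}]$ by monotonicity, and $\int_0^{\tq{}}\phi\le\int_{\tq{}}^1\phi$ because $\phi$ is non-decreasing and $\tq{}\le 1/2\le 1-\tq{}$; together these give $\int_0^{\tq{}}\tilde R\phi\le\apx$. For the upper piece there are two regimes. When $\tqstar\le 1/2$ (so $\tq{}=\tqstar$), $\tilde R$ equals $R(\tqstar)$ throughout $[\tq{},1]$, so the upper piece is exactly $\apx$ and $\opt\le 2\apx$. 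When $\tqstar>1/2$ (so $\tq{}=1/2$), the key step is to majorize $\tilde R$ on $[1/2,1]$ by the tangent line $L(\quant)=R(1/2)+R'(1/2)(\quant-1/2)$ to the revenue curve at the median quantile: concavity gives $\tilde R\le L$ on $[1/2,\tqstar]$, and since $R'(1/2)\ge 0$ (as $1/2<\tqstar$ and $R'$ is non-increasing) $L$ is non-decreasing and so dominates the flat value $R(\tqstar)$ on $[\tqstar,1]$ as well. This gives $\int_{1/2}^1\tilde R\phi\le \apx + R'(1/2)\int_{1/2}^1(\quant-1/2)\phi\,d\quant$, and bounding $(\quant-1/2)\le 1/2$ together with the elementary fact $\frac{1}{2}R'(1/2)\le R(1/2)-R(0)=R(1/2)$ (again from $R'$ non-increasing) controls the second term by a further $\apx$, giving $\opt\le 3\apx$.

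For the linear case I would exploit the same tangent idea globally to sharpen the constant to $2$. When $\stat{\cdot}$ is linear, $\statn{n}{\quant}=1-\quant$ and $\phi\equiv 1$, so $\opt=\int_0^1\tilde R(\quant)\,d\quant$. Majorizing $\tilde R$ on all of $[0,1]$ by the tangent $L$ to $R$ at $\tq{}$ (valid everywhere, exactly as above) and integrating,
\begin{equation*}
\opt\le\int_0^1 L(\quant)\,d\quant = R(\tq{}) + R'(\tq{})\left(\frac{1}{2}-\tq{}\right).
\end{equation*}
When $\tq{}=1/2$ the slope term vanishes and the bound is $R(1/2)=2\apx$; when $\tq{}=\tqstar$ we have $R'(\tqstar)=0$, so the bound is $R(\tqstar)$, and $R(\tqstar)\le 2R(\tqstar)(1-\tqstar)=2\apx$ precisely because $\tqstar\le 1/2$. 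Either way $\opt\le 2\apx$.

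The hard part will be the regime $\tqstar>1/2$, where the mechanism must cap its threshold at the median rather than at the monopoly quantile. Here crude term-by-term bounds are too lossy: the mass of $\tilde R$ lying to the right of the median can by itself exceed $\apx$, so one cannot bound the two halves separately by $\apx$ each and hope for the stated constant. The tangent-at-the-median majorization is what injects the global concavity of the revenue curve and recovers the correct factor. For linear status the mean of $\quant-1/2$ over $[0,1]$ is zero, which is exactly why the slope term drops out of $\int_0^1 L$ and the constant improves from $3$ to $2$; for general concave status the extra non-decreasing weight $\phi$ prevents this cancellation, and one settles for $3$. A minor point to check throughout is that the tangent dominates the flat portion of $\tilde R$ beyond $\tqstar$, which follows from $R'(1/2)\ge 0$.
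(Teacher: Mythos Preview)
Your proof is correct and takes a genuinely different route from the paper's. Both arguments split into the cases $\tqstar\le 1/2$ and $\tqstar>1/2$, but the mechanics diverge. The paper, for concave status with $\tqstar>1/2$, splits $\int_0^{\tqstar}R'\statn{n}{\cdot}$ at $1/2$, applies integration by parts only to the left sub-integral, and then bounds each of the three resulting terms separately by $\apx$ using the crude inequalities $R'(t)\le 2R(1/2)$ on $[1/2,\tqstar]$ and $-\statnpr{n}{t}\le 2\statn{n}{1/2}$ on $[0,1/2]$; for linear status it instead appeals to Jensen on the capped curve $\hat R$. Your argument is more unified: you globally rewrite $\opt=\int_0^1\tilde R\,\phi$ with $\tilde R=R(\min\{\cdot,\tqstar\})$ and $\phi=-\statnpr{n}{\cdot}$, bound the left half by $\apx$ via the monotonicity of $\phi$ and the shift inequality $\int_0^{\tq{}}\phi\le\int_{\tq{}}^1\phi$, and handle the right half by majorizing $\tilde R$ with the tangent line to $R$ at the threshold. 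This buys you a single tool (tangent-line concavity) that covers both the linear and concave claims and makes transparent \emph{why} the constant improves to $2$ in the linear case: with $\phi\equiv 1$ the slope term $\int_0^1(\quant-\tq{})\,d\quant$ vanishes exactly, whereas a general non-decreasing weight $\phi$ destroys this cancellation. The paper's bounds are slightly more elementary to verify line by line, but your packaging is cleaner and isolates the role of concavity of $R$ versus concavity of $\statn{n}{\cdot}$ more sharply.
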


\paragraph{Lower Bounds}\label{sec:lowerbound}

%

We now explore the limitations of absolute threshold mechanisms with a single threshold. Example \ref{ex:singleConcave} shows that no absolute threshold mechanism with a single threshold can yield better than a 2-approximation to the optimal mechanism even when the status valuation is linear. Example \ref{convexLowerBound} shows that for the class of convex status functions, no absolute threshold mechanism with even a constant number of thresholds can achieve a constant approximation to the optimal mechanism.

\begin{example}\label{ex:singleConcave}[Tight lower bound for any single absolute threshold] 
\begin{figure}[htpb]
\centering
\subfigure{\includegraphics[scale=.6]{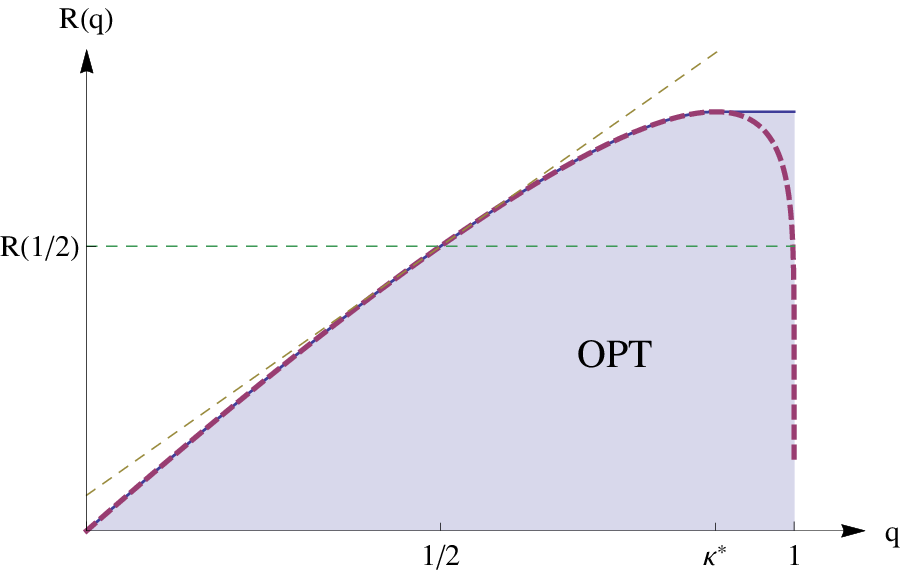}}
\subfigure{\includegraphics[scale=.6]{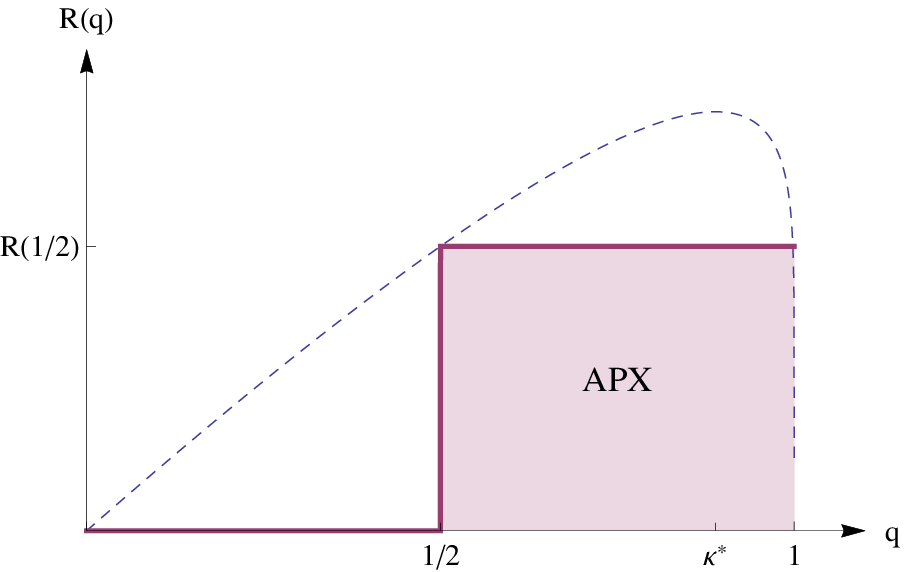}}
\caption{Left figure depicts the per user contribution of the optimal badge mechanism, when the ability distribution is of the form $\cdf[\val]=\val^{\alpha}$. Observe that the rectangle
with height $R(1/2)$ is at least as large as the area below the concave curve, since the curve lies below the
tangent at $1/2$. The per user contribution of a single absolute badge at $\kappa=1/2$ is depicted on the right and is half of the latter rectangle. As $\alpha\rightarrow \infty$, the left area converges to a triangle, and no single badge can achieve more than a half approximation.} \label{fig:linear-2}
\end{figure}

Let the status value function be $\stat{t}=1-t$. Figure \ref{fig:linear-2} shows an example of a revenue curve for some distribution with cumulative density $F$. The figure on the left shows the expected user contribution for the optimal mechanism while the right figure shows the expected user contribution for an absolute threshold mechanism with a single threshold. We will construct a distribution $F$ such that the area under the curve on the left converges to a ``triangle'' while the expected contribution of the single threshold mechanism will be the ``rectangle'' on the right. Any rectangle inscribed inside of this triangle will have at most half of the area of the triangle.   

Suppose that the ability distribution has cumulative density function $F(v)=v^{\alpha}$ and support $[0,1]$, as $\alpha\rightarrow \infty$.  The ability as a function of the quantile is then $\val(\quant)=F^{-1}(1-\quant)=(1-\quant)^{1/\alpha}$. Recall that the revenue function is $R(\quant) = \quant \cdot \val(\quant)$, so in this case 
\[ \lim_{\alpha \rightarrow \infty} R(q) = q \] 

All players have positive virtual ability (since $R'(\quant) \geq 0$ for all $\quant$), so the expected user contribution of the optimal mechanism converges to $\opt=\int_0^1 \quant\cdot d\quant = \frac{1}{2}$. On the other hand, the expected user contribution of any single absolute badge mechanism is $\apx=\tq{} \cdot (1-\tq{})\leq \frac{1}{4}$.


Intuitively, the single threshold mechanism has the following limitation. If the quantile threshold is set low, then only high ability users will earn the top badge and the mechanism loses many contributions because a large fraction of users contribute nothing. If the quantile threshold is set high, then a large fraction of users will earn the top badge but the status value of the badge decreases and thus the amount that users are willing to contribute to earn it decreases. The optimal mechanism does not have this drawback. At a high level, a single badge threshold, unlike a complete ranking, is not effective in motivating a population of users with almost identical abilities.
%
\end{example}

We now turn to the case of convex status functions and show that the concavity assumption a necessity for single absolute threshold mechanisms. Specifically, we  
give an example where no absolute threshold threshold mechanism with a constant number of thresholds can achieve any constant approximation to the optimal badge mechanism.
Instead, the number of thresholds needed to achieve a constant approximation
grows logarithmically with number of participants.

\begin{example}[Logarithmic loss for convex status]\label{convexLowerBound}
Consider a convex status function where the status of a user is inversely proportional to the proportion of users (including the user himself) with a weakly better status class (normalized so that $\stat{1}=0$):
\begin{equation}
\stat{t} =\frac{n}{(n-1)t+1}-1
\end{equation}

Assume that abilities are distributed uniformly in $[0,1]$. The ability function is then $\val(\quant)=1-\quant$ and the revenue function is $R(\quant)=\quant(1-\quant)$, with derivative $R'(\quant)=1-2\cdot \quant$ and the monopoly quantile is $1/2$. The interim status allocation of a user with quantile $\quant$ under the optimal mechanism ends up being after simplifications $\statn{n}{\quant}=\frac{1-(1-\quant)^n}{\quant}-1$. Thereby the optimal expected user contribution is:
\begin{align*}
\opt =~& \int_{0}^{1/2}(1-2\cdot \quant)\statn{n}{\quant}d\quant = \int_{0}^{1/2}(1-2\cdot \quant)\frac{1-(1-\quant)^n}{\quant}d\quant-\frac{1}{2}\int_0^{1/2}(1-2\cdot \quant)d\quant \\
\geq~& \int_{0}^{1/2}(1-2\quant)\frac{1-e^{-n \quant}}{\quant}d\quant-\frac{1}{8}= \Theta(\log(n))
\end{align*}

On the other hand we note that the virtual surplus, and hence total contribution, achievable by any mechanism that uses $m$ badges is at most $n\cdot (m-1)$. Even if all users have a maximum virtual ability of $1$, the virtual surplus from any badge mechanism with $m$ badges at any contribution profile $\bvec$ is simply:
\begin{equation*}
\sum_{t=1}^{m} |i: \rank(b_i,\bvec_{-i})=t|\left(\frac{n}{ |i: \rank(b_i,\bvec_{-i})\geq t|}-1\right)\leq \sum_{t=1}^{m} n- \sum_{t=1}^{m}  |i: \rank(b_i,\bvec_{-i})=t|= n\cdot m-n
\end{equation*}
Thus as $n\rightarrow \infty$ the approximation to the optimal total contribution achievable with $m$ badges grows as $O\left(\frac{\log(n)}{m-1}\right)$.
\end{example}


%

The previous examples show the limitations of absolute threshold mechanisms with a single threshold. In section \ref{sec:many-badges} in the appendix, we show how these lower bound examples can be circumvented by using more than one threshold for concave status functions and by using more than a constant, but still small, number of badges for convex status. 


\section{Approximation with Leaderboards}\label{sec:complete-ranking}

In this section we explore the approximation power of the leaderboard mechanism. Recall that the leaderboard mechanism, as defined in section \ref{sec:model}, assigns each user a unique badge based only on the rank of his contribution $b_i$ within the entire vector of contribution $\bvec$. We prove that the leaderboard mechanism is a good approximation to the optimal mechanism when the status function is convex. In contrast, for concave status functions the leaderboard mechanism may be an unboundedly bad approximation.\footnote{An alternative way of bypassing this high inefficiency, is instead of ranking all players, simply rank a top percentile. By setting the top percentile to approximate the monopoly quantile, then intuitively as the number of players grows large, the interim allocation of this mechanism, converges to that of a leaderboard with a cutoff. In this section we focus on the completely prior-free mechanism of ranking everyone.}

\begin{theorem}\label{thm:leaderboard-apx}
For any convex status function, the leaderboard mechanism is a $2$-approximation to the optimal mechanism.
\end{theorem}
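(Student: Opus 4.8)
The plan is to compute the leaderboard's expected per-user contribution exactly, compare it term-by-term with $\opt$ from Equation~\eqref{eqn:optimal-user-contr}, and bound the gap. First I would observe that the leaderboard ranks \emph{every} user, so in its symmetric monotone BNE a user with quantile $\quant$ is beaten only by the opponents of lower quantile; since the count of such opponents is $\mathrm{Binomial}(n-1,\quant)$ exactly as in the derivation of Proposition~\ref{prop:badge-interim}, the interim status allocation is $\statn{n}{\quant}$ for \emph{all} $\quant\in[0,1]$ (not merely for $\quant\le\tqstar$). Applying Lemma~\ref{lem:myerson-opt} to this all-pay mechanism, the expected per-user contribution is
\[
\apx = \int_0^1 R'(\quant)\cdot \statn{n}{\quant}\, d\quant .
\]
Comparing with $\opt=\int_0^{\tqstar}R'(\quant)\statn{n}{\quant}\,d\quant$, the two integrands agree on $[0,\tqstar]$ and differ only over the negative-virtual-value tail, so $\apx=\opt-L$ with $L=\int_{\tqstar}^1\big(-R'(\quant)\big)\statn{n}{\quant}\,d\quant\ge 0$ (here $-R'\ge 0$ and $\statn{n}{\cdot}\ge 0$ on $[\tqstar,1]$). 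Hence the theorem reduces to the single inequality $L\le\tfrac12\opt$.

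Next I would lower-bound $\opt$ by a clean quantity: noting that $R(\tqstar)\statn{n}{\tqstar}$ is exactly the contribution of the single-threshold mechanism at $\tq{}=\tqstar$ from Equation~\eqref{singleRev} (a feasible badge mechanism, hence a lower bound on $\opt$), or equivalently by one integration by parts on $[0,\tqstar]$, gives $\opt\ge R(\tqstar)\cdot\statn{n}{\tqstar}$. It therefore suffices to prove the sharper, self-contained claim $L\le\tfrac12 R(\tqstar)\statn{n}{\tqstar}$.

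The crux is this last inequality, and it is where both convexity of the status function and regularity of $F$ enter. Writing $g(\quant)=\statn{n}{\quant}$, the Bernstein approximation inherits convexity, monotonicity and nonnegativity from $\stat{\cdot}$, so $g$ is convex and decreasing with $g(1)=0$; on $[\tqstar,1]$ convexity places $g$ below its chord, $g(\quant)\le g(\tqstar)\tfrac{1-\quant}{1-\tqstar}$. Substituting this into $L$ and integrating $\int_{\tqstar}^1(-R'(\quant))(1-\quant)\,d\quant$ by parts (using $R(1)=0$) yields
\[
L \le g(\tqstar)\Big(R(\tqstar) - \tfrac{1}{1-\tqstar}\int_{\tqstar}^1 R(\quant)\,d\quant\Big).
\]
Finally, regularity makes $R$ concave, so with $R(1)=0$ the chord bound $R(\quant)\ge R(\tqstar)\tfrac{1-\quant}{1-\tqstar}$ holds on $[\tqstar,1]$, giving $\int_{\tqstar}^1 R\ge\tfrac{1-\tqstar}{2}R(\tqstar)$; plugging this in collapses the parenthesis to $\tfrac12 R(\tqstar)$ and delivers $L\le\tfrac12 R(\tqstar)g(\tqstar)$, as required.

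The main obstacle I anticipate is the analytic inequality of the last paragraph rather than the setup: the factor of exactly $2$ is not slack but emerges precisely from two chord estimates pointed in opposite directions---the convexity chord on $g$ (losing a factor $\tfrac12$ in $\int g$) and the concavity chord on $R$ (gaining a factor $\tfrac12$ in $\int R$)---whose constants multiply to the claimed bound. I would also want to verify the one structural fact used for free, namely that the leaderboard's symmetric BNE is monotone so that the interim allocation is genuinely $\statn{n}{\cdot}$ across the \emph{entire} quantile range; it is exactly this ranking of the unprofitable tail that both defines the loss $L$ and, through convexity of $\stat{\cdot}$, keeps it below half of $\opt$.
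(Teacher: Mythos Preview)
Your proposal is correct and follows essentially the same route as the paper: both arguments express $\apx=\opt-L$ with $L=\int_{\tqstar}^1(-R')\statn{n}{\cdot}$, bound $L$ via the convexity chord $\statn{n}{\quant}\le \statn{n}{\tqstar}\tfrac{1-\quant}{1-\tqstar}$, integrate by parts, and then use the concavity chord $\int_{\tqstar}^1 R\ge\tfrac{1-\tqstar}{2}R(\tqstar)$ to get $L\le\tfrac12 R(\tqstar)\statn{n}{\tqstar}$; the only difference is cosmetic—you close with the clean observation $\opt\ge R(\tqstar)\statn{n}{\tqstar}$ (from one integration by parts and $\statnpr{n}{\cdot}\le 0$), whereas the paper re-expands $\opt$ and halves the nonnegative remainder, which amounts to the same inequality.
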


On the other hand the following example shows that for concave status functions, the total contribution achieved by the leaderboard mechanism, can be arbitrarily worse than the optimal total contribution. 
This demonstrates the necessity of having some sort of cut-off, below which all agents are assigned the zero badge. 


\begin{example}
Consider the concave status function of $S(t)=(1-t)^{\alpha}$ with a uniform $[0,1]$ distribution of abilities, i.e., $F(x)=x$.  We consider $\alpha\rightarrow 0$ such that the status function is an almost constant function.  Intuitively, this means that a player is very easily satisfied by being simply ranked above a small portion of the population and any other portion yields almost no extra status value. We also consider the number of players $n\rightarrow\infty$ implying that $\statn{n}{\quant}\rightarrow (1-\quant)^{\alpha}$. In this setting, the revenue function $R(\quant)=\quant\cdot F^{-1}(1-\quant)=\quant(1-\quant)$ and the expected per-player contribution of the leaderboard mechanism converges to $0$:
\begin{align*}
\lim_{\alpha\rightarrow0}\lim_{n\rightarrow\infty}\apx =~& \lim_{\alpha\rightarrow0}\int_{0}^{1} R'(\quant)\cdot \lim_{n\rightarrow\infty}\statn{n}{\quant} d\quant
=~ \lim_{\alpha\rightarrow0}\int_0^{1} (1-2\cdot \quant)\cdot (1-q)^{\alpha}d\quant\\
=~& \lim_{\alpha\rightarrow0}\frac{\alpha }{\alpha ^2+3 \alpha +2}\rightarrow 0
\end{align*}
where the interchange of the limit and the integration is justified by the uniform convergence of Bernstein polynomials.  On the other hand the optimal mechanism is a leaderboard with contribution threshold of $\tqstar=1/2$ yielding an expected per-player contribution which converges to a constant of $1/4$:
\begin{align*}
\lim_{\alpha\rightarrow0}\lim_{n\rightarrow\infty}\opt =~& \lim_{\alpha\rightarrow0}\int_{0}^{1/2} R'(\quant)\cdot \lim_{n\rightarrow\infty}\statn{n}{\quant}d\quant
=~\lim_{\alpha\rightarrow0}\int_0^{1/2} (1-2\cdot \quant)\cdot (1-q)^{\alpha}d\quant\\
=~& \lim_{\alpha\rightarrow0}\frac{\alpha +2^{-\alpha -1}}{\alpha ^2+3 \alpha +2}\rightarrow \frac{1}{4}.
\end{align*}
\end{example}

\section{Robustness to Treatment of Equal Status Opponents}\label{sec:tie-breaking}

Implicit in our utility model so far is that users treat people in the same class as them, as if they were losing to them, since
equally ranked opponents affect a user's utility in an equal manner as opponents ranked strictly higher. How do our results change qualitatively if instead users treated ties differently? We show that for the case of linear status, our main approximation results carry over irrespective of the way that people treat ties. Specifically, we show that the median badge mechanism is a $4$-approximation, for any tie-breaking rule and in fact is implemented at equilibrium by the same contribution threshold. Moreover, the prior-free leaderboard mechanism, is always a $2$-approximation to the optimal mechanism. We view this as an extra robustness property of our simple vs optimal results.

More formally, let $t_e$ be the proportion of opponents that have the same status class as user $i$ and $t_g$ the proportion that have strictly higher status class. 
Then the status value of a user is simply:
\begin{equation}
S(t_e,t_g) = 1-t_e - \beta\cdot t_g
\end{equation}
where $\beta\in [0,1]$. Intuitively, $\beta$ can be seen as the probability of losing against an equally high opponent. Our initial model corresponds to the case 
of $\beta=1$.  Other than that, the utility of a user is the same as defined in Section \ref{sec:model}, i.e. if $t_e^i(b), t_g^i(b)$ is the corresponding proportions for user $i$ under a bid profile $b$, then:
\begin{equation}
u_i(b;\quant[i]) = S(t_e^i(b),t_g^i(b))-\frac{b_i}{\val(\quant[i])}
\end{equation}

\emph{Interpretation of tie-breaking probability in Q\&A forums.} The linear status model with arbitrary tie-breaking probability is rather meaningful in the context of Q\&A forum where status can be interpreted as argumentative power. We can think of status utility as the probability that a user wins in an argument against a random other user from the system. In most such systems, the credibility derived from having a badge plays a quintessential role in deciding who wins the argument (e.g. voted best answer, presented at the top of the page). Our status utility assumes that a user with a strictly higher badge status will win an argument. Additionally, the tie-breaking rule in the status utility function corresponds to the probability that a user wins against an opponent with equal status. Our model so far assumed that in such battles, even if
there is a winner, his victory is Pyrrhic, and no user ends up getting any utility. The alternative model corresponds to 
each user winning with some probability or equivalently deriving lower utility when winning.

\emph{Single absolute threshold approximation.} We first show that the \emph{median badge mechanism} (i.e. an absolute threshold mechanism such that at equilibrium half of the population gets the high badge), achieves a $4$ approximation to the contribution of the optimal mechanism irrespective of the value of $\beta$ and the median mechanism is implemented by setting a contribution threshold of $\frac{v(1/2)}{2}$, again irrespective of $\beta$. We point out that the optimal mechanism changes as $\beta$ varies. We explore the structure of the optimal mechanism in the Appendix (Section \ref{sec:appendix-tie-breaking}), where we point that for any value of $\beta$ other than $0$, $1/2$, and $1$, the optimal mechanism has a very complex structure and doesn't correspond for instance to some ranking mechanism.
\begin{theorem}\label{thm:median-tie-breaking} For any $\beta\in [0,1]$, the \emph{median badge mechanism} achieves a $4$ approximation to the contribution of the optimal mechanism. The \emph{median badge mechanism} is implemented at the unique symmetric monotone equilibrium of the absolute threshold mechanism defined by a contribution threshold of $\frac{v(1/2)}{2}$, when $\beta\geq 1/2$ and at some equilibrium for $\beta<1/2$.
\end{theorem}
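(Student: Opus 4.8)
The plan is to prove the implementation claim and the approximation bound separately, adapting the single-threshold analysis of Section~\ref{sec:single} to the tie-breaking status value. First I would note that, as in any single-threshold mechanism, only the contributions $0$ and $\theta$ are undominated, so a symmetric monotone equilibrium is pinned down by a single quantile threshold $\kappa$: users with $q\le\kappa$ contribute $\theta$ and earn the top badge, and the rest earn badge $0$. Since the status value is affine in the proportions $(t_e,t_g)$ and quantiles are uniform, the two interim status values follow directly from linearity of expectation, with no recourse to the Bernstein approximation: a top-badge user faces expected strictly-higher proportion $0$ and same-class proportion $\kappa$, giving interim status $1-\beta\kappa$, while a bottom-badge user faces expected strictly-higher proportion $\kappa$ and same-class proportion $1-\kappa$, giving interim status $(1-\kappa)(1-\beta)$.

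Next I would impose the indifference condition at the threshold quantile $q=\kappa$: the threshold user must be willing to pay exactly $\theta$ for the status gain of the top badge over badge $0$, i.e.\ $\theta=\val(\kappa)\big[(1-\beta\kappa)-(1-\kappa)(1-\beta)\big]=\val(\kappa)\,(\beta+\kappa-2\beta\kappa)$. Evaluating at $\kappa=\tfrac12$, the $\beta$-dependent terms cancel and leave $\theta=\val(1/2)/2$, exactly the claimed threshold and independent of $\beta$. To confirm $\kappa=\tfrac12$ is a genuine equilibrium for every $\beta$, I would check the single-crossing condition: with opponents fixed at threshold $\tfrac12$, the gain a user of quantile $q$ obtains from contributing is proportional to $\val(q)-\val(1/2)$, strictly positive for $q<\tfrac12$ and strictly negative for $q>\tfrac12$, so precisely the high-ability half contributes.

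For uniqueness I would analyze the fixed-point condition $g(\kappa)=\val(1/2)/2$, where $g(\kappa)=\val(\kappa)\,h(\kappa)$ with $h(\kappa)=\beta+(1-2\beta)\kappa$, whose roots are exactly the symmetric monotone equilibria. When $\beta\ge\tfrac12$ the factor $h$ is non-negative and non-increasing while $\val$ is positive and strictly decreasing, so $g$ is strictly decreasing and $\kappa=\tfrac12$ is its only root, yielding the claimed uniqueness. When $\beta<\tfrac12$ the factor $h$ is increasing, so $g$ is a product of an increasing and a decreasing factor and need not be monotone; hence $\kappa=\tfrac12$ is guaranteed only to be \emph{some} equilibrium. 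I expect this monotonicity analysis, and in particular pinning down exactly where uniqueness breaks, to be the most delicate step.

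Finally, for the $4$-approximation I would evaluate the expected contribution at the $\kappa=\tfrac12$ equilibrium, $\apx=\kappa\cdot\theta=\tfrac12\cdot\tfrac{\val(1/2)}{2}=\tfrac{R(1/2)}{2}$, using $R(1/2)=\tfrac12\val(1/2)$. For the upper bound I would use that $0\le S(t_e,t_g)\le 1$ for every $\beta\in[0,1]$, so any implementable mechanism has interim allocation $\xq{q}\le 1$; Lemma~\ref{lem:myerson-opt} then gives $\opt=\int_0^1 R'(q)\,\xq{q}\,dq\le\int_0^{\tqstar}R'(q)\,dq=R(\tqstar)$, using $\xq{q}\le1$ on $[0,\tqstar]$ and discarding the non-positive integrand on $(\tqstar,1]$. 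Since regularity makes $R$ concave with $R(0)=R(1)=0$, expressing $\tfrac12$ as a convex combination of $\{0,\tqstar\}$ when $\tqstar\ge\tfrac12$ and of $\{\tqstar,1\}$ when $\tqstar<\tfrac12$ yields $R(1/2)\ge\tfrac12 R(\tqstar)$ in both cases. Combining, $\apx=\tfrac{R(1/2)}{2}\ge\tfrac14 R(\tqstar)\ge\tfrac14\opt$, the desired bound, uniformly over $\beta$.
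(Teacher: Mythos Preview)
Your proposal is correct and follows essentially the same route as the paper: the same indifference condition $\theta=\val(\kappa)\big(\beta+(1-2\beta)\kappa\big)$, the same monotonicity argument for uniqueness when $\beta\ge\tfrac12$, and the same chain $\apx=R(1/2)/2\ge R(\tqstar)/4\ge\opt/4$ via the bound $S\le 1$ on the interim allocation. You add a couple of details the paper leaves implicit (the single-crossing check that $\kappa=\tfrac12$ is a genuine best-response profile, and the explicit convex-combination argument for $R(1/2)\ge\tfrac12 R(\tqstar)$), but nothing substantively different.
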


\emph{Approximation with leaderboards.} Despite the complex structure of the optimal mechanism for arbitrary $\beta$, we show that for any such $\beta$, the mechanism that complete ranks all users in decreasing order of contribution
(breaking ties uniformly at random), always achieves a $2$-approximation to the optimal direct mechanism at the unique equilibrium.
\begin{theorem}\label{thm:complete-tie-breaking}
For any $\beta\in [0,1]$, the badge mechanism that assigns a distinct badge to each user in decreasing order of contribution, is always a $2$-approximation to the total contribution of the optimal mechanism.
\end{theorem}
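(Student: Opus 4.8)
The plan is to collapse the theorem into a one-line comparison of two scalars and then bound their ratio using only the virtual-surplus identity of Lemma~\ref{lem:myerson-opt} and regularity of the ability distribution. Write $\opt$ and $\apx$ for the per-user expected contributions of the optimal mechanism and of the leaderboard; since both total contributions are $n$ times these, it suffices to show $\opt\le 2\,\apx$. Working in quantile space, every mechanism's expected contribution equals its expected virtual surplus $\int_0^1 R'(\quant)\,\hat{x}(\quant)\,d\quant$, where $\hat{x}(\cdot)$ is the induced interim status allocation at equilibrium. The whole argument then takes place at the level of interim allocations, and --- crucially --- I will never need to describe the optimal mechanism itself.

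First I would pin down $\apx$. The decisive observation is that the leaderboard awards a distinct badge to every user, so no two users ever lie in the same status class: $t_e\equiv 0$ on every realization, and therefore the tie-breaking parameter $\beta$ is completely irrelevant to this mechanism. A user's status is then just $1$ minus the fraction $t_g$ of strictly-higher-ranked opponents, exactly as in the original linear model. Since a user of quantile $\quant$ has an expected $\quant$-fraction of higher-ability (hence higher-ranked) opponents, the interim allocation is the exact linear function $\hat{x}(\quant)=1-\quant$, and the payment identity of Lemma~\ref{lem:myerson-opt} gives, using $R(0)=R(1)=0$,
\[
\apx=\int_0^1 R'(\quant)(1-\quant)\,d\quant=\int_0^1 R(\quant)\,d\quant .
\]
Thus $\apx$ is a $\beta$-independent benchmark equal to the area under the revenue curve.

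Next I would upper-bound $\opt$ \emph{without} characterizing the optimal mechanism, which is exactly where the difficulty of general $\beta$ would otherwise bite. The only property I need is that for every $\beta\in[0,1]$ each status value lies in $[0,1]$: the aggregate penalty inflicted by equal-or-higher opponents is non-negative and never exceeds the fraction of such opponents, which is at most $1$. Letting $\hat{x}^*$ denote the optimal interim allocation, $0\le\hat{x}^*\le1$; and since regularity makes $R'(\quant)$ non-increasing with a single sign change at the monopoly quantile $\tqstar$, the weighted integral is maximized by placing full status precisely on the positive-virtual-value region:
\[
\opt=\int_0^1 R'(\quant)\,\hat{x}^*(\quant)\,d\quant\le\int_0^{\tqstar} R'(\quant)\,d\quant=R(\tqstar).
\]
Finally I invoke regularity once more: $R$ is concave with $R(0)=R(1)=0$, so its graph dominates the tent joining $(0,0)$, $(\tqstar,R(\tqstar))$ and $(1,0)$, whose area is $\tfrac12 R(\tqstar)$; hence $\int_0^1 R(\quant)\,d\quant\ge\tfrac12 R(\tqstar)$, i.e. $R(\tqstar)\le 2\,\apx$. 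Chaining the two displays gives $\opt\le R(\tqstar)\le 2\,\apx$, which is the claim.

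The step I expect to be the real crux is the upper bound on $\opt$, and the point of the plan is to \emph{avoid} the obstacle rather than meet it head-on. For $\beta\notin\{0,\tfrac12,1\}$ the optimal mechanism has no clean form (as remarked before Theorem~\ref{thm:median-tie-breaking}), so any attempt to evaluate $\opt$ exactly, or to dominate the leaderboard's allocation pointwise by the optimum, collides with the messy trade-off between coarsening classes (which raises top users' status but triggers the $\beta$-penalty) and ranking finely. The envelope bound $\opt\le R(\tqstar)$ sidesteps all of this because it is uniform in $\beta$ --- it uses nothing about the mechanism beyond $S\in[0,1]$ --- and concavity of the revenue curve is precisely what keeps this envelope within a factor $2$ of the $\beta$-free leaderboard value $\int_0^1 R$. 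The only subtlety to double-check is that the leaderboard genuinely retains its (unique) equilibrium and contribution across all $\beta$, which it does for the structural reason above: being tie-free, it is insensitive to how ties are valued.
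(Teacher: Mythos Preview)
Your proposal is correct and follows essentially the same route as the paper: compute the leaderboard's interim allocation as $1-\quant$ (tie-free, hence $\beta$-independent), obtain $\apx=\int_0^1 R(\quant)\,d\quant$, upper-bound $\opt\le R(\tqstar)$ using only $S\in[0,1]$, and close with the concavity bound $\int_0^1 R\ge\tfrac12 R(\tqstar)$. The paper's proof is terser --- it delegates the last two inequalities to the proofs of Theorems~\ref{cor:median-linear} and~\ref{thm:median-tie-breaking} --- but the logic is identical.
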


\emph{Structure of optimal mechanism.} Observe that the structure of the optimal mechanism changes as the tie-breaking rule varies. Specifically, Lemma \ref{lem:optimal-surplus}
that characterizes the virtual surplus maximizing allocation of badges is no longer valid if $\beta\neq 1$. By the equivalence of revenue maximization and virtual surplus maximization
discussed in Section \ref{sec:optimal}, to characterize the optimal mechanism it suffices to characterize the virtual surplus maximizing allocation. In this Appendix, we show that the virtual surplus maximizing allocation implies a nice structure
of the optimal mechanism only for the case of $\beta=0$, $1/2$ or  $1$, and for other values of $\beta$, the ex-post virtual
surplus maximizing allocation is a complex function of the specific instantiation of player quantiles. This complexity of the optimal badge mechanism, render the extension of our approximate results in the section, even more compelling.

\section{Conclusion}

In this paper, we studied the design of badge mechanisms when the valuation for a badge is driven by its ability to impart social status to its owners. The shape of the status value functions dictates the necessary features of any good mechanism. When the status valuations are concave, it is necessary to use coarse partitions and group all the low ability users into the same badge. Under convexity, is is necessary to use fine portioning and separate the high ability users into their own badges. 

This paper is a first step towards combining the game-theoretic study of virtual incentive mechanisms to more realistic utility models. The next step in this agenda is to incorporate a ``warm glow'' component, that is the implicit value users get from contributing content, into the utility function, and to understand how this modification influences the design of optimal incentive mechanisms.  


\bibliographystyle{plain}
\bibliography{references}

\appendix
\section{Appendix}
\subsection{Proofs from Section \ref{sec:optimal}}
\begin*{}{LEMMA \ref{lem:optimal-surplus}}
\ \ \em Let $\quant[1]\leq \ldots \leq \quant[k]\leq \tqstar< \quant[k+1]\leq \ldots \leq \quant[n]$ be a profile of quantiles. Then the optimal virtual surplus is achieved by assigning a distinct decreasing badge to all users $\{1,\ldots,k\}$ with non-negative virtual ability and badge $0$, to all negative virtual ability users $\{k+1,\ldots,n\}$, i.e. $r_1=n, r_2=n-1,\ldots, r_k=n-k+1$, and $r_{k+1}=\ldots=r_n=0$. 
\end*{}
\\ \\ 
\begin{proofof}{Lemma \ref{lem:optimal-surplus}}
The statement follows by the following arguments: first it is easy to see that the allocation of badges should be monotone non-decreasing in the virtual ability or equivalently non-increasing in quantile, since if $R'(\quant[i])>R'(\quant[j])$ (i.e. $\quant[i]<\quant[j]$) and $r_i<r_j$ then we can increase virtual welfare by swapping the status class of player $i$ and player $j$ (this wouldn't affect the status allocation of the remaining players). Therefore, it holds that $\rank_1\geq \rank_2\geq\ldots,\geq \rank_n$ and it remains to show that $\rank_1>\ldots>\rank_k>0$ and that $\rank_{k+1}=\ldots=\rank_n=0$.

If for some $i<k$ it holds that $\rank_i=\rank_{i+1}=\ldots=\rank_{i+t}$, then by discriminating player $i$ above the remaining players, will increase virtual welfare. More concretely, by setting $\rank_j' = \rank_j-1$ for all $j>i$  (for a moment let's allow for negative badges, since at the end we can always shift the badge numbers), then the satisfaction of all players $j>i$ doesn't change since the number of people that have badge at least as high as them remains the same. Additionally, the status allocation of player $i$ strictly increases, since the number of people ranked at least as high as him, strictly decreased. A recursive application of this reasoning implies that $\rank_1>\ldots>\rank_k>\rank_{k+1}\geq \ldots \geq \rank_n$.

Now we show that it must be that $\rank_{k+1}=\ldots =\rank_n$. By such a grouping, the status allocation of all negative virtual ability players is $0$, and therefore their negative virtual ability is not accounted in the virtual welfare. Moreover, by grouping together negative virtual ability players, the status allocation of all non-negative virtual ability players is unaffected.
\end{proofof}
\subsection{Proofs from Section \ref{sec:single}}

\begin*{}{THEOREM \ref{thm:uniqueness} }
\ \ \em Any absolute threshold mechanism with contribution thresholds $\tbvec=(\tb{1},\ldots,\tb{m})$ has a unique symmetric BNE $b(\cdot)$ characterized by a vector of quantile thresholds $\tqvec=(\tq{1},\ldots,\tq{m})$ with $\tq{1} \geq \tq{2} ... \geq \tq{m}$ such that a user with ability quantile $q_i$ will make a contribution of: 
\begin{eqnarray*}
b(\quant[i]) = 
\begin{cases}
0 & \quant[i] > \tq{1} \\
\theta_t & \quant[i] \in (\tq{t+1},\tq{t}] \\
\theta_m & \quant[i] \leq \tq{m} 
\end{cases}
\end{eqnarray*}
These quantile thresholds can be computed by a system of $m$ equations.


Furthermore, for any vector of quantile thresholds $\tqvec=(\tq{1},\ldots, \tq{m})$ there exists a vector of contribution thresholds $\tbvec$, characterized by:
\begin{align}\label{eqn:effort-to-value}
\forall t\in[1,\ldots, m]: \tb{t} =\sum_{j=1}^t \val(\tq{j})\cdot \left(\statn{n}{\tq{j}}-\statn{n}{\tq{j-1}}\right),
\end{align}
under which the unique symmetric BNE implements $\tqvec$. 
\end*{}
\\ \\
\begin{proofof}{Theorem \ref{thm:uniqueness}}
We start by observing that if a player gets status class $r_i$, her output should be exactly the threshold to win that badge, $\tb{r_i}$,  because producing more output would cost more effort and would not increase her value. Additionally, it is easy to see that the equilibrium mapping would be monotone in quantile, i.e. if $\quant[1]>\quant[2]$ then $b(\quant[1])\leq b(\quant[2])$. In other words if a player with quantile $\quant[1]$ bids $\tb{r_1}$ and with value $\quant[2]<\quant[1]$ he bids $\tb{r_2}$ then it must be that $r_1\leq r_2$. \footnote{Suppose the contrary. For simplicity, denote with $x_{r}$ the expected status that a player gets from bidding $\tb{r}$, assuming the rest of the players follow strategy $b(\cdot)$. Since, $r_1>r_2$ we have $\tb{r_1}> \tb{r_2}$. Since $b(\quant[1])$ is an equilibrium for a player with value $\quant[1]$, it must be that $\val(\quant[1]) (x_{r_1}-x_{r_2})\geq \tb{r_1}-\tb{r_2}>0$. Thus $x_{r_1}>x_{r_2}$ and since $\val(\quant[2])>\val(\quant[1])$ we get: $\val(\quant[2])(x_{r_1}-x_{r_2})>\val(\quant[1])(x_{r_1}-x_{r_2})\geq  \tb{r_1}-\tb{r_2}$. But the latter implies, $\val(\quant[2])\cdot  x_{r_1} -\tb{r_1}>\val(\quant[2])\cdot x_{r_2}-\tb{r_2}$ and 
therefore $b(\val[2])$ cannot be an equilibrium for a player with quantile $\quant[2]$.}

Since the equilibrium mapping is a monotone step function of quantile, it is defined by a set of thresholds in the quantile space $\tq{1},\ldots,\tq{p}$, for some $p\leq m$, such that if player $i$ has quantile $\quant[i]\in (\tq{t+1},\tq{t}]$ then he produces output $b(\quant[i])=\tb{t}$. If $\quant[i] >\tq{1}$ then $b(\quant[i])=0$ and if $\quant[i] \leq \tq{p}$ then
$b(\quant[i])=\tb{p}$. For notational convenience we will denote with $\tq{0}=1$ and $\tq{p+1}=0$. Observe that it is not necessarily true that $p=m$, since some contribution thresholds might be too high.

To characterize the BNE, it remains to compute those quantile thresholds and show that they are unique. A player's status value is a function of the proportion of other players with a weakly better badge. A player with quantile $\quant[i]\in [\tq{t+1},\tq{t}]$ earns the badge associated with quantile $\tq{t}$; thus, because the equilibrium is a monotone step function, any player that has a quantile less than $\tq{t}$ will earn a weakly better badge than player i. By definition, a player has a lower quantile than $\tq{t}$ with probability $\tq{t}$. This allows us to compute the interim status value of player $i$ with quantile $\quant[i]\in [\tq{t+1},\tq{t}]$ as: 
\begin{equation}
\xq{\quant[i]} =  \statn{n}{\tq{t}}
\end{equation}
where $\statn{n}{\cdot}$ is given by Equation \eqref{eqn:expectedstatus}. 

By the payment identity of Lemma \ref{lem:myerson-bne}, for the vector of quantiles $\tqvec$ to be an equilibrium they must satisfy the following equation:
\begin{align}\label{eqn:effort-to-value}
\forall t\in[1,\ldots, p]: \tb{t} =\sum_{j=1}^t \val(\tq{j})\cdot \left(\statn{n}{\tq{j}}-\statn{n}{\tq{j-1}}\right)
\end{align}
This relationship is depicted in Figure \ref{fig:interim-characterization}. 
\begin{figure}[htpb]
\centering
\subfigure{\includegraphics[scale=0.7]{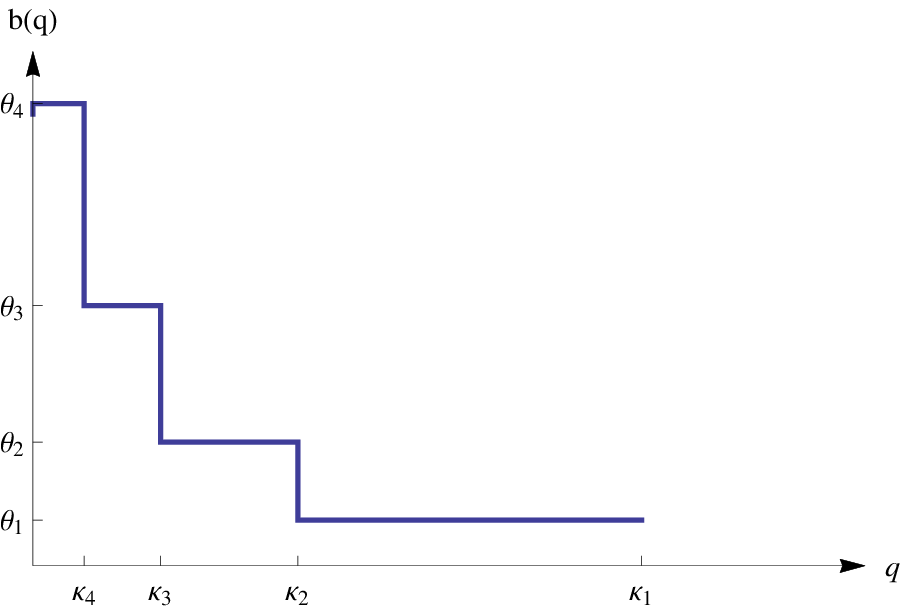}}
\subfigure{\includegraphics[scale=0.7]{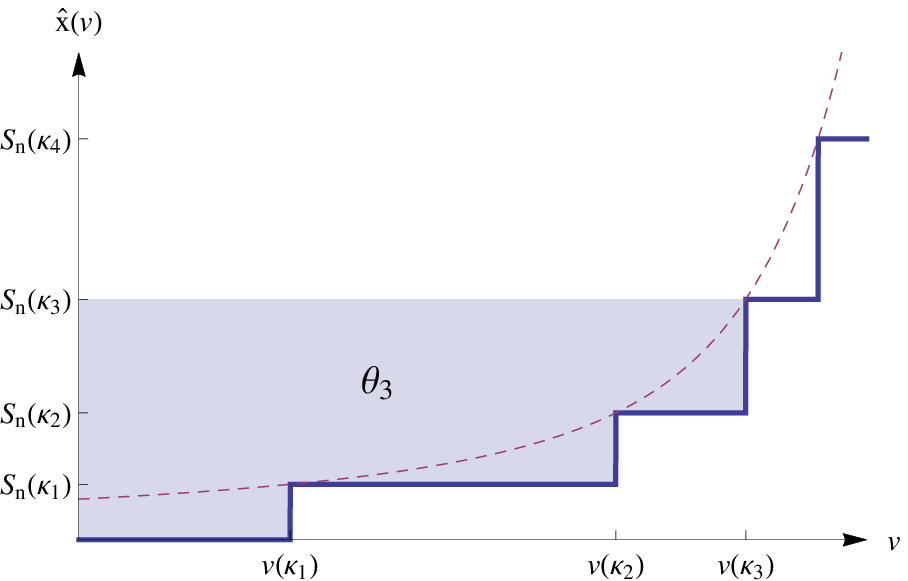}}
\caption{Equilibrium bid function (left) and relation between interim status allocation and contribution thresholds at equilibrium (right).}\label{fig:interim-characterization}
\end{figure}
Equivalently, the above set of conditions can be re-written as:
\begin{align}\label{eqn:value-to-effort}
\forall t\in [1,\ldots,p]:~& \val(\tq{t})\cdot \left(\statn{n}{\tq{t}}-\statn{n}{\tq{t-1}}\right)=\tb{t}-\tb{t-1}
\end{align} 
This set of equalities has an intuitive interpretation as saying that the players with quantiles at the boundary of two badges should be indifferent between getting either of the two badges. Additionally, to ensure that the latter is an equilibrium we also need to make sure that if $p<m$, then the highest ability player doesn't prefer being on badge $p+1$ alone, rather than being on badge $p$:
\begin{align}\label{eqn:value-to-effort-2}
\text{if } p<m:~& \val(0)\cdot \left(\statn{n}{0}-\statn{n}{\tq{p}}\right)\leq \tb{p+1}-\tb{p}
\end{align}
To show uniqueness of the symmetric equilibrium we simply need to show that the above set of conditions have a unique solution.

\begin{lemma}\label{lem:uniqueness} There exists a unique $p\leq m$ and a unique vector $\tqvec=(\tq{1},\ldots,\tq{p})$, that satisfies the system of Equations \eqref{eqn:value-to-effort} and \eqref{eqn:value-to-effort-2}.
\end{lemma}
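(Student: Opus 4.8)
The plan is to solve the system \eqref{eqn:value-to-effort}--\eqref{eqn:value-to-effort-2} by a greedy recursion that determines the quantile thresholds one at a time, starting from the convention $\tq{0}=1$ (where $\statn{n}{\tq{0}}=\statn{n}{1}=0$). At step $t$, having already fixed $\tq{t-1}$, I would introduce the single-variable function
$$
h_{t-1}(\quant) = \val(\quant)\cdot\left(\statn{n}{\quant}-\statn{n}{\tq{t-1}}\right)
$$
on the interval $[0,\tq{t-1}]$, so that equation \eqref{eqn:value-to-effort} for index $t$ reads exactly $h_{t-1}(\tq{t}) = \tb{t}-\tb{t-1}$. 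The strategy is to show that this scalar equation pins down $\tq{t}$ uniquely whenever a solution exists, so that the whole vector $\tqvec$ and the length $p$ are forced by induction.

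The crux is to establish that $h_{t-1}$ is \emph{strictly decreasing} on $[0,\tq{t-1}]$. Writing $c=\statn{n}{\tq{t-1}}\ge 0$, one computes
$$
h_{t-1}'(\quant) = \val'(\quant)\cdot\left(\statn{n}{\quant}-c\right) + \val(\quant)\cdot \statnpr{n}{\quant}.
$$
On the restricted domain $\quant\le \tq{t-1}$ we have $\statn{n}{\quant}\ge \statn{n}{\tq{t-1}}=c$, since $\statn{n}{\cdot}$ is strictly decreasing; because $\val'<0$ the first term is $\le 0$, and because $\val>0$ and $\statnpr{n}{\quant}<0$ the second term is strictly negative, so $h_{t-1}'<0$ throughout. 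This is the main obstacle and the key point: the argument genuinely needs the confinement to $[0,\tq{t-1}]$, since on the full interval $[0,1]$ the function $\val(\cdot)\cdot(\statn{n}{\cdot}-c)$ need not be monotone, and it is precisely the monotonicity of the equilibrium established earlier (which forces $\tq{t}\le\tq{t-1}$) that places us in the region where the sign of $h_{t-1}'$ is controlled.

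Given strict monotonicity I would read off the root structure directly: $h_{t-1}(\tq{t-1})=0$ and $h_{t-1}(0)=\val(0)\left(\statn{n}{0}-\statn{n}{\tq{t-1}}\right)>0$, so for a target $\tb{t}-\tb{t-1}\in\left(0,h_{t-1}(0)\right]$ there is a unique $\tq{t}\in[0,\tq{t-1})$ solving \eqref{eqn:value-to-effort}, while if $\tb{t}-\tb{t-1}>h_{t-1}(0)$ no admissible $\tq{t}$ exists. The latter inequality is exactly the failure of a further step: it says that even the quantile-$0$ user would rather remain at badge $t-1$, so the recursion terminates with $p=t-1$, and in that case \eqref{eqn:value-to-effort-2} holds. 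Uniqueness then follows by induction on $t$: $\tq{0}=1$ is fixed, and if $\tq{1},\ldots,\tq{t-1}$ are uniquely determined then so is $\tq{t}$ (by uniqueness of the root of $h_{t-1}$), and the index at which the process stops — hence $p$ — is likewise determined. Degenerate cases with $\tb{t}=\tb{t-1}$ collapse to $\tq{t}=\tq{t-1}$ (an empty badge class) and do not affect the conclusion.
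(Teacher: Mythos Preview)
Your proposal is correct and follows essentially the same recursive approach as the paper: determine $\tq{1},\tq{2},\ldots$ one at a time by solving the scalar equation $\val(\tq{t})\left(\statn{n}{\tq{t}}-\statn{n}{\tq{t-1}}\right)=\tb{t}-\tb{t-1}$, using monotonicity of the left-hand side to force a unique root and terminating with $p=t-1$ when no root exists. Your treatment is in fact more careful than the paper's, which asserts monotonicity ``for similar reasons'' without explicitly noting (as you do) that the derivative argument requires restricting to $[0,\tq{t-1}]$.
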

\begin{proof}
Given a profile of badge thresholds $\tbvec$, we show recursively that there is a unique set of quantile thresholds which satisfies the set of equations. For $t=1$, Equation \eqref{eqn:value-to-effort} becomes: $\tb{1} = \val(\tq{1})\cdot \statn{n}{\tq{1}}$. Observe that $\val(0)\cdot \statn{n}{0}=\bar{\val}\cdot \stat{0}$, $\val(1)\cdot \statn{n}{1}=0$, $\val(x)\cdot \statn{n}{x}$ is continuous decreasing. If $\tb{1}<\bar{\val}\cdot \stat{0}$ then a unique solution exists (recall that $\bar{\val}$ is the upper bound of the ability distribution). Otherwise, no player is willing to bid as high as $\tb{1}$ (or any $\tb{t}$ for $t>1$) and the recursion stops with $p=0$. Subsequently, find the solution $\tv{2}$ to the equation: $\tb{2}-\tb{1}= \val(\tq{2}) (\statn{n}{\tq{2}} - \statn{n}{\tq{1}})$. For similar reason, either a unique such solution exists or no player (not even a player with ability $\bar{\val}$) is willing to bid $\tb{2}$ rather than bid $\tb{1}$ and we can stop the recursion, setting $p=1$. Then solve for $\tq{3},\ldots, \tq{p}$ in the same way.
\end{proof}
The latter Lemma completes the proof of the uniqueness and characterization of the equilibrium. The inverse part of the theorem, is trivial based on the previous discussion. 
\end{proofof}

\begin*{ THEOREM \ref{cor:median-linear}.}
\ \ \em When the status function $\stat{\cdot}$ is concave, the expected total contribution of an absolute threshold mechanism with a single quantile threshold $\tq{}=\frac{1}{2}$ is a $4$-approximation to the expected total contribution of the optimal badge mechanism.
\end*{}
\\ \\
\begin{proofof}{Theorem \ref{cor:median-linear}}
Setting a quantile threshold of $\tq{} = 1/2$, by Equation \eqref{singleRev}, yields an expected user contribution of 
$$\apx = R\left(\frac{1}{2}\right)\cdot \statn{n}{\frac{1}{2}}.$$ Our first step is to upper bound the expected user contribution of the optimal mechanism, and then to show that $R(1/2)\cdot \statn{n}{1/2}$ is a 4-approximation of this upper bound. By symmetry, an approximation of the expected user contribution, implies the same approximation to the expected total contribution.

By Equation \ref{eqn:optimal-user-contr} and using the fact that $\statn{n}{\quant}\leq \statn{n}{0}$ and that 
$R'(\quant)\geq 0$ for any $\quant\leq \tqstar$:
\begin{equation*}\label{eqn:linear-optimal-revenue}
\opt =  \int_0^{\tqstar} R'(\quant)\cdot \statn{n}\quant d\quant \leq \statn{n}{0}\int_{0}^{\tqstar} R'(\quant)d\quant \leq 
\statn{n}{0}\cdot R(\tqstar)
\end{equation*}
Now we prove $ R\left(\frac{1}{2}\right)\cdot \statn{n}{\frac{1}{2}} \geq \frac{1}{4}R(\tqstar) \cdot \statn{n}{0} $.  By the concavity of $\statn{n}{\cdot} $ and the fact that $\statn{n}{1}=0$, we get 
\[ \statn{n}{\frac{1}{2}} \geq \frac{1}{2} \statn{n}0 + \frac{1}{2}\statn{n}1 = \frac{1}{2} \statn{n}0 \]
From the concavity of the revenue function $R(\cdot)$ and Jensen's inequality, we get
\[ R\left(\frac{1}{2}\right) \geq \int_0^1 R(\quant)d\quant \geq \frac{1}{2} R(\tqstar) \]
where the first inequality follows from Jensen's inequality and the last inequality follows from concavity and the fact that $R(0)=R(1)=0$ and $R(\tqstar)$ is the maximum. \footnote{The latter is the same property of regular distributions
employed for proving Bulow-Klemperer's result that the revenue of the optimal single item auction with one bidder yields at most the revenue of a second price auction with two i.i.d bidders. See Figure 1 of \cite{Dhangwatnotai2010} or Figure 5.1 in \cite{Hartline2012}} Putting this all together
\[ \apx = R\left(\frac{1}{2}\right) \cdot \statn{n}{\frac{1}{2}} \geq \frac{1}{4} R(\tqstar)\cdot \statn{n}{0} \geq \frac{1}{4} \opt \]
\end{proofof}

\begin*{THEOREM \ref{thm:better-apx}}
\ \ \em When the status value function $\stat{\cdot}$ is concave, the expected total contribution of an absolute threshold mechanism with a single quantile threshold $\tq{}=\min\{\tqstar,1/2\}$ is a 3-approximation to the expected total contribution of the optimal badge mechanism. Furthermore, when the status value function is linear, it is a 2-approximation.
\end*{}
\\ \\
\begin{proofof}{Theorem \ref{thm:better-apx}} Throughout the proof we will denote with $\opt$ the expected user contribution of the optimal mechanism and with $\apx$ the expected user contribution of the single absolute badge mechanism
with quantile threshold $\tq{}=\min\{\tqstar,1/2\}$. We remind that $\opt$ and $\apx$ are characterized by Equations \eqref{eqn:optimal-user-contr} and \eqref{singleRev} respectively.

\paragraph{Linear Case} We begin by proving the case where the status value function $\stat{\cdot}$ is linear. 

If $\tqstar\leq 1/2$, then $\tq{}=\tqstar$ and:
$$\opt=\int_0^{\tqstar}R'(\quant)\cdot(1-\quant)d\quant\leq  R(\tqstar) \leq  R(\tqstar)\cdot 2\cdot (1-\tqstar)=2\cdot \apx.$$ 

If $\tqstar> 1/2$, then $\tq{}=1/2$. Consider the concave curve defined as 
\begin{equation*}
\hat{R}(\quant)=\begin{cases}R(\quant) & \quant\in[0,\tqstar)\\ R(\tqstar) & \quant \in [\tqstar,1]\end{cases}
\end{equation*} 
By Equation \eqref{eqn:optimal-user-contr}, observe that $\opt= \int_{0}^{1}\hat{R}(\quant)d\quant$. By concavity of $\hat{R}(\quant)$
and applying Jensen's inequality we get that:
\begin{equation}
\int_{0}^{1}\hat{R}(\quant)d\quant \leq \hat{R}\left(\frac{1}{2}\right) =R\left(\frac{1}{2}\right)
\end{equation}
where in the last equality we used the fact that $\tqstar>1/2$ and thereby $\hat{R}(1/2)=R(1/2)$. Thus we get that:
$\opt\leq R(1/2)$. A single badge mechanism with quantile threshold at $1/2$ gets revenue $\apx =\frac{1}{2}R(1/2)\geq \frac{1}{2}\opt$.

Thus a single badge mechanism with quantile $\tq{}=\min\{\tqstar,1/2\}$ yields a
$2$-approximation to the total contribution of the optimal mechanism in any case.

\paragraph{Concave Case} We now consider the case where the status value function $\stat{\cdot}$ is concave.


If $\tqstar\leq 1/2$, then $\tq{}=\tqstar$. By the concavity of $\statn{n}{\cdot}$ and the fact that $\statn{n}{1}=0$: $\statn{n}{\tqstar}\geq (1-\tqstar)\cdot \statn{n}{0} \geq \frac{1}{2}\statn{n}{0}$. Thus: 
$$\apx=R(\tqstar)\cdot \statn{n}{\tqstar}\geq R(\tqstar)\cdot \frac{1}{2}\cdot \statn{n}{0}\geq \frac{1}{2} \opt.$$
The fact that $\opt \leq R(\tqstar)\cdot \statn{n}{0}$, comes from the simple fact that $\statn{n}{q}\leq \statn{n}{0}$ and by replacing it in Equation \eqref{eqn:optimal-user-contr}.

If $\tqstar>1/2$ then $\tq{}=1/2$. We will use the following simple facts: 
\begin{enumerate}
\item Since $R(\quant)$ is increasing concave for any $\quant\in [0,\tqstar]$ and $R(0)=0$, then for any
$$t\in [1/2,\tqstar]:~R'(t)\leq R'(1/2)\leq \frac{R(1/2)}{1/2}=2\cdot R(1/2),$$ 
\item Since $\statn{n}{\quant}$ is a decreasing concave function and $\statn{n}{1}=0$, then for any 
$$t\in[0,1/2]:~ -\statnpr{n}{t} \leq -\statnpr{n}{1/2} \leq \frac{\statn{n}{1/2}}{1-1/2}=2\cdot \statn{n}{1/2}.$$
\end{enumerate}
Using these properties and an application of integration by parts, we can upper bound the expected user contribution of the optimal badge mechanism:
\begin{align*}
\opt =~&  \int_0^{\tqstar} R'(\quant)\cdot \statn{n}{\quant}d\quant =  \int_0^{1/2} R'(\quant)\cdot \statn{n}{\quant}d\quant+ \int_{1/2}^{\tqstar} R'(\quant)\cdot \statn{n}{\quant}d\quant\\
=~& R\left(\frac{1}{2}\right)\cdot \statn{n}{\frac{1}{2}} + \int_0^{1/2} R(\quant)\cdot(-\statnpr{n}{\quant})d\quant+ \int_{1/2}^{\tqstar} R'(\quant)\cdot \statn{n}{\quant}d\quant\\
\leq~&  R\left(\frac{1}{2}\right)\cdot \statn{n}{\frac{1}{2}} + 2\cdot \statn{n}{\frac{1}{2}}\cdot \int_0^{1/2} R(\quant)d\quant+ 2\cdot R\left(\frac{1}{2}\right)\cdot \int_{1/2}^{\tqstar} \statn{n}{\quant}d\quant\\
\leq~& R\left(\frac{1}{2}\right)\cdot \statn{n}{\frac{1}{2}} + 2\cdot \statn{n}{\frac{1}{2}}\cdot \int_0^{1/2} R\left(\frac{1}{2}\right)d\quant+ 2\cdot R\left(\frac{1}{2}\right)\cdot \int_{1/2}^{\tqstar} \statn{n}{\frac{1}{2}}d\quant\\
\leq~& 3\cdot R\left(\frac{1}{2}\right)\cdot \statn{n}{\frac{1}{2}} = 3\cdot \apx
\end{align*}

Therefore, in any case, setting $\tq{}=\min\{1/2,\tqstar\}$, yields a $3$-approximation to the optimal revenue.
\end{proofof}

\begin{example}[Tight lower bound for median badge mechanism]\label{ex:four-lower-bound} Let the status value function be $\stat{t}=1-t$. We will show that when the distribution of abilities has a long tail then the median badge mechanism is at most a $4$-approximation to the optimal badge mechanism. In this example, only the top ability make significant contributions in the optimal mechanism. The median badge mechanism sets too low of a contribution threshold to achieve a better approximation.


Specifically, suppose that the distribution of abilities has a cumulative density function of $$F(\val)=\frac{H+1}{H}\frac{\val}{\val+1}$$ and support $[0,H]$, and consider the limit as $H\rightarrow\infty$.\footnote{In this limit, the distribution of abilities converges to a translation of what is called the \emph{equal revenue distribution}.}. The revenue function $R(q)$ of such an ability distribution is 
\begin{equation}
R(\quant)=\quant\cdot \val(\quant) = \quant\cdot F^{-1}(1-\quant) = \quant\frac{1-\quant}{\frac{1}{H}+\quant}\overset{H\rightarrow \infty}{\rightarrow} 1-\quant.
\end{equation}
The corresponding monopoly quantile is $\tqstar=\frac{\sqrt{1+H}-1}{H}\rightarrow 0$ as $H\rightarrow \infty$, and so, by Equation \eqref{eqn:linear-contribution}, the expected user contribution in the optimal mechanism is $(1-\tqstar)R(\tqstar)+\int_{0}^{\tqstar}R(q)dq\rightarrow1$, whereas in the median badge mechanism, by Equation \eqref{singleRev}, it is $(1-1/2)R(1/2)\rightarrow1/4$.

Graphically, our example corresponds to the case where the revenue curve converges to a "left triangle", of the form $R(\quant)=1-\quant$. In that case, the expected user contribution of the optimal badge mechanism corresponds to the rectangle of height and length $1$, while of the median badge mechanism corresponds to the rectangle of height and length $1/2$ (see Figure~\ref{fig:linear}).

\end{example}

\begin{figure}[htpb]
\centering
\subfigure{\includegraphics[scale=.75]{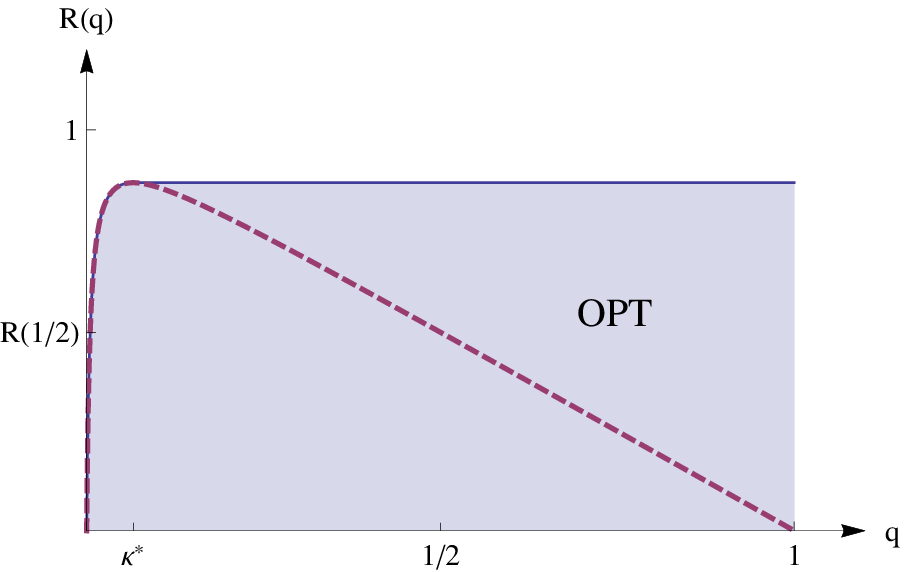}}
\subfigure{\includegraphics[scale=.75]{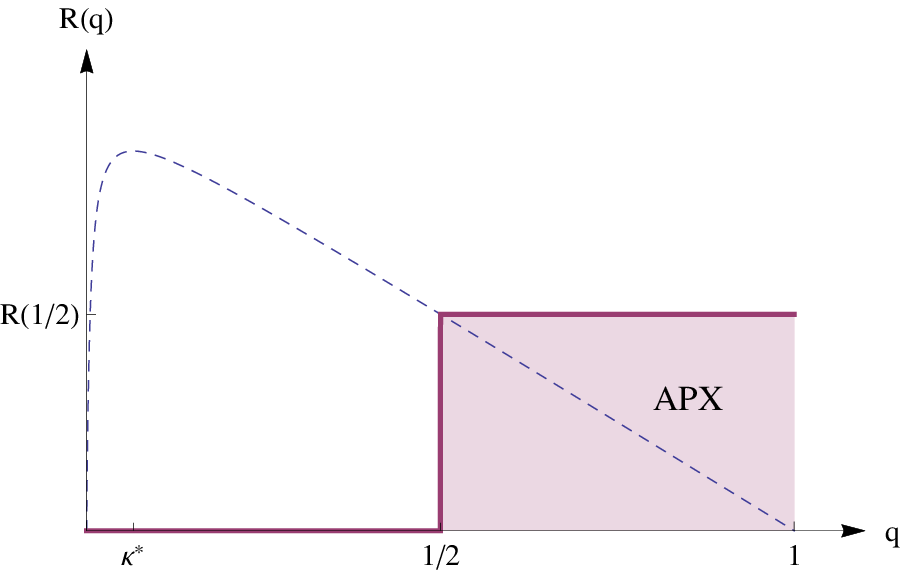}}
\caption{Contributions of the optimal and median badge mechanism for ability distributions of the form $\cdf[\val]=\frac{H+1}{H}\frac{v}{v+1}$, respectively.} \label{fig:linear}
\end{figure}

\subsection{Approximation with Many Absolute Badges}\label{sec:many-badges}

The examples from section \ref{sec:lowerbound} demonstrate the limitations of mechanisms that use a single threshold. With a single threshold, no mechanism can achieve better than a 2-approximation for concave or linear status. Furthermore, no finite approximation is possible for convex status valuations with any constant number of thresholds. We now characterize the approximate optimality of mechanisms with $m > 1$ thresholds. For concave status, the contributions generated by badge mechanism with $m$ badges quickly approaches the contributions of the optimal mechanism.

\begin{theorem}\label{thm:many-badge-concave}
If the social status function $\stat{\cdot}$ is concave, then a badge mechanism with $m$ badges, can achieve a $\frac{m}{m-2}$-approximation to the total contribution of the optimal mechanism. 
\end{theorem}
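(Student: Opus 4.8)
The plan is to work in quantile space, using Theorem~\ref{thm:uniqueness} to reduce the problem to choosing a vector of quantile thresholds $\kappa_1>\cdots>\kappa_m$ and then recognizing the induced expected contribution as a Riemann sum for the integral defining $\opt$. By Lemma~\ref{lem:myerson-opt} the expected contribution equals the expected virtual surplus, and since the interim allocation computed in the proof of Theorem~\ref{thm:uniqueness} is $\hat x(q)=\statn{n}{\kappa_t}$ for $q\in(\kappa_{t+1},\kappa_t]$ and $0$ for $q>\kappa_1$, setting $\kappa_{m+1}:=0$ gives
\begin{equation*}
\apx=\sum_{t=1}^m \statn{n}{\kappa_t}\bigl(R(\kappa_t)-R(\kappa_{t+1})\bigr),
\end{equation*}
while $\opt=\int_0^{\tqstar}R'(\quant)\statn{n}{\quant}\,d\quant$ by Equation~\eqref{eqn:optimal-user-contr}.

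The crucial step is the substitution $s=R(\quant)$ on $[0,\tqstar]$, under which $\opt=\int_0^{R(\tqstar)}g(s)\,ds$ with $g(s):=\statn{n}{R^{-1}(s)}$, where $R^{-1}$ is the inverse of $R$ on its increasing branch $[0,\tqstar]$. I would first observe that $g$ is concave and non-increasing: $R^{-1}$ is convex and increasing because $R$ is concave and increasing on $[0,\tqstar]$ (regularity of $F$), $\statn{n}{\cdot}$ is concave and non-increasing by the Bernstein-polynomial properties noted after Equation~\eqref{eqn:expectedstatus}, and the composition of a concave non-increasing function with a convex function is concave. Writing $a=g(0)=\statn{n}{0}$ and $b=g(R(\tqstar))=\statn{n}{\tqstar}$, concavity of $g$ (its chord lies below the curve) yields the lower bound $\opt\geq \tfrac12 R(\tqstar)(a+b)$.

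I would then place the thresholds so that the revenue values $R(\kappa_t)$ are equally spaced, namely $\kappa_t=R^{-1}\!\bigl(\tfrac{m-t+1}{m}R(\tqstar)\bigr)$, so that $\kappa_1=\tqstar$ (hence nothing is allocated on $(\tqstar,1]$, where $R'<0$) and each increment $R(\kappa_t)-R(\kappa_{t+1})$ equals $R(\tqstar)/m$; these decreasing thresholds are implementable by Theorem~\ref{thm:uniqueness}. In the variable $s$, $\apx$ is then exactly the lower Riemann sum of the non-increasing function $g$ over $m$ equal subintervals, so the standard monotone-function error bound gives $\opt-\apx\leq \tfrac{R(\tqstar)}{m}\bigl(g(0)-g(R(\tqstar))\bigr)=\tfrac{R(\tqstar)}{m}(a-b)$. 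Combining this with the lower bound on $\opt$ and using $a-b\leq a+b$,
\begin{equation*}
\frac{\opt}{\apx}\leq \frac{1}{1-\dfrac{R(\tqstar)(a-b)/m}{R(\tqstar)(a+b)/2}}=\frac{1}{1-\dfrac{2(a-b)}{m(a+b)}}\leq \frac{1}{1-2/m}=\frac{m}{m-2}.
\end{equation*}

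I expect the main obstacle to be establishing that $g$ is concave and non-increasing in $s$, since this single fact simultaneously delivers both the trapezoidal lower bound on $\opt$ and the control of the Riemann-sum error; it is precisely here that the two structural hypotheses---regularity of $F$ (concavity of $R$) and concavity of $\stat{\cdot}$ (hence of $\statn{n}{\cdot}$)---are used together. The remaining points are routine: justifying the change of variables $s=R(\quant)$ when $R$ is only weakly increasing, and verifying implementability of $\kappa_1>\cdots>\kappa_m$ via Theorem~\ref{thm:uniqueness}. Here $m$ denotes the number of contribution thresholds, i.e. the $m$ positively-awarded badge levels above the zero badge.
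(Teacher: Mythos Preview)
Your argument is correct and reaches the same $\frac{m}{m-2}$ bound, but it proceeds by a genuinely different route than the paper's proof. The paper chooses thresholds so that the \emph{status values} $\statn{n}{\tq{t}}$ are equally spaced in $[\statn{n}{\tqstar},\statn{n}{0}]$; it then uses integration by parts together with monotonicity of $R$ on $[0,\tqstar]$ to bound $\opt-\apx\le R(\tqstar)\,\statn{n}{0}/m$, and separately shows $\opt\ge \tfrac12 R(\tqstar)\,\statn{n}{0}$ by bounding $R(\quant)\ge R(\tqstar)\,\quant/\tqstar$ and using concavity of $\statn{n}{\cdot}$. You instead make the substitution $s=R(\quant)$ and place thresholds so that the \emph{revenue values} $R(\tq{t})$ are equally spaced, which turns $\apx$ into the lower Riemann sum of the single function $g(s)=\statn{n}{R^{-1}(s)}$. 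The two structural hypotheses (regularity of $F$ and concavity of $\stat{\cdot}$) are then packaged into the single statement that $g$ is concave and non-increasing; the error bound $\opt-\apx\le \tfrac{R(\tqstar)}{m}(a-b)$ uses only monotonicity of $g$, while the trapezoidal lower bound $\opt\ge \tfrac12 R(\tqstar)(a+b)$ uses concavity. Your intermediate inequalities are actually sharper than the paper's (you keep the extra $b=\statn{n}{\tqstar}$ term on both sides), though the slack is discarded at the last step via $a-b\le a+b$. What your approach buys is a cleaner and more symmetric argument once $g$ is shown concave; what the paper's approach buys is avoidance of the change of variables and the composite-concavity lemma, staying entirely in $\quant$-space with direct integration by parts.
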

\begin{proof}
 Throughout the proof we will denote with $\opt$ the expected user contribution of the optimal mechanism and with $\apx$ the expected user contribution of an absolute threshold mechanism characterized by a vector of quantile thresholds $\tqvec=(\tq{1},\ldots,\tq{m})$. We remind that $\opt$ is given by Equation \eqref{eqn:optimal-user-contr} and will first provide a characterization of $\apx$ as a function of the quantile threshold vector.

Observe that by the form of the equilibrium described in Theorem \ref{thm:uniqueness}, we get that the interim status allocation
of a player in the absolute threshold mechanism is:
\begin{eqnarray*}
\xq{\quant} = 
\begin{cases}
0 & \quant > \tq{1} \\
\statn{n}{\tq{t}} & \quant \in (\tq{t+1},\tq{t}] \\
\statn{n}{\tq{m}} & \quant \leq \tq{m} 
\end{cases}
\end{eqnarray*}
Thus by applying the generic expected user contribution characterization of Lemma \ref{lem:myerson-opt}, we get:
\begin{equation}\label{eqn:badge-mechanism-rev-1}
\apx = \int_{0}^{1} R'(\quant)\cdot \xq{\quant}d\quant= R(\tqstar)\cdot \statn{n}{\tqstar}+\sum_{t=2}^{m}R(\tq{t})(\statn{n}{\tq{t}}-\statn{n}{\tq{t-1}})
\end{equation}

Consider the badge mechanism with quantile thresholds defined so that they satisfy the following conditions: 
\begin{align*}
\tq{1}=~&\tqstar\\
\forall t=\{2,\ldots,m\}: \statn{n}{\tq{t}}=~&\statn{n}{\tqstar}+(t-1)\cdot \Delta x,
\end{align*} 
where 
$$\Delta x=\frac{\statn{n}{0}-\statn{n}{\tqstar}}{m}.$$ 
This implies that: $\statn{n}{\tq{t}}-\statn{n}{\tq{t-1}}=\Delta x$ for any $t\in[2,m]$ and $\statn{n}{0}-\statn{n}{\tq{m}}=\Delta x$. For convenience we will
denote with $\tq{m+1}=0$ and $\tq{0}=1$.

By Equation \eqref{eqn:badge-mechanism-rev-1}, the expected user contribution of the above absolute threshold mechanism is:
\begin{align*}
\apx= ~& R(\tqstar)\cdot \statn{n}{\tqstar}+\sum_{t=2}^{m}R(\tq{t})(\statn{n}{\tq{t}}-\statn{n}{\tq{t-1}})
 =~ R(\tqstar)\cdot \statn{n}{\tqstar}+\sum_{t=2}^{m} R(\tq{t})\cdot\Delta x
\end{align*}

On the other hand the expected user contribution of the optimal mechanism can be lower bounded by applying integration by parts to Equation \eqref{eqn:optimal-user-contr} and using the monotonicity of the revenue function in the 
region $[0,\tqstar]$:
\begin{align*}
\opt =~& \int_{0}^{\tqstar} R'(\quant)\cdot \statn{n}{\quant}d\quant = R(\tqstar)\cdot \statn{n}{\tqstar}-\int_{0}^{\tqstar}R(\quant)\statnpr{n}{\quant}dq\\
=~&  R(\tqstar)\cdot \statn{n}{\tqstar}-\sum_{t=1}^{m}\int_{\tq{t+1}}^{\tq{t}}R(\quant)\cdot \statnpr{n}{\quant}dq\\
\leq~&R(\tqstar)\cdot \statn{n}{\tqstar}-\sum_{t=1}^{m}R(\tq{t})\int_{\tq{t+1}}^{\tq{t}}\statnpr{n}{\quant}dq
=~ R(\tqstar)\cdot \statn{n}{\tqstar}+\sum_{t=1}^{m}R(\tq{t})\cdot\Delta x
\end{align*}
Thus we get that:
\begin{equation}\label{eqn:diff-upper-bound}
\opt-\apx \leq R(\tqstar)\cdot \Delta x = R(\tqstar)\cdot \frac{\statn{n}{0}-\statn{n}{\tqstar}}{m} \leq R(\tqstar)\cdot \frac{\statn{n}{0}}{m}
\end{equation}
We will now show that $\opt \geq \frac{1}{2}  R(\tqstar)\cdot \statn{n}{0}$.
Since, the revenue function $R(\quant)$ is concave and $R(0)=0$, for any 
$$\quant\in [0,\tqstar]: R(\quant)\geq \frac{R(\tqstar)}{\tqstar}\quant.$$ 
Thus:
\begin{align*}
\opt=~&R(\tqstar)\cdot \statn{n}{\tqstar}-\int_{0}^{\tqstar}R(\quant)\cdot \statnpr{n}{\quant}d\quant
 \geq~ R(\tqstar)\cdot \statn{n}{\tqstar}-\frac{R(\tqstar)}{\tqstar}\int_{0}^{\tqstar}\quant\cdot \statnpr{n}{\quant}d\quant\\
=~&R(\tqstar)\cdot \statn{n}{\tqstar}-\frac{R(\tqstar)}{\tqstar}\left(\tqstar\cdot\statn{n}{\tqstar}-\int_{0}^{\tqstar}\statn{n}{\quant}d\quant\right)
=~ \frac{R(\tqstar)}{\tqstar}\int_{0}^{\tqstar}\statn{n}{\quant}d\quant 
\end{align*}
Since, $\statn{n}{\quant}$ is a non-negative concave decreasing function, we have that:
\begin{equation}
\int_0^{\tqstar} \statn{n}{\quant}d\quant \geq \frac{\statn{n}{0}+\statn{n}{\tqstar}}{2}\tqstar \geq \frac{\statn{n}{0}}{2}\tqstar 
\end{equation}
Thus we get:
\begin{equation}\label{eqn:optimal-lower-bound}
\opt \geq  \frac{1}{2}\cdot  R(\tqstar)\cdot \statn{n}{0} 
\end{equation}
Combining Equations \eqref{eqn:diff-upper-bound} and \eqref{eqn:optimal-lower-bound}:
\begin{equation}
\opt  - \apx \leq\frac{2}{m} \opt
\end{equation}
which yields the theorem.
\end{proof}

For convex valuations, we show that a 4-approximation is possible if the mechanism uses a number of thresholds that is logarithmic in a natural parameter of the status valuation function: $H = \frac{\stat{0}}{\stat{\frac{1}{2}}}$, i.e. the ratio of the status value of the highest ranked user to the status value of a median-ranked user. The next theorem shows that $\log(H)$ badges are sufficient for achieving a constant approximation. For a large class of natural status valuations, this ratio $H$ will be on the order $O(n)$, the number of users. For example, if we use the status value function $\stat{t} =\frac{n}{(n-1)t+1}-1$ from example \ref{convexLowerBound}, then $\stat{0} = n-1$ and $\stat{\frac{1}{2}} = \frac{n-1}{n+1}$ and hence a badge mechanism with $\log(H) = \log(n+1)$ thresholds yields a 4-approximation to the optimal mechanism. Thus the number of necessary thresholds grows slowly with the number of users. 

\begin{theorem}\label{thm:many-badge-convex}
Let $\stat{\cdot}$ be any convex status function and let $\lambda = \min\{ \tqstar ,\frac{1}{2}\}$. The badge mechanism with quantile thresholds $\tqvec=(\tq{1},\ldots,\tq{m})$,  where $m= \log\left(\frac{\stat{0}}{\statn{n}{\lambda}}\right)\leq \log(H)$ and $\tqvec$ satisfies the following:
\begin{align*}
\statn{n}{\tq{m}}= & ~ \frac{\stat{0}}{2}\\
\statn{n}{\tq{t}} = & ~ \frac{ \statn{n}{\tq{t+1}}}{2} = \frac{\stat{0}}{2^{m-t+1}} \qquad  \forall t\in[2, m-1] \\
\tq{1}= & ~ \lambda
\end{align*}
achieves a $4$-approximation to the total contribution of the optimal mechanism. 
\end{theorem}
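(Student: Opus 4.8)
The plan is to bound \opt\ from above by a constant multiple of \apx\ using the exact revenue expressions for both mechanisms together with a covering argument tailored to the geometric threshold construction. By Equation~\eqref{eqn:optimal-user-contr}, $\opt=\int_0^{\tqstar}R'(\quant)\,\statn{n}{\quant}\,d\quant$, and by the interim allocation of Theorem~\ref{thm:uniqueness} and Lemma~\ref{lem:myerson-opt} (cf.\ Equation~\eqref{eqn:badge-mechanism-rev-1}) the contribution of the proposed mechanism is $\apx=\sum_{t=1}^{m}\statn{n}{\tq{t}}\bigl(R(\tq{t})-R(\tq{t+1})\bigr)$, where $\tq{m+1}=0$; every summand is non-negative since $R$ is increasing on $[0,\tqstar]$ and $\statn{n}{\cdot}\ge 0$. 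I would split the optimal integral at $\lambda=\min\{\tqstar,1/2\}=\tq{1}$ into the ``covered'' region $[0,\lambda]$ and the ``tail'' $[\lambda,\tqstar]$, the latter being empty unless $\tqstar>1/2$, in which case $\lambda=1/2$.

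For the covered region I would use the defining property of the construction, namely that the interim status values double across consecutive thresholds: $\statn{n}{\tq{t+1}}=2\,\statn{n}{\tq{t}}$, with the bottom band handled by $\statn{n}{\tq{m+1}}=\statn{n}{0}=\stat{0}=2\,\statn{n}{\tq{m}}$. On each band $I_t=(\tq{t+1},\tq{t}]$ the monotonicity of $\statn{n}{\cdot}$ gives $\statn{n}{\quant}\le \statn{n}{\tq{t+1}}=2\,\statn{n}{\tq{t}}$, while the mechanism awards interim status exactly $\statn{n}{\tq{t}}$ throughout $I_t$. Since $R'\ge 0$ on $[0,\tqstar]\supseteq[0,\lambda]$, integrating band by band yields $\int_0^{\lambda}R'(\quant)\,\statn{n}{\quant}\,d\quant \le 2\sum_{t=1}^m \statn{n}{\tq{t}}\bigl(R(\tq{t})-R(\tq{t+1})\bigr)=2\,\apx$. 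This is exactly where the geometric spacing pays off: covering the full status range $[\statn{n}{\lambda},\stat{0}]$ to within a factor of two requires only $\log\bigl(\stat{0}/\statn{n}{\lambda}\bigr)$ bands, which is the stated badge count $m\le\log(H)$.

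For the tail, which arises only when $\tqstar>1/2$ (so $\lambda=1/2$), I would bound $\statn{n}{\quant}\le \statn{n}{\lambda}$ and invoke regularity of $F$: concavity of $R$ with $R(0)=0$ gives $R(1/2)\ge \frac{1}{2}R(\tqstar)$, hence $R(\tqstar)\le 2\,R(\lambda)$. Combining with $R'\ge 0$, $\int_{\lambda}^{\tqstar}R'(\quant)\,\statn{n}{\quant}\,d\quant \le \statn{n}{\lambda}\,R(\tqstar)\le 2\,\statn{n}{\lambda}\,R(\lambda)$. Rewriting \apx\ by summation by parts as $\apx=\statn{n}{\tq{1}}R(\tq{1})+\sum_{t=2}^{m}R(\tq{t})\bigl(\statn{n}{\tq{t}}-\statn{n}{\tq{t-1}}\bigr)$, whose summands are all non-negative (since $\tq{t}\le\tq{t-1}$ forces $\statn{n}{\tq{t}}\ge\statn{n}{\tq{t-1}}$), yields the matching lower bound $\apx\ge \statn{n}{\tq{1}}R(\tq{1})=\statn{n}{\lambda}R(\lambda)$. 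Thus the tail contributes at most $2\,\apx$, and adding the two regions gives $\opt\le 4\,\apx$ when $\tqstar>1/2$ and $\opt\le 2\,\apx$ when $\tqstar\le 1/2$, proving the theorem. (With the sharper tail estimate $R(\tqstar)-R(\lambda)\le R(\lambda)$ one in fact gets $\opt\le 3\,\apx$.)

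The main obstacle is the tail region: the approximate mechanism assigns zero status to every user with quantile above $\lambda$, yet the optimal mechanism keeps extracting contribution from quantiles in $(\lambda,\tqstar]$. Controlling this lost mass is precisely what motivates the cutoff $\lambda=\min\{\tqstar,1/2\}$ and makes regularity of $F$ (concavity of $R$) indispensable—without $R(\tqstar)\le 2\,R(\lambda)$ the tail could not be charged against the single top band of \apx. The covered region, by contrast, is routine once the doubling identity is in hand. A minor point to check is that the construction is well defined: each $\tq{t}$ exists because $\statn{n}{\cdot}$ is continuous and strictly decreasing from $\stat{0}$ to $0$, and rounding $m$ up to an integer only shrinks the ratio of the topmost band below $2$, which leaves the factor-two covering bound intact.
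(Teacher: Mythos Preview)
Your proof is correct. On the covered region $[0,\lambda]$ it is essentially identical to the paper's argument: both exploit the doubling identity $\statn{n}{\tq{t+1}}=2\,\statn{n}{\tq{t}}$ to show that the mechanism's interim allocation is at least $\tfrac12\statn{n}{\quant}$ pointwise (the paper states this pointwise, you integrate band by band---these are the same computation), yielding $\int_0^{\lambda}R'\statn{n}{}\le 2\,\apx$.

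The tail $[\lambda,\tqstar]$ is where the two arguments diverge slightly. The paper observes that $R'(\cdot)\statn{n}{\cdot}$ is non-increasing on $[0,\tqstar]$ and hence $\int_{1/2}^{\tqstar}R'\statn{n}{}\le\int_0^{1/2}R'\statn{n}{}$, so $\opt\le 2\int_0^{1/2}R'\statn{n}{}\le 4\,\apx$. You instead bound the tail directly by $\statn{n}{\lambda}\bigl(R(\tqstar)-R(\lambda)\bigr)$, invoke concavity of $R$ with $R(0)=0$ to get $R(\tqstar)\le 2R(\lambda)$, and charge the result to the single summand $\statn{n}{\lambda}R(\lambda)\le\apx$ of the Abel-summed expression for $\apx$. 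Both routes rely on regularity of $F$; yours is a touch sharper (as you note, it actually gives $\opt\le 3\,\apx$ when $\tqstar>1/2$), while the paper's route avoids the summation-by-parts bookkeeping. Neither argument uses convexity of $\stat{\cdot}$, which is consistent with the paper---the convexity hypothesis matters only for the matching lower bound, not for this upper bound.
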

\begin{proof}
Let $\opt$ be the expected user contribution of the optimal badge mechanism and $\apx$ the expected user contribution of the described absolute threshold mechanism. We will show that the interim status allocation of a user with quantile $\quant \leq \lambda= \min\left\{\tqstar,\frac{1}{2}\right\}$ in the described absolute threshold mechanism is at least half of his interim status allocation in the optimal mechanism. This property does not hold for $\quant \geq \lambda$ but we prove that the optimal mechanism generates at most half of its contributions from users with $q > \lambda$. The 4-approximation result then follows, by the virtual surplus
characterization of total contribution.

By proposition \ref{prop:badge-interim}, we know that any $\quant \leq \tqstar$ has an interim status allocation of $\statn{n}{q}$ in the optimal mechanism. We now prove that for any $\quant \leq \lambda$, the interim status allocation in this mechanism is at least half of the interim status allocation of the optimal mechanism. 

First, consider any user with $\quant < \tq{m}$. By definition of this mechanism:
\[ \xq{\quant} = \statn{n}{\tq{m}} = \frac{ \stat{0} }{2} \geq \frac{ \stat{\quant}}{2} \]

Next, consider any user with $\quant \in (\tq{t+1},\tq{t}]$ for $t \in \{2,...,m-1\}$. 
\[ \xq{\quant} = \statn{n}{\tq{t}} = \frac{\statn{n}{\tq{t+1}}}{2} \geq \frac{\statn{n}{\quant}}{2} \]  

Finally, for any $\quant \in (\tq{2}, \lambda]$. 
\[ \xq{\quant} = \statn{n}{\lambda} = \frac{\stat{0}}{2^m} = \frac{1}{2} \cdot \left( \frac{\stat{0}}{2^{m-1}} \right) = \frac{1}{2} \cdot \statn{n}{\tq{2}} \geq \frac{\statn{n}{\quant}}{2} \]


Since this mechanism assigns non-zero status value only to users with $q\leq \lambda \leq \tqstar$ and $\xq{\quant} \geq \frac{\statn{n}{\quant}}{2}$ we get:
\begin{equation*}
\apx =  \int_0^{\lambda} R'(\quant) \cdot \xq{\quant}d\quant\geq \frac{1}{2}\cdot \int_0^{\lambda} R'(\quant) \cdot \statn{n}{\quant}d\quant
\end{equation*} 

\noindent On the other hand the optimal mechanism achieves expected user contribution:
\begin{equation*}
\opt = \int_0^{\tqstar} R'(\quant)\cdot \statn{n}{\quant} d\quant 
\end{equation*}
If $\tqstar\leq 1/2$, then we get $2 \cdot \apx \geq \opt$.

If $\tqstar >1/2$, then:
\begin{equation*}
\opt = \int_0^{1/2}R'(\quant)\cdot \statn{n}{\quant} d\quant +  \int_{1/2}^{\tqstar}R'(\quant)\cdot \statn{n}{\quant}d\quant
\end{equation*}
$R'(\cdot)$ and $\statn{n}{\cdot}$ are both non-increasing functions of the quantile, so the following inequality holds: 
\begin{equation*}
\int_0^{1/2}R'(\quant)\cdot \statn{n}{\quant} d\quant \geq \int_{1/2}^{\tqstar}R'(\quant)\cdot \statn{n}{\quant}d\quant
\end{equation*} 
and thereby:
\begin{equation*}
\opt \leq 2\cdot  \int_0^{1/2}R'(\quant)\cdot \statn{n}{\quant} d\quant \leq 4 \cdot \apx
\end{equation*}
\end{proof}

\subsection{Proofs from Section \ref{sec:complete-ranking}}

\begin*{}{THEOREM \ref{thm:leaderboard-apx}}
\ \ \em For any convex status function, the leaderboard mechanism is a $2$-approximation to the optimal mechanism.
\end*{}
\\ \\
\begin{proofof}{Theorem \ref{thm:leaderboard-apx}}
Following similar reasoning as in Theorem \ref{thm:optimal}, we can show that such a complete relative ranking mechanism, will have a unique symmetric monotone equilibrium. Under such an equilibrium a player with quantily $\quant$, will be ranked below all player with lower quantile and above all players with higher quantile. Thus the interim status allocation for a player with quantile $\quant$ is $\statn{n}{\quant}$, as defined in Equation \eqref{eqn:expectedstatus}. Thus, the expected user contribution will be:
\begin{equation}
\apx = \int_0^1 R'(\quant)\cdot \statn{n}{\quant}d\quant
\end{equation}
whereas the optimal mechanism induces an interim status allocation of $\statn{n}{\quant}$ only for players with $\quant \leq \tqstar$ and 0 otherwise, yielding an expected user contribution of $\opt$ as given in Equation \eqref{eqn:optimal-user-contr}. We will show that the convexity of the status function and the regularity of the ability distribution imply that $\opt\leq 2\cdot \apx$.

By convexity of $\statn{n}{\cdot}$, $\statn{n}{t\cdot k+(1-t)\cdot1}\leq t\statn{n}{k}+(1-t)\statn{n}{1}$.  Instantiating this for $t=(1-q)/(1-\tqstar)$ and recalling that $\statn{n}{1}=0$, we get that for any $\quant\geq \tqstar: \statn{n}{\quant}\leq \statn{n}{\tqstar}\cdot (1-\quant)/(1-\tqstar)$. Since by definition $R'(\quant)$ is non-positive for any $\quant\geq \tqstar$, we can lower bound the negative part of $\apx$ as follows:
\begin{align*}
\int_{\tqstar}^{1}R'(\quant)\cdot \statn{n}{\quant}d\quant\geq 
~&\int_{\tqstar}^{1} R'(\quant) \frac{\statn{n}{\tqstar}}{1-\tqstar}(1-\quant)d\quant\\
=~&  \frac{\statn{n}{\tqstar}}{1-\tqstar} \int_{\tqstar}^{1}R'(\quant)(1-\quant)d\quant\\
=~& \frac{\statn{n}{\tqstar}}{1-\tqstar}\left(-R(\tqstar)(1-\tqstar)+\int_{\tqstar}^{1}R(\quant)d\quant\right)\\
\geq~& -\frac{1}{2}\cdot\statn{n}{\tqstar}\cdot R(\tqstar)
\end{align*}
where the last inequality follows since, by the fact that $R(\quant)$ is concave non-increasing in the region $[\tqstar,1]$ and the assumption that $R(1)=0$, we have that $\int_{\tqstar}^{1} R(\quant)d\quant \geq \frac{1}{2} R(\tqstar)\cdot(1-\tqstar)$.
We can now lower bound \apx, using integration by parts:
\begin{align*}
\apx \geq~& \int_0^{\tqstar} R'(\quant)\cdot \statn{n}{\quant}d\quant - \frac{1}{2}\cdot\statn{n}{\tqstar}\cdot R(\tqstar)\\
=~& R(\tqstar)\cdot \statn{n}{\tqstar} - \int_0^{\tqstar}R(\quant)\cdot\statnpr{n}{\quant}d\quant - \frac{1}{2}\cdot\statn{n}{\tqstar}\cdot R(\tqstar)\\
=~& \frac{1}{2}\cdot R(\tqstar)\cdot \statn{n}{\tqstar} - \int_0^{\tqstar}R(\quant)\cdot\statnpr{n}{\quant}d\quant \\
\geq~& \frac{1}{2}\cdot R(\tqstar)\cdot \statn{n}{\tqstar} - \frac{1}{2}\int_0^{\tqstar}R(\quant)\cdot\statnpr{n}{\quant}d\quant \\
=~& \frac{1}{2} \int_0^{\tqstar} R'(\quant)\cdot \statn{n}{\quant}d\quant =~ \frac{1}{2} \opt
\end{align*}
where in the last inequality we also used the fact that $\statnpr{n}{\quant}\leq 0$.

\end{proofof}
\subsection{Proof from Section \ref{sec:tie-breaking}}\label{sec:appendix-tie-breaking}
\begin*{}{THEOREM \ref{thm:median-tie-breaking}}
\ \ \em For any $\beta\in [0,1]$, the \emph{median badge mechanism} achieves a $4$ approximation to the contribution of the optimal mechanism. The \emph{median badge mechanism} is implemented at the unique symmetric monotone equilibrium of the absolute threshold mechanism defined by a contribution threshold of $\frac{v(1/2)}{2}$, when $\beta\geq 1/2$ and at some equilibrium for $\beta<1/2$.
\end*{}
\\ \\
\begin{proofof}{Theorem \ref{thm:median-tie-breaking}}
Let $\opt$ be the expected user contribution of the optimal badge mechanism and $\apx$ the expected user contribution of the  absolute threshold mechanism with contribution threshold $\tb{}=\frac{v(1/2)}{2}$. First we analyze the equilibrium induced by setting a single badge threshold of $\tb{}$ (observe that our equilibrium characterization in Theorem \ref{thm:uniqueness} depends on the fact that we used $\beta=1$). We focus on symmetric monotone equilibria, and thereby the equilibrium of such a mechanism
is characterized by a quantile threshold $\tq{}$, such that all users with quantile $\quant\leq \tq{}$, submit $\tb{}$, and all users with $\quant>\tq{}$, submit $0$. 

By the indifference of the user at the boundary quantile $\tq{}$, it must be that:
\begin{equation}\label{eqn:indiff-beta}
v(\tq{})\left(1-\beta\cdot \tq{}\right) -\theta= v(\tq{})\left(1-\tq{}-\beta\cdot(1-\tq{})\right) \implies v(\tq{})\left(\tq{}\cdot(1-2\beta)+\beta\right)=\theta
\end{equation}
Observe that if $\beta\geq1/2$ then the left hand side is monotone-decreasing in $\tq{}$ and therefore the equation
has at most one solution. For $\beta<1/2$, there might be multiple solutions and thereby multiple equilibria.
Consider setting $\tb{}=\frac{v(1/2)}{2}$. Then Equation \eqref{eqn:indiff-beta} has solution $\tq{}=1/2$ independent of $\beta$ and if $\beta\geq 1/2$, it is the unique solution. 

The expected user contribution achieved by the median quantile threshold equilibrium is:
\begin{equation}
\apx = \tq{} \cdot \tb{} = \frac{1}{2} \cdot \frac{v(1/2)}{2} =\frac{R(1/2)}{2} \geq \frac{R(\tqstar)}{4} \geq \frac{1}{4} \opt
\end{equation}
Where we used the fact that $R(1/2)\geq R(\tqstar)/2$, by the regularity of the distribution. Moreover, we used the upper bound on the optimal mechanism of $R(\tqstar)$. This fact can be seen as follows: since $S(t_e,t_g)\leq 1$, the interim status allocation of any player is at most $1$. No matter what the optimal mechanism is, each user's $i$ expected contribution can be upper bounded by:
\begin{align*}
\opt_i=\int_0^1 R'(\quant)\cdot \hat{x}_i(\quant)d\quant =~& \int_0^{\tqstar} R'(\quant)\cdot \hat{x}_i(\quant)d\quant + \int_{\tqstar}^1 R'(\quant)\cdot \hat{x}_i(\quant)d\quant \leq R(\tqstar)
\end{align*}
Since, the first integral in the left hand side of the last inequality is bounded above by $R(\tqstar)$ (since $R'(\cdot)$ is non-negative), while the second integral is bounded above by $0$ (since $R'(\cdot)$ is non-positive). 
\end{proofof}

\begin*{}{THEOREM \ref{thm:complete-tie-breaking}}
\ \ \em For any $\beta\in [0,1]$, the badge mechanism that assigns a distinct badge to each user in decreasing order of contribution, is always a $2$-approximation to the total contribution of the optimal mechanism.
\end*{}
\\ \\
\begin{proofof}{Theorem \ref{thm:complete-tie-breaking}}
Let $\opt$ be the expected user contribution of the optimal badge mechanism and $\apx$ the expected user contribution of the
complete relative ranking mechanism. As argued in the proof of Theorem \ref{thm:leaderboard-apx}, such a complete ranking badge mechanism has a unique equilibrium at which users are ranked in decreasing order of quantile. Therefore, the interim status allocation of each user is simply: 
$\xq{\quant}=S_n(\quant)=1-\quant$ and the expected user contribution of the mechanism is:
$$\apx=-\int_{0}^{1}R(\quant)\xqpr{\quant}d\quant = \int_0^{1} R(\quant) d\quant \geq \frac{1}{2}\cdot R(\tqstar)\geq \frac{1}{2}\opt,$$
where the second to last inequality follows by the same argument as in Theorem \ref{cor:median-linear} and the last inequality follows from the same argument as in the proof of Theorem \ref{thm:median-tie-breaking}. 
\end{proofof}

\subsection{Structure of Optimal Badge Mechanism under Different Tie-Breaking Rules}
Apart from the case of $\beta=1$, discussed in Section \ref{sec:optimal}, we show that the surplus-maximizing allocation has a nice structure, when $\beta$ takes values $1/2$ or $0$.
Specifically, for the case of $\beta=1/2$, we show that assigning all users a distinct badge in decreasing order of value is optimal. Hence, a complete ranking mechanism 
that assigns a distinct badge in decreasing order of bid (without any contribution threshold) is optimal. Observe, that for $\beta=1$, the optimal mechanism also had a contribution
threshold. For $\beta=0$, we show that the optimal mechanism groups together all users below the monopoly quantile in the same and highest badge, and then assigns
a distinct decreasing badge to all users above the monopoly quantile. Such a mechanism can also be implemented at the unique symmetric monotone equilibrium of an all-pay ranking
mechanism, where user's above some contribution threshold $\theta$ are all assigned the top badge and all remaining users are ranked in decreasing order of contribution. For other values of $\beta$, ex-post maximization of virtual surplus, can be very complex and dependent on the exact instantiation of the virtual surplus profile of users. 

\begin{theorem}\label{thm:half-ties} 
If $\beta=1/2$, then the virtual surplus maximizing allocation assigns all users a distinct badge in decreasing order of quantile (increasing order of ability). Such an allocation is implemented at the unique equilibrium of a leaderboard mechanism.
\end{theorem}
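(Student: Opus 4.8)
The plan is to reduce the claim to a pointwise (ex-post) virtual-surplus computation and then invoke the implementation machinery already developed for leaderboards. By the reduction in Section~\ref{sec:optimal}, carried over to arbitrary $\beta$ at the start of Section~\ref{sec:tie-breaking}, maximizing expected contribution is equivalent to maximizing expected virtual surplus (Lemma~\ref{lem:myerson-opt}). So it suffices to exhibit a badge allocation that maximizes virtual surplus \emph{pointwise} at every quantile realization and that is implementable; such an allocation is then optimal. Fix a realization with virtual abilities $W_i = R'(q_i)$ (recall $W_i$ is non-increasing in the quantile by regularity), and write each user's status in the form $S_i = t_b^i + (1-\beta)t_e^i$, i.e. the fraction of opponents strictly below plus a discounted credit $(1-\beta)$ for each tied opponent; for $\beta=\tfrac12$ this is $S_i = t_b^i + \tfrac12 t_e^i$. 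The task is to choose the partition of users into ordered badge classes so as to maximize $\sum_i W_i S_i$.

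The heart of the argument is a pairwise decomposition. Since $S_i$ counts each opponent $j$ with weight $1$ (if $j$ is strictly below $i$), $\tfrac12$ (if tied), or $0$ (if strictly above), we may write
$$
\sum_{i} W_i\, S_i \;=\; \frac{1}{n-1}\sum_{i\neq j} W_i\, c_{ij},
\qquad c_{ij}\in\{1,\tfrac12,0\}.
$$
Collecting the two terms of each unordered pair $\{i,j\}$ with $W_i\ge W_j$, its total contribution equals $W_i$ if $i$ is ranked strictly above $j$, equals $W_j$ if $j$ is ranked strictly above $i$, and equals $\tfrac12(W_i+W_j)$ if they are tied. Because $W_i \ge \tfrac12(W_i+W_j)\ge W_j$, the maximum over these three options is always $W_i$, attained by placing the higher-virtual-ability user strictly above the lower one. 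Crucially, this holds regardless of the signs of $W_i,W_j$: the special value $\beta=\tfrac12$ makes the ``tie'' value exactly the average of the two virtual abilities, so separating any pair is always weakly beneficial (strictly so when $W_i\neq W_j$). Now observe that the full leaderboard — distinct badges assigned in increasing order of ability, i.e. decreasing order of quantile — ranks the higher-$W$ user strictly above the lower-$W$ user for \emph{every} pair simultaneously, and therefore attains this per-pair upper bound for all pairs at once. Hence it maximizes virtual surplus pointwise, and uniquely so whenever the $W_i$ are distinct (a probability-one event for atomless $F$). This is precisely where $\beta=\tfrac12$ is needed and why, unlike the $\beta=1$ case of Theorem~\ref{thm:optimal}, no contribution cutoff arises: even users with negative virtual ability are optimally separated.

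For implementation, I would argue exactly as in the proofs of Theorems~\ref{thm:leaderboard-apx} and \ref{thm:complete-tie-breaking} that the leaderboard mechanism has a unique symmetric monotone equilibrium in which contribution is strictly decreasing in quantile; rank-by-contribution then coincides with rank-by-quantile, so the realized ex-post assignment is exactly the virtual-surplus maximizer above. Since ties occur with probability zero in equilibrium, the term $t_e^i$ vanishes and the interim status allocation is $\statn{n}{q} = 1-q$, independent of $\beta$ and monotone, so the allocation is implementable and the contributions are pinned down by the payment identity of Lemma~\ref{lem:myerson-bne}. Uniqueness among \emph{all} (not merely symmetric) equilibria follows from the Chawla--Hartline anonymous order-based framework \cite{Hartline2012}, invoked just as in Theorem~\ref{thm:optimal}, since the leaderboard's ex-post allocation depends only on a user's own contribution and the numbers of strictly-higher and equal contributions.

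Putting the pieces together: because the leaderboard attains the pointwise virtual-surplus maximum at every realization, its expected virtual surplus dominates that of any incentive-compatible mechanism, and hence so does its expected contribution by Lemma~\ref{lem:myerson-opt}; being itself implementable with a unique equilibrium, it is optimal. I expect the only genuinely substantive step to be the pairwise argument of the second paragraph — in particular, identifying that the ``tie equals average'' identity at $\beta=\tfrac12$ is exactly what forces separation of every pair, sign-independently. The remaining steps are bookkeeping together with direct re-use of the equilibrium-uniqueness results already established earlier in the paper.
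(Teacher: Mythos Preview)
Your proof is correct and takes a genuinely different route from the paper's. The paper establishes the virtual-surplus-maximizing allocation via a local improvement argument: it first shows by a swap that badges must be monotone in virtual ability, then shows that if any block of users $i,i+1,\ldots,i+k$ with $R'(q_i)\geq\cdots\geq R'(q_{i+k})$ is tied, separating user $i$ strictly above raises $i$'s status by $\tfrac{k}{2(n-1)}$ while lowering each of the others' by $\tfrac{1}{2(n-1)}$, so the net change $\tfrac{1}{2(n-1)}\bigl(kR'(q_i)-\sum_{j>i}R'(q_j)\bigr)$ is non-negative. Your approach instead decomposes the total surplus over unordered pairs and observes that, because $c_{ij}+c_{ji}=1$ identically, each pair contributes $W_i$, $W_j$, or $\tfrac{1}{2}(W_i+W_j)$; since the leaderboard attains the pairwise maximum $W_i$ for every pair simultaneously, it is globally optimal. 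This pairwise decomposition is cleaner and directly exposes \emph{why} $\beta=\tfrac{1}{2}$ is special (the tie value is exactly the average of the two separated values, so separation is weakly beneficial regardless of signs), whereas the paper's argument obscures this behind the algebra of the improvement step. The implementation portion of your proof (uniqueness via the Chawla--Hartline anonymous order-based framework and the observation that ties are a measure-zero event in equilibrium) matches the paper's essentially verbatim.
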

\begin{proof}
To argue the first part of the theorem, we argue that for any instantiation of user quantiles, the virtual surplus maximizing allocation is to assign a distinct badge to all users in decreasing order of quantile. Similar to Lemma \ref{lem:optimal-surplus}, it is easy to see that the badge should be monotone non-decreasing in the virtual ability, since if 
$R'(\quant[i])>R'(\quant[j])$ and $r_i<r_j$ then we can increase virtual surplus by swapping the badge of user $i$ and user $j$. Additionally, if for some set of virtual abilities $R'(\quant[i])\geq R'(\quant[i+1])\geq \ldots R'(\quant[i+k])$ we have $r_i=\ldots = r_{i+k}$ then by discriminating user $i$ to a higher badge then we can argue that the virtual surplus will increase: More concretely, for any $j> i$ we can set $r_j'=r_j+1$. The status value of all users $j>i+k$ remains the same, while the satisfaction of
each user $j\in[i+1,i+k]$ will decrease by $\frac{1}{2(n-1)}$. The status value of user $i$, will increase by $\frac{k-1}{2(n-1)}$. Thus the net change in the virtual surplus will be:
\begin{equation*}
R'(q_i)\frac{k-1}{2(n-1)} - \frac{1}{2(n-1)}\sum_{j=i+1}^{k}R'(q_j)\geq 0
\end{equation*}
where the inequality follows since: $R'(\quant[i])\geq R'(\quant[j])$ for all $j\in[i+1,i+k]$. Thus it must be that each user is assigned a distinct badge.

The second part of the theorem is easy to see: The leaderboard mechanism falls into
the anonymous order-based framework of Chawla and Hartline \cite{Hartline2012}, and hence it follows that the mechanism will have a unique equilibrium which is symmetric and monotone. By this
fact, the bidders contributions will be decreasing in quantile and thereby the allocation implemented by the auction at the unique equilibrium is the same as the direct mechanism that
ranks users in decreasing order of quantiles.
\end{proof}

\begin{theorem}
If $\beta=0$, then the virtual surplus maximizing allocation assigns all users with quantile $q\leq \tqstar$ the highest badge and then assigns a distinct badge in decreasing order of quantile (increasing order of ability) to all users with quantile $q>\tqstar$. Such an allocation can be implemented at the unique equilibrium of a badge mechanism that assigns the top badge to
all users that pass a contribution threshold of $\theta=\val(\tqstar)\cdot \tqstar + \int_{\tqstar}^{1} \val(\quant)d\quant$ and then assigns a distinct badge in decreasing
order of contribution, to all users that don't pass the contribution threshold $\theta$. 
\end{theorem}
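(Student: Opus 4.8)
The plan is to follow the two-step template of Theorem~\ref{thm:optimal} and Theorem~\ref{thm:half-ties}: first identify the ex-post virtual-surplus-maximizing badge allocation for an arbitrary quantile profile, and then exhibit a threshold-plus-ranking all-pay mechanism whose unique equilibrium implements it. Throughout I use the equivalence between contribution maximization and virtual-surplus maximization from Section~\ref{sec:optimal}, and the fact that at $\beta=0$ a user's status equals $1-t_g$, the fraction of opponents whose badge is no higher than his own (ties are costless). Fixing quantiles and abbreviating $w_i=R'(q_i)$, the ex-post virtual surplus is
\begin{equation*}
\sum_i w_i\, S_i=\frac{1}{n-1}\sum_{i}\sum_{j\neq i} w_i\,\mathbf{1}[r_j\le r_i]
=\frac{1}{n-1}\sum_{\{a,b\}}\Big(w_a\,\mathbf{1}[r_b\le r_a]+w_b\,\mathbf{1}[r_a\le r_b]\Big),
\end{equation*}
so the objective decomposes additively over unordered pairs.

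For the optimality step I would bound each pair's contribution separately. For a fixed pair $\{a,b\}$ the bracketed term equals $w_a$ if $r_a>r_b$, equals $w_b$ if $r_a<r_b$, and equals $w_a+w_b$ if $r_a=r_b$; it is therefore at most $M(a,b):=\max\{w_a,w_b,w_a+w_b\}$, and the sum of the $M(a,b)$ is an upper bound on the whole objective. I then check that the allocation in the statement attains $M(a,b)$ simultaneously for every pair, which certifies global optimality. Writing ``positive'' for $q_i\le\tqstar$ (i.e. $w_i\ge 0$, since $R'$ is non-increasing) and ``negative'' otherwise: for two positives $M=w_a+w_b$ is attained by a tie, and all positives do share the top badge; for a positive/negative pair $M=w_{\text{pos}}$ is attained by placing the positive strictly above, which holds since positives occupy the top class; for two negatives $M=\max\{w_a,w_b\}$ is attained by giving the less-negative (lower-quantile) user the strictly higher badge, which is precisely the decreasing-in-quantile ranking assigned to the users with $q>\tqstar$. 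Since one feasible (transitive) assignment meets every pairwise bound, it maximizes virtual surplus.

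For implementation I would compute the induced interim status. A top-class user has $t_g=0$ with certainty, so $\hat x(q)=1$ for $q\le\tqstar$; a ranked user with $q>\tqstar$ has strictly above him exactly the opponents of lower quantile, whose expected count is $(n-1)q$, so $\hat x(q)=1-q$ there (the Bernstein average is exact because $S$ is linear in $t_g$). Feeding this monotone $\hat x$ into the payment identity of Lemma~\ref{lem:myerson-bne}, $b(q)=\val(q)\hat x(q)+\int_q^1\hat x(z)\val'(z)\,dz$, and integrating by parts yields the constant $b(q)=\val(\tqstar)\cdot\tqstar+\int_{\tqstar}^1\val(z)\,dz=\theta$ for $q\le\tqstar$, and $b(q)=\int_q^1\val(z)\,dz$ for $q>\tqstar$, which is strictly decreasing with $b(\tqstar^{+})=\int_{\tqstar}^1\val(z)\,dz<\theta$. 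Thus under $b(\cdot)$ the top-$\tqstar$ users all tender exactly $\theta$ and clear the threshold while the others tender strictly smaller, strictly decreasing contributions; under the stated mechanism they receive the top badge and a strictly decreasing ranking respectively, reproducing the target allocation and confirming $\theta$ is the claimed cutoff.

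Finally I would establish equilibrium and uniqueness as in Theorem~\ref{thm:optimal}. Lemma~\ref{lem:myerson-bne} rules out deviations to contributions spanned by $b(\cdot)$; the only unspanned contributions are those above $\theta$ and those in the gap $(\int_{\tqstar}^1\val(z)\,dz,\ \theta)$. A contribution above $\theta$ still yields the top badge and status $1$ at strictly higher cost, so it is dominated by tendering $\theta$; a contribution in the gap ranks just below the top class and hence yields the same interim status $1-\tqstar$ as the type $\tqstar^{+}$ but at strictly higher cost than $b(\tqstar^{+})$, so it too is dominated. Uniqueness then follows because this mechanism is anonymous and order-based in the sense of Chawla--Hartline~\cite{Hartline2012}: when $\beta=0$ the ex-post status of a user is $1$ if $b_i\ge\theta$ and $1-\frac{n_g(\bvec)}{n-1}$ otherwise, depending only on his own contribution and the number of strictly higher contributions (the tie count is irrelevant, as equal opponents carry zero weight), so their uniqueness theorem applies. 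I expect the optimality step to be the main obstacle: the pairwise decomposition is what makes it tractable, and the qualitatively surprising conclusion---that, unlike the $\beta=1$ case, efficiency now demands pooling the \emph{high}-ability users at the top rather than the low-ability users at the bottom---must be read off correctly from the signs in $M(a,b)$. The secondary care point is the unspanned ``gap'' deviation, which does not arise in the standard fully-separating all-pay analysis.
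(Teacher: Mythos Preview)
Your proof is correct and the implementation half tracks the paper closely (payment identity, explicit bid function with the jump at $\tqstar$, checking the two unspanned bid regions, and invoking Chawla--Hartline for uniqueness). The optimality half is where you genuinely diverge: the paper argues by a sequence of local moves---first monotonicity via swaps, then ``all non-negative virtual abilities get the top badge because $S(0)$ is the maximal status and nothing is lost,'' then ``discriminating a positive above a tied negative only decreases the negatives' status,'' then ``splitting any tied block of negatives only decreases their status''---whereas you write the objective as $\frac{1}{n-1}\sum_{\{a,b\}}\bigl(w_a\mathbf{1}[r_b\le r_a]+w_b\mathbf{1}[r_a\le r_b]\bigr)$, bound each pair by $M(a,b)=\max\{w_a,w_b,w_a+w_b\}$, and then observe that the stated allocation hits the bound on every pair simultaneously. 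Your route is shorter and gives a direct certificate of global optimality; the paper's route is more constructive in that each step exhibits a strict improvement from any suboptimal profile. Both hinge on the same structural fact that at $\beta=0$ the status is additive over opponents (only strictly higher opponents count), which is exactly what makes your pairwise decomposition exact.

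One small correction in your uniqueness paragraph: below the threshold the mechanism breaks contribution ties uniformly at random before assigning distinct badges, so the \emph{expected} ex-post status of a sub-threshold user is $1-\tfrac{n_g+n_e/2}{n-1}$ rather than $1-\tfrac{n_g}{n-1}$; the tie count is not literally irrelevant. This does not affect your conclusion, since the allocation still depends only on $(b_i,n_g,n_e)$ and therefore remains an anonymous order-based auction in the sense of \cite{Hartline2012}.
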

\begin{proof}
To argue the first part of the theorem, we argue that for any instantiation of user quantiles, the virtual surplus maximizing allocation is to assign the highest badge to all users with positive virtual ability and then order the remaining users in decreasing order of quantile. Similar to Lemma \ref{lem:optimal-surplus}, it is easy to see that the badge should be monotone non-decreasing in the virtual ability, since if $R'(\quant[i])>R'(\quant[j])$ and $r_i<r_j$ then we can increase virtual surplus by swapping the badge of user $i$ and user $j$. 

First it is easy to see that all users with non-negative virtual ability are assigned in the optimal mechanism, with no loss of generality, to the highest badge: by such an assignment each virtual ability is multiplied by $\stat{0}$, which is the highest possible status value that could be assigned to a user. Thus by not assigning a user with a positive virtual ability the highest badge, we are only multiplying his positive virtual ability by a smaller number and this decrease is not counterbalanced by some increase in another users status value. 
Thus in the optimal allocation, all users with positive virtual ability are assigned the highest badge. 
 
Now consider a set of virtual abilities $R'(\quant[i])\geq 0 > R'(\quant[i+1])\geq \ldots \geq R'(\quant[i+k])$ with $r_i=\ldots = r_{i+k}$ then by discriminating user $i$ to a higher badge then we can argue that the virtual surplus will increase:  More concretely, for any $j> i$ we can set $r_j'=r_j+1$. The status value of user $i$ and any user above $i$ remains unchanged, while the status value of user $j\in [i+1,i+k]$, will decrease by a factor of $\frac{1}{n+1}$. Since those users have negative virtual ability, the virtual surplus will increase. Thus it must be that all negative virtual ability users are assigned strictly lower badge than the positive virtual ability users. 

Next we argue that all negative virtual ability users must be strictly ranked. Suppose that for some set of virtual abilities $0 > R'(\quant[i])\geq \ldots \geq R'(\quant[i+k])$, we assign $r_i=\ldots = r_{i+k}$. Then by assigning $r_j=r_j+1$, to all users $j>i$, then the status value of user $i$ will remain unchanged, while the status value of all $j>i$, will 
decrease by $\frac{1}{n-1}$. Since all those users have negative virtual ability, the virtual surplus will increase. This completes the first part of the theorem.

To show the second part of the theorem we first argue that the following mechanism has a unique equilibrium: solicit contributions from the users. If a user's contribution surpasses a contribution threshold of $\theta$, then he is assigned the highest badge $n$. All users whose contribution is doesn't pass threshold $\theta$ are assigned a distinct rank in decreasing order of
contribution, starting from rank $n-1$ (breaking ties uniformly at random).

Given a contribution profile $b$, let $w(b)=|j: b_j<b_i|$ and $e(b)=|j: b_j=b_i|$, be the number of bidders that have strictly higher and and equal bid, correspondingly. Observe that the status allocation of a user in the above mechanism as a function of his bid is:
\begin{equation}
x_i(b) = \begin{cases}
1 & \text{if $b_i\geq \theta$}\\
1-\frac{w(b)}{n-1}-\frac{1}{2}\frac{e(b)-1}{n-1} & \text{if $b_i\leq \theta$}
\end{cases}
\end{equation} 
The second case holds, since a user will be uniformly at random ordered among users of equal rank and hence in expectation half of them will be ranked above him. Thus observe that the status allocation of a user is only a function of his bid $b_i$,
and of the relative rank $(w(b),e(b))$ of his bid among other bids. This makes the auction fall exactly into the badge of anonymous order-based auctions studied by Chawla and Hartline \cite{Hartline2012} and therefore, in the i.i.d. ability setting it will have a unique equilibrium, which will be symmetric and monotone.

Thus it suffices to give a specific setting of $\theta$, together with a symmetric bid equilibrium, that will implement the virtual surplus maximizing allocation. Assuming that the mechanism implements the virtual surplus maximizing badge allocation, we know by the equilibrium characterization that the bid of each user, will be:
\begin{equation}
b(\quant)=\begin{cases}
\val(\tqstar)\cdot \tqstar + \int_{\tqstar}^{1} \val(z)dz & \text{if $q\leq \tqstar$}\\
\int_{\quant}^{1} \val(z)dz & \text{if $\quant>\tqstar$}
\end{cases}
\end{equation}
Observe that the equilibrium bid has a discontinuity at $\tqstar$ and specifically, it jumps by $\val(\tqstar)\cdot \tqstar$. This is due to the extra status that a player gains from passing the top contribution threshold. For this bidding equilibrium to actually implement the claimed optimal direct allocation it must then be that:
\begin{equation}
\theta = \val(\tqstar)\cdot \tqstar + \int_{\tqstar}^{1} \val(z)dz 
\end{equation}
Under such a contribution threshold for the top badge, the equilibrium described previously implements the virtual surplus maximizing allocation. Thus the bid function and the optimal interim allocation of status satisfy the conditions of Lemma \ref{lem:myerson-bne}. To conclude that they are actually an equilibrium we simply need to argue that no player wants to bid in the region of bids that are not spanned by $b(\cdot)$, which is the region of bids in between the jump at $\tqstar$. This is trivially true, since for any such bid the player would prefer to bid $\int_{\tqstar}^1 \val(z)dz$. Thus the latter pair of bid and interim allocation are an equilibrium and by uniqueness, the unique equilibrium.
\end{proof}

Last we note, that for other values of $\beta$, the ex-post virtual surplus maximizing allocation of badges depends on the ex-post instance of the quantile profile and thereby doesn't have an ex-ante ranking or contribution threshold interpretation.


\end{document}